\newcommand{\cost}{\ensuremath\textrm{cost}}
\newtheorem{theorem}{Theorem}
\newtheorem{lemma}{Lemma}
\newtheorem{definition}{Definition}
\newtheorem{proposition}[theorem]{Proposition}
\newtheorem{claim}[theorem]{Claim}
\author{Louis Dublois \and Michael Lampis \and Vangelis Th. Paschos}
\title{New Algorithms for Mixed Dominating Set}
\affiliation{
  % one line per affiliation, no postal codes, grant numbers or similar
  Universit{\'e} Paris-Dauphine, PSL University, CNRS, LAMSADE, Paris, France}
\keywords{FPT Algorithms, Exact Algorithms, Mixed Domination}
\begin{document}

\publicationdetails{23}{2021}{1}{10}{6824}

\maketitle
\begin{abstract}
A mixed dominating set is a set of vertices and edges that dominates all
vertices and edges of a graph. We study the complexity of exact and
parameterized algorithms for \textsc{Mixed Dominating Set}, resolving some open questions.  In
particular, we settle the problem's complexity parameterized by treewidth and
pathwidth by giving an algorithm running in time $O^*(5^{tw})$ (improving the
current best $O^*(6^{tw})$), and a lower bound showing that our algorithm
cannot be improved under the Strong Exponential Time Hypothesis (SETH), even if parameterized by pathwidth
(improving a lower bound of $O^*((2-\varepsilon)^{pw})$). Furthermore, by using a simple but so far overlooked observation on the
structure of minimal solutions, we obtain branching algorithms which improve
the best known FPT algorithm for this problem, from $O^*(4.172^k)$ to
$O^*(3.510^k)$, and the best known exact algorithm, from $O^*(2^n)$ and
exponential space, to $O^*(1.912^n)$ and polynomial space.
\end{abstract}

%%%%%%%%%%

\section{Introduction}

Domination problems in graphs are one of the most well-studied topics in
theoretical computer science. In this paper we study a variant called \textsc{Mixed Dominating Set}: we
are given a graph $G=(V,E)$ and are asked to select $D\subseteq V$ and
$M\subseteq E$ such that $|D\cup M|$ is minimized and the set $D\cup M$
dominates $V\cup E$, where a vertex dominates itself, its neighbors, and its
incident edges and an edge dominates itself, its endpoints, and all edges with
which it shares an endpoint.

The notion of \textsc{Mixed Dominating
Set}  was first introduced in 1977 by \cite{AlaviBLN77}, and has
been studied extensively in graph theory by \cite{AlaviLWZ92}, \cite{ErdosM77}, \cite{Meir78} and \cite{PeledS94a}.
%\cite{AlaviLWZ92, ErdosM77, Meir78,
%PeledS94a}. 
See the chapter of \cite{Haynes0093827} for a survey on the
\textsc{Mixed Dominating Set} problem.  The computational complexity of \textsc{Mixed Dominating Set}
was first studied in 1993 by \cite{majumdar1993neighborhood}, where he
showed that the problem is NP-complete. The problem remains NP-complete on
split graphs by a result of \cite{ZhaoKS11} 
and on planar bipartite graphs of maximum degree 4 by a result of \cite{Manlove99}.
\cite{majumdar1993neighborhood},  \cite{LanC13}, 
\cite{Rajaatiabs-1708-00240} and \cite{MadathilPS019} showed
that the problem is polynomial-time solvable on trees, cacti, generalized
series-parallel graphs and proper interval graphs, respectively. 

\textsc{Mixed Dominating Set} is a natural variation of domination in graphs as it can be seen as a
\emph{mix} between four standard problems: \textsc{Dominating Set}, where
vertices dominate vertices; \textsc{Edge Dominating Set}, where edges dominate
edges; \textsc{Vertex Cover}, where vertices dominate edges; and \textsc{Edge
Cover}, where edges dominate vertices. In \textsc{Mixed Dominating Set} we are asked to select vertices
\emph{and} edges in a way that dominates all vertices \emph{and} edges. 
As only
the last of these four problems is in P, it is not surprising that \textsc{Mixed Dominating Set} is
NP-hard. We are therefore motivated to study approximation, exponential-time
and parameterized algorithms for this problem, and indeed this has been the
topic of several recent papers.  On the approximation algorithms side, the
problem is well-understood: \cite{Hatami07} gave a $2$-approximation
algorithm, while more recently  \cite{DudyczLM19} showed that, under the Unique Games Conjecture (see \cite{Khot02a}), no algorithm can achieve a ratio better than $2$ for 
%(under the UGC) no algorithm can achieve a ratio better than $2$ for
\textsc{Edge Dominating Set}. As we explain (Proposition \ref{lem:eds}) this hardness
result easily carries over to \textsc{Mixed Dominating Set}, thus essentially settling the problem's
approximability.  Hence, in this paper we focus on parameterized and exact
algorithms.

\textsc{Mixed Dominating Set} has recently been the focus of several works in this context. With
respect to the natural parameter (the size $k$ of the solution),  an
$O^*(7.465^k)$\footnote{$O^*$ notation suppresses polynomial factors in the input size.}
algorithm was given by \cite{JainJPS17}, more recently improved to
$O^*(4.172^k)$ by \cite{XiaoS19}.  With respect to treewidth and
pathwidth, \cite{JainJPS17} gave algorithms running in $O^*(6^{tw})$ time and
$O^*(5^{pw})$ time, improving upon the $O^*(3^{tw^2})$ time algorithm of
\cite{RajaatiHDS18}. Furthermore, Jain et al. showed that no algorithm can
solve the problem in $O^*((2-\varepsilon)^{pw})$ time under the Set Cover
Conjecture (see \cite{CyganDLMNOPSW12} for more details about the Set Cover Conjecture). These works observed that it is safe to assume that the optimal
solution has a specific structure: the selected edges form a matching whose
endpoints are disjoint from the set of selected vertices.  This observation
immediately gives an $O^*(3^n)$ algorithm for the problem, which was recently
improved to $O^*(2^n)$ by \cite{MadathilPS019} by using a
dynamic programming approach, which requires $O^*(2^n)$ space.

\subsubsection*{Our results:} The state of the art summarized above motivates
two basic questions: first, can the gap in the complexity of the problem for
treewidth and pathwidth  and  the gap between the lower and upper bound for
these parameters be closed, as explicitly asked by \cite{JainJPS17}; second,
can we solve this problem faster than the natural $O^*(2^n)$ barrier?  We
answer these questions and along the way obtain an improved FPT algorithm for
parameter $k$. Specifically we show:

(i) \textsc{Mixed Dominating Set} can be solved in $O^*(5^{tw})$ time. Somewhat surprisingly, this result
is obtained by combining observations that exist in the literature: the
equivalence of \textsc{Mixed Dominating Set} to \textsc{Distance}-$2$-\textsc{Dominating Set} by
\cite{MadathilPS019}; and the algorithm of \cite{BorradaileL16} for this problem.

(ii) \textsc{Mixed Dominating Set} cannot be solved in time $O^*((5-\varepsilon)^{pw})$, under the
SETH. This is our main result on this front, and shows that our algorithm for
treewidth and the algorithm of \cite{JainJPS17} for pathwidth are optimal.

(iii) \textsc{Mixed Dominating Set} can be solved in time $O^*(1.912^n)$ and
$O^*(3.510^k)$, in both cases using polynomial space. In order to obtain these
algorithms we refine the notion of \emph{nice mixed dominating set} which was
used in \cite{JainJPS17}. In particular, we show that there always exists an
optimal mixed dominating set such that any selected vertex has at least two
\emph{private} neighbors, that is, two neighbors which are only dominated by
this vertex.  This allows us to speed the branching on low-degree vertices.

%(iii) \textsc{Mixed Dominating Set} can be solved in time $O^*(1.912^n)$ and  $O^*(3.510^k)$, in both cases using polynomial space. In order to obtain these algorithms we refine the notion of \emph{nice mixed dominating set} which was used in previous algorithms. In particular, we show that a mixed dominating set with the minimum number of vertices has the property that any selected vertex has at least two \emph{private} neighbors. This allows us to speed up branching on low-degree vertices. 

%\subsubsection*{Other related work:} 
% The notion of \textsc{Mixed Dominating
% Set}  was first introduced in 1977 by \cite{AlaviBLN77}, and has
% been studied extensively in graph theory by \cite{AlaviLWZ92}, \cite{ErdosM77}, \cite{Meir78} and \cite{PeledS94a}.
% %\cite{AlaviLWZ92, ErdosM77, Meir78,
% %PeledS94a}. 
% See the chapter of \cite{Haynes0093827} for a survey on the
% \textsc{Mixed Dominating Set} problem.  The computational complexity of \textsc{Mixed Dominating Set}
% was first studied in 1993 by \cite{majumdar1993neighborhood}, where he
% showed that the problem is NP-complete. The problem remains NP-complete on
% split graphs by a result of \cite{ZhaoKS11} 
% and on planar bipartite graphs of maximum degree 4 by a result of \cite{Manlove99}.
% \cite{majumdar1993neighborhood},  \cite{LanC13}, 
% \cite{Rajaatiabs-1708-00240} and \cite{MadathilPS019} showed
% that the problem is polynomial-time solvable on trees, cacti, generalized
% series-parallel graphs and proper interval graphs, respectively.  

%%%%%%%%%%%
%%%%%%%%%%%
%%%%%%%%%%%
%%%%%%%%%%%
\section{Preliminaries}

We assume familiarity with the basics of parameterized complexity (e.g.
treewidth, pathwidth, and the SETH), as given by \cite{CyganFKLMPPS15}.  Let $G
= (V,E)$ be a graph with $|V| = n$ vertices and $|E| = m$ edges. For $u \in V$,
$N(u)$ denotes the set of neighbors of $u$, $d(u) = |N(u)|$ and $N[u] = N(u)
\cup \{ u \}$.  For $U \subseteq V$ and $u \in V$, we note $N_U(u) = N(u) \cap
U$ and use $d_U(u)$ to denote $|N_U(u)|$. Furthermore, for $U\subseteq V$ we
denote $N(U)=\cup_{u\in U} N(u)$. For an edge set $E'$, we use $V(E')$ to
denote the set of endpoints of $E'$. For $V' \subseteq V$, we use $G[V']$ to
denote the subgraph of $G$ induced by $V'$. An edge cover of a graph $G=(V,E)$
is a set of edges $M\subseteq E$ such that $V(M)=V$. Recall that an edge cover
$M$ of a graph $G=(V,E)$ must have size $|M|\ge \frac{V}{2}$, since each edge
can cover at most two vertices. A \emph{mixed dominating set} of a graph
$G=(V,E)$ is a set of vertices $D\subseteq V$ and edges $M\subseteq E$ such
that (i) all vertices of $V\setminus (D\cup V(M))$ have a neighbor in $D$ (ii)
all edges of $E\setminus M$ have an endpoint in $D\cup V(M)$.

We note that the minimization problem \textsc{Mixed Dominating Set} is harder than the more well-studied \textsc{Edge Dominating Set} (EDS) problem, by a reduction that preserves most parameters from an FPT viewpoint and the size of the optimal solution. 
%by a reduction that preserves most parameters and the size of the optimal solution. 
% We note that the minimization problem \textsc{Mixed Dominating Set} is harder than the more
% well-studied \textsc{Edge Dominating Set} (EDS) problem, in a way that preserves most
% parameters and the size of the optimal solution.  
Hence, essentially all hardness
results for the latter problem, such as its inapproximability obtained by \cite{DudyczLM19}
or its W[1]-hardness for clique-width from \cite{FominGLS10}, carry over to \textsc{Mixed Dominating Set}.

\begin{proposition}\label{lem:eds}
There is an approximation and
parameter-preserving reduction from \textsc{Edge Dominating Set} to \textsc{Mixed Dominating Set}.
\end{proposition}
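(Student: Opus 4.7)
The plan is to reduce a given instance $G=(V,E)$ of \textsc{Edge Dominating Set} to the instance $G'$ of \textsc{Mixed Dominating Set} obtained by attaching a pendant path of length two to every vertex of $G$: for each $v \in V$, introduce two new vertices $p_v, q_v$ and two new edges $vp_v, p_vq_v$. The goal is the exact equation $\mathrm{opt}_{\text{MDS}}(G') = \mathrm{opt}_{\text{EDS}}(G) + |V|$, from which the claimed approximation and parameter preservation properties follow.

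For the easy direction, given an EDS $M^*$ of $G$ of size $k$, I take $D' = \{p_v : v \in V\}$ and $M' = M^*$, of total size $k + |V|$. The pendants $p_v, q_v$ and the pendant edges $vp_v, p_vq_v$ are dominated because $p_v \in D'$; each original vertex $v$ is dominated as a neighbor of $p_v$; and every edge $e \in E \setminus M^*$ shares an endpoint with $V(M^*)$, which is a vertex cover of $G$ because $M^*$ is an EDS.

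The reverse direction is the main work. Given an MDS $(D', M')$ of $G'$ of size $s$, I extract $F = (M' \cap E) \cup \{f_v : v \in D' \cap V\} \cup \{g_v : vp_v \in M'\}$, where $f_v$ and $g_v$ are arbitrary $G$-edges incident to $v$ (isolated vertices of $G$ can be discarded since they impose no EDS constraint). Two observations drive the size bound. First, since $q_v$'s unique neighbor in $G'$ is $p_v$, dominating $q_v$ forces at least one of $q_v \in D'$, $p_v \in D'$, or $p_vq_v \in M'$, all of which lie in the gadget at $v$; hence each gadget contributes at least $1$ to $|D'|+|M'|$. Second, if $vp_v \in M'$, this element alone fails to dominate $q_v$, raising the gadget contribution at $v$ to at least $2$. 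Summing yields $s \geq |D'\cap V| + |M'\cap E| + |V| + |\{v : vp_v \in M'\}| \geq |F| + |V|$. A short case analysis on how an arbitrary edge $uw \in E$ is dominated in $G'$---via $M' \cap E$, via an endpoint in $D' \cap V$, via an endpoint in $V(M' \cap E)$, or via a pendant edge $vp_v \in M'$ with $v \in \{u,w\}$---confirms that $F$ is an EDS of $G$.

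Combining the two directions gives the target equation. Treewidth, pathwidth, and clique-width of $G'$ exceed those of $G$ by at most a constant, since pendant paths are local attachments. For approximation, the hardness instances of \cite{DudyczLM19} satisfy $\mathrm{opt}_{\text{EDS}} = \Omega(|V|)$, so the additive term $|V|$ does not weaken the multiplicative $(2-\varepsilon)$ factor, and the UGC lower bound for EDS transfers to MDS. The main technical difficulty I foresee is the ``hidden'' case in the backward direction where a pendant edge $vp_v \in M'$ is used to dominate an original edge of $G$ via $v \in V(M')$; the doubled gadget bound for such $v$ is precisely what pays for adding $g_v$ to $F$, keeping the final size accounting balanced.
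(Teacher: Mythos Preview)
Your construction and the equation $\mathrm{opt}_{\text{MDS}}(G') = \mathrm{opt}_{\text{EDS}}(G) + |V|$ are correct, and the parameter-preservation claim is fine. However, the approximation-preservation step has a real gap. If \textsc{Mixed Dominating Set} admitted a $(2-\varepsilon)$-approximation, applying it to $G'$ and extracting an edge dominating set via your backward map yields a set of size at most $(2-\varepsilon)(\mathrm{opt}_{\text{EDS}}(G)+|V|)-|V| = (2-\varepsilon)\,\mathrm{opt}_{\text{EDS}}(G) + (1-\varepsilon)|V|$. The resulting EDS ratio is therefore $(2-\varepsilon) + (1-\varepsilon)\,|V|/\mathrm{opt}_{\text{EDS}}(G)$. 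Since $\mathrm{opt}_{\text{EDS}}(G)\le |V|/2$ always, the additive term is at least $2(1-\varepsilon)$, so the ratio you obtain for EDS is at least $4-3\varepsilon$, which is \emph{worse} than~$2$ for every $\varepsilon<2/3$. Thus ``$\mathrm{opt}_{\text{EDS}}=\Omega(|V|)$'' is not enough to carry the $(2-\varepsilon)$ lower bound across; the additive $|V|$ genuinely destroys the multiplicative factor.

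The paper's reduction sidesteps this by adding a single universal vertex $u$ together with $|V|+2$ leaves attached to~$u$, obtaining $\mathrm{opt}_{\text{MDS}}(G')=\mathrm{opt}_{\text{EDS}}(G)+1$. The additive $+1$ is negligible compared to the optimum, so a $(2-\varepsilon)$-approximation for MDS yields a $(2-\varepsilon+o(1))$-approximation for EDS, and the inapproximability transfers cleanly. Your pendant-path gadget is fine for parameter preservation and for exact equivalence, but to get the approximation statement you would need a construction whose additive shift is $o(\mathrm{opt})$; the paper's single-vertex trick is the simplest way to achieve that.
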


\begin{proof} Given an instance $G=(V,E)$ of \textsc{EDS} we seek a set $M$ of
$k$ edges such that all edges have an endpoint in $V(M)$. We add a new vertex
$u$ connected to all of $V$ and attach to $u$ $|V|+2$ leaves.  The new graph
has a mixed dominating set of size $k+1$ if and only if $G$ has an edge
dominating set of size $k$. 
\end{proof}

We now define a restricted notion of mixed dominating set.

\begin{definition}\label{def:nice}
A \emph{nice} mixed dominating set of a graph $G = (V,E)$ is a mixed dominating set $D \cup M$ which satisfies the following: (i) $D \cap V(M) = \emptyset$ (ii) for all $u \in D$ there exists at least two private neighbors of $u$, that is, two vertices $v_1, v_2 \in V \setminus (D \cup V(M))$ with $N(v_1) \cap D = N(v_2) \cap D = \{ u \}$.
\end{definition}

%\begin{definition}\label{def:nice} A \emph{nice} mixed dominating set of a
%graph $G=(V,E)$ is a mixed dominating set $D\cup M$ which satisfies the
%following: (i) $D\cap V(M)=\emptyset$; (ii) $M$ is a matching; (iii) for all
%$u\in D$ there exist at least two private neighbors of $u$, that is, two
%vertices $v_1,v_2\in V\setminus(D\cup V(M))$ with $N(v_1)\cap D =N(v_2)\cap D =
%\{u\}$. \end{definition}

We note that a similar notion of nice mixed dominating set was used in the
algorithms of \cite{JainJPS17}, with the key difference that these algorithms
do not use the fact that every vertex of $D$ must have at least two
\emph{private} neighbors, that is, two neighbors which are dominated only by
this vertex, though these algorithms use the fact that such vertices have at
least one private neighbor.

%We note that a mixed dominating set that satisfies the first two properties of
%Definition \ref{def:nice} was called \emph{special mds} by
%\cite{MadathilPS019}. The notion of nice mds was implicit also in the
%algorithms of \cite{JainJPS17} and \cite{XiaoS19}, with the key difference that these
%algorithms do not use the fact that every vertex of $D$ must have at least two
%\emph{private} neighbors, that is, two neighbors which are dominated only by
%this vertex.

Let us now prove that restricting ourselves to nice solutions does not change
the value of the optimal. The idea of the proof is to reuse the arguments of
\cite{MadathilPS019} to obtain an optimal solution satisfying the first
property; and then while there exists $u \in D$ with at most one private
neighbor, we replace it by an edge while maintaining a valid solution
satisfying the first property.  \footnote{In the conference version of this
paper (IPEC 2020) we used a definition of nice mixed dominating set that
included the additional property that $M$ is a matching, and erroneously
claimed that an optimal solution satisfying this definition always exists. We
are grateful to an anonymous reviewer who pointed out to us that this is not
the case. As a result, we use here a definition of nice mixed dominating set
that is slightly weaker that the one in the conference version and give a
corrected version of Lemma \ref{lemmanice}.  However, the results we obtain
remain the same.}

%Let us now prove that restricting ourselves to nice solutions does not change
%the value of the optimal. The idea behind the proof is to reuse the arguments
%of \cite{MadathilPS019} to obtain an optimal solution satisfying the first two
%properties; and then while there exists  $u \in D$ with at most one private
%neighbor, we replace it by an edge while maintaining a valid solution
%satisfying the first two properties. 

\begin{lemma}\label{lemmanice} For any graph $G = (V,E)$ without isolated
vertices, $G$ has a mixed dominating set $D \cup M$ of size at most $k$ if and
only if $G$ has a nice mixed dominating set $D' \cup M'$ of size at most $k$.
\end{lemma}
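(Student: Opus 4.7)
The plan is to handle the two implications separately. The reverse direction is immediate from the definitions, since every nice mixed dominating set is in particular a mixed dominating set. For the forward direction, I will start with an arbitrary mixed dominating set $(D,M)$ of size at most $k$ and transform it into a nice one of size at most $k$, in two phases as suggested in the hint before the lemma.

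In Phase 1, I invoke the exchange arguments of Madathil et al.\ to obtain, without increasing the size, a mixed dominating set satisfying property (i), i.e., $D \cap V(M) = \emptyset$. In Phase 2, I iteratively eliminate vertices of $D$ having fewer than two private neighbors. While such a vertex $u \in D$ exists, I apply one of three local modifications, each of which strictly decreases $|D|$, preserves property (i), and does not increase total size. First, if $u$ has a unique private neighbor $v$, I swap $u$ for the edge $uv$, passing to $(D \setminus \{u\}, M \cup \{uv\})$; this is legal because $v \notin D \cup V(M)$ by the definition of private neighbor, so property (i) is preserved and $uv \notin M$. Second, if $u$ has no private neighbor but some neighbor $w \in N(u) \setminus D$, I perform the analogous swap with edge $uw$. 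Third, if $u$ has no private neighbor and $N(u) \subseteq D$, I simply delete $u$ from $D$; since $G$ has no isolated vertices, $u$ has some neighbor in $D \setminus \{u\}$ that keeps $u$ and all incident edges dominated.

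I will then verify validity of each modification case by case. The only neighbors of $u$ at risk of becoming undominated after removing $u$ from $D$ are the private ones, since any non-private neighbor of $u$ lying in $V \setminus (D \cup V(M))$ has, by definition of privacy, another dominator in $D \setminus \{u\}$. The case split handles these: in the first case the private neighbor is absorbed into $V(M')$, while in the remaining cases there is no private neighbor to worry about. Edges incident to $u$ are covered either because $u \in V(M')$ via the freshly added edge (first and second cases), or because their other endpoint lies in $D'$ (third case). Since each modification strictly decreases $|D|$, the procedure terminates in at most $|D|$ steps, and property (ii) holds relative to the final pair. The main obstacle will be this last point: because the private-neighbor status of other vertices of $D$ can be destroyed when we modify $(D,M)$, one cannot argue termination via monotonicity of property (ii); I sidestep this by using $|D|$ itself as the strict potential and reading off (ii) only from the final solution.
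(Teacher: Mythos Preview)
Your proposal is correct and follows essentially the same argument as the paper: after invoking Madathil et al.\ for property~(i), both you and the paper handle a vertex $u\in D$ lacking two private neighbors via the same three-way case split (unique private neighbor $v$: replace $u$ by the edge $uv$; no private neighbor but $N(u)\not\subseteq D$: replace $u$ by an edge to some $w\in N(u)\setminus D$; $N(u)\subseteq D$: simply delete $u$), and both terminate via the strict decrease of $|D|$. Your explicit remark that privacy of other vertices may be destroyed en route, and that one therefore reads off property~(ii) only from the final pair, is a nice clarification the paper leaves implicit.
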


\begin{proof}
On direction is trivial, since any nice mixed dominating set is also by definition a mixed dominating set. For the other direction, we first recall that it was shown by \cite{MadathilPS019} that if a graph has a mixed dominating of size $k$, then it also has such a set that satisfies the first condition of Definition \ref{def:nice}. Suppose then that $D \cup M$ is such that $D \cap V(M) = \emptyset$.

We will now edit this solution so that we obtain the missing desired property, namely the fact that all vertices of $D$ have two private neighbors. Our transformations will be applicable as long as there exists a vertex $u \in D$ without two private neighbors, and will either decrease the size of the solution, or decrease the size of $D$, while maintaining a valid solution satisfying the first property of Definition \ref{def:nice}. As a result, applying these transformations at most $n$ times yields a nice mixed dominating set.

Let $I = V \setminus (D \cup V(M))$. If there exists $u \in D$ with exactly one
private neighbor, let $v\in I$ be this private neighbor. We set $D' = D
\setminus \{ u \}$ and $M' = M \cup \{ (u,v) \}$ to obtain another solution.
This solution is valid because $N(u) \setminus \{ v \}$ is dominated by $(D
\cup M) \setminus \{ u \}$, otherwise $u$ would have more that one private
neighbor.

Let us now consider $u\in D$ such that $u$ has no private neighbor. If $
N(u)\subseteq D$, then we can simply remove $u$ from the solution and obtain a
beter solution (recall that $u$ is not an isolated vertex). Otherwise, let
$v\in N(u)\setminus D$. We set $D' = D\setminus \{u\}$ and $M' = M\cup
\{(u,v)\}$ to obtain another feasible solution with fewer vertices, while still
satisfying the first property.  We repeat these modifications until we obtain
the claimed solution.
%Let us now consider a vertex $u \in D$ with no private neighbor. Note that for
%such a vertex $u$, its neighborhood (which is non-empty, since $G$ has no
%isolated vertices) is dominated by $(D \cup M) \setminus \{ u \}$, because
%otherwise $u$ would have at least one private neighbor. If there exists $v \in
%N(u) \cap I$, set $D' = D \setminus \{ u \}$ and $M' = M \cup \{ (u,v) \}$ to
%obtain another feasible solution. 
%
%Now consider a vertex $u \in D$ with no private neighbor for which $N(u) \cap I = \emptyset$. We have $N(u) \subseteq D \cup V(M)$, which implies that the neighborhood of $u$ and all the edges incident on $u$ are dominated by $(D \cup M) \setminus \{ u \}$. If there exists $v \in N(u) \cap D$, remove $u$ from the solution to get a better solution. Now consider a vertex $u \in D$ with no private neighbor for which $N(u) \subseteq V(M)$. If there exists $v \in N(u)$ with $(v,w) \in M$ such that there exists $z \in N(w) \cap I$, set $D' = D \setminus \{ u \}$ and $M' = (M \setminus \{ (v,w) \}) \cup \{ (u,v), (w,z) \}$ to obtain another feasible solution. If for all $v \in N(u)$ with $(v,w) \in M$, we have $N(w) \subseteq D \cup V(M)$, pick such a vertex $v$ and set $D' = D \setminus \{ u \}$ and $M' = M \cup \{ (u,v) \}$ to get another feasible solution (see Figure \ref{fig:lem1}). We repeat these modifications until we obtain the claimed solution.
\end{proof}

In the remainder, when considering a nice mixed dominating set $D \cup M$ of a
graph $G = (V,E)$, we will associate with it the partition $V = D \cup P \cup
I$ where $P = V(M)$ and $I = V \setminus (D \cup P)$. We will call this a nice
mds partition. We have the following properties: (i) $M$ is an edge cover of
$G[P]$ since $P = V(M)$ and $M$ is a set of edges (ii) $I$ is an independent
set because if there were two adjacent vertices in $I$ then the edge between
them would not be dominated (iii) $D$ dominates $I$ because if there was a
vertex in $I$ not dominated by $D$ it would not be dominated at all (iv) each
$u \in D$ has two private neighbors $v_1, v_2 \in I$, that is $N(v_1) \cap D =
N(v_2) \cap D = \{ u \}$. 

%It is not hard to see that the following properties follow from the definition: (i) $M$ is an edge cover of $G[P]$ ; (ii) $I$ is an independent set ; (iii) $D$ dominates $I$ ; (iv) each $u \in D$ has two private neighbors $v_1, v_2 \in I$, that is $N(v_1) \cap D = N(v_2) \cap D = \{ u \}$.

% In the remainder, when considering a mixed dominating set  $D\cup M$ of a graph
% $G=(V,E)$, we will associate with it the partition $V=D\cup P\cup I$ where
% $P=V(M)$ and $I=V\setminus(D\cup P)$. We will call this a nice mds partition.
% It is not hard to see that the following properties follow from the definition:
% (i) $G[P]$ has a perfect matching (ii) $I$ is an independent set (iii) $D$
% dominates $I$ (iv) each $u\in D$ has two private neighbors $v_1,v_2\in I$, that
% is, $N(v_1)\cap D=N(v_2)\cap D=\{u\}$.

We also note the following useful relation. 

\begin{lemma}\label{lemmavertexcover}
For any graph $G = (V,E)$ and any nice mds partition $V = D \cup P \cup I$ of $G$, there exists a minimal vertex cover $C$ of $G$ such that $D \subseteq C \subseteq D \cup P$. 
\end{lemma}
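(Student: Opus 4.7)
The plan is to start from the candidate set $D\cup P$, show it is a vertex cover, and then shrink it greedily from within $P$ to obtain a minimal vertex cover that still contains $D$.

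First I would verify that $D\cup P$ is a vertex cover of $G$. Take any edge $(u,v)\in E$. Since $I$ is independent in a nice mds partition, at least one of $u,v$ lies in $D\cup P$, so the edge is covered. Hence $C_0:=D\cup P$ is a vertex cover.

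Next, I would greedily prune $P$. Order $P=\{p_1,\dots,p_t\}$ arbitrarily, and for $i=1,\dots,t$ set $C_i:=C_{i-1}\setminus\{p_i\}$ if this remains a vertex cover, otherwise $C_i:=C_{i-1}$. Set $C:=C_t$. By construction, $D\subseteq C\subseteq D\cup P$, and $C$ is a vertex cover. Note that we never even consider removing a vertex of $D$, so the containment $D\subseteq C$ is automatic.

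Finally, I would argue minimality of $C$ vertex by vertex. For $u\in D$: since $D\cup M$ is \emph{nice}, $u$ has two private neighbors $v_1,v_2\in I$; both edges $(u,v_1),(u,v_2)$ have their non-$u$ endpoints in $I$, which is disjoint from $D\cup P\supseteq C$, so removing $u$ from $C$ uncovers $(u,v_1)$, and $u$ is indispensable (in fact, one private neighbor suffices here). For $p=p_i\in C\cap P$: the fact that we kept $p_i$ at step $i$ means $C_{i-1}\setminus\{p_i\}$ was not a vertex cover, so there is an edge $(p_i,w)$ with $w\notin C_{i-1}\setminus\{p_i\}$; ignoring the (irrelevant) case $w=p_i$ we get $w\notin C_{i-1}$, and since the greedy process only deletes vertices we have $C\subseteq C_{i-1}$, so $w\notin C$. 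Therefore $C\setminus\{p_i\}$ leaves edge $(p_i,w)$ uncovered, and $p_i$ is indispensable. Thus $C$ is a minimal vertex cover with $D\subseteq C\subseteq D\cup P$, as required.

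The only subtle step is the last one, where one must correctly invoke the two-private-neighbors property of Definition \ref{def:nice} to guarantee that vertices of $D$ cannot be removed even though the greedy phase never attempted to remove them; the rest is essentially a standard shrinking argument.
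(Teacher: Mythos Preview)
Your proof is correct and follows essentially the same approach as the paper: both start from the observation that $I$ is independent so $D\cup P$ is a vertex cover, then shrink to a minimal one and use the private-neighbor property to show vertices of $D$ cannot be dropped. The paper's version is marginally slicker in that it takes an \emph{arbitrary} minimal vertex cover $C\subseteq D\cup P$ and argues directly that $D\subseteq C$ (so in fact every minimal vertex cover contained in $D\cup P$ works), whereas you construct one explicitly by greedy removal from $P$; but this is only a presentational difference.
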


\begin{proof}
Since $I$ is an independent set of $G$, $D \cup P$ is a vertex cover of $G$ and hence contains some minimal vertex cover. We claim that any such minimal vertex cover $C \subseteq D \cup P$ satisfies $D \subseteq C$. Indeed, for each $u \in D$ there exists two private neighbors $v_1, v_2 \notin D \cup P$. Hence, if $u \notin C$, the edge $(u,v_1)$ is not covered, contradiction. 
\end{proof}

% \begin{proof}
% Since $I$ is an independent set of $G$, $D \cup P$ is a vertex
% cover of $G$ and hence contains some minimal vertex cover. We claim that any
% such minimal vertex cover $C\subseteq D\cup P$ satisfies $D\subseteq C$.
% Indeed, for each $u\in D$ there exist two private neighbors $v_1,v_2\not\in
% D\cup P$.  Hence, if $u\not\in C$, the edge $(u,v_1)$ is not covered,
% contradiction.
% \end{proof}

%%%%%%%%%%%
%%%%%%%%%%%
%%%%%%%%%%%
%%%%%%%%%%%
\section{Treewidth}

We begin with an algorithm for \textsc{Mixed Dominating Set} running in time
$O^*(5^{tw})$. We rely on three ingredients: (i) the fact that \textsc{Mixed
Dominating Set} on $G$ is equivalent to \textsc{Distance-2-Dominating Set} on
the incidence graph of $G$ by a result of \cite{MadathilPS019} (ii) the
standard fact that the incidence graph of $G$ has the same treewidth as $G$
(iii) and an $O^*(5^{tw})$ algorithm (by \cite{BorradaileL16}) for
\textsc{Distance-2-Dominating Set}.

\begin{theorem}\label{theoremtreewidth} There is an $O^*(5^{tw})$-time
algorithm for \textsc{Mixed Dominating Set} in graphs of treewidth $tw$.  \end{theorem}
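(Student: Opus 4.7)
The plan is to simply compose the three ingredients announced right before the theorem statement, so the proof is essentially a chain of reductions rather than a new algorithmic construction. Given an input graph $G=(V,E)$ with a tree decomposition of width $tw$, I would proceed as follows.

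First, I would recall the equivalence of \cite{MadathilPS019}: let $I(G)$ be the incidence graph of $G$, that is, the bipartite graph whose vertex set is $V\cup E$ and in which $v\in V$ and $e\in E$ are adjacent iff $v$ is an endpoint of $e$ in $G$. One checks that a set $S\subseteq V\cup E$ is a mixed dominating set of $G$ of size $k$ if and only if $S$ is a distance-$2$ dominating set of $I(G)$ of size $k$: a vertex of $G$ (resp.\ an edge of $G$) corresponds to a vertex of $I(G)$ on the $V$-side (resp.\ $E$-side), and domination in the mixed sense translates exactly to being at distance at most $2$ in $I(G)$. Hence an optimum for \textsc{Mixed Dominating Set} on $G$ equals an optimum for \textsc{Distance-$2$-Dominating Set} on $I(G)$.

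Second, I would lift the tree decomposition of $G$ to one of $I(G)$ of the same width. Starting from a width-$tw$ decomposition of $G$, for every edge $e=uv\in E$, pick a bag $B$ containing both $u$ and $v$ (such a bag exists by the definition of tree decomposition), add a new leaf bag $B\cup\{e\}$ attached to $B$, and include only $e$ in no other bag. The resulting decomposition is a valid tree decomposition of $I(G)$: each new vertex $e$ is adjacent in $I(G)$ only to $u$ and $v$, which lie in its bag, and the subtree of bags containing $e$ is just the single leaf, hence connected. The width is $\max(tw,2)\le tw+O(1)$; in particular, for $tw\ge 2$ the width of the new decomposition is exactly $tw$, which suffices for the $O^*(5^{tw})$ bound (and the case $tw\le 1$ is polynomial anyway).

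Third, I would invoke the algorithm of \cite{BorradaileL16}, which solves \textsc{Distance-$2$-Dominating Set} in time $O^*(5^{tw'})$ on graphs of treewidth $tw'$, applied to $I(G)$ with the decomposition constructed above. Combined with the equivalence of step one, this yields an algorithm for \textsc{Mixed Dominating Set} on $G$ running in time $O^*(5^{tw})$, as claimed.

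The only potentially delicate point is the treewidth lifting in the second step, since one must be careful that introducing one fresh vertex per edge of $G$ does not blow up the width. Using the leaf-attachment construction above avoids this, and it is the piece one most needs to double-check, since the other two ingredients are quoted verbatim from the literature.
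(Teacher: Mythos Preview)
Your proposal is correct and follows essentially the same approach as the paper: reduce to \textsc{Distance-$2$-Dominating Set} on the incidence graph via \cite{MadathilPS019}, lift the tree decomposition by attaching a fresh leaf bag for each edge, and invoke \cite{BorradaileL16}. One small slip: the new leaf bag should be $\{u,v,e\}$ rather than $B\cup\{e\}$ (otherwise the width could grow by one, not stay $\max(tw,2)$), but since you already note the bound $tw+O(1)$ suffices for $O^*(5^{tw})$, this does not affect the conclusion.
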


\begin{proof}
We are given an instance of \textsc{Mixed Dominating Set} $G=(V,E)$. We first construct
the incidence graph of $G$, which has vertex set $V\cup E$, and has an edge
between $v\in V$ and $e\in E$ if $e$ is incident on $v$ in $G$. We denote this
graph as $I(G)$. In other words, $I(G)$ is obtained by sub-dividing every edge
of $G$ once.

We now note the standard fact that $tw(I(G))\le tw(G)$. Indeed, if $G$ is a
forest, then $I(G)$ is also a forest; while if $tw(G)\ge 2$, then we can take
any tree decomposition of $G$ and for each $e=(u,v)$ we observe that it must
contain a bag with both $u$ and $v$.  We create a bag containing $u,v,e$ and
attach it to the bag containing $u,v$.  Note that this does not increase the
width of the decomposition.  We thus obtain a decomposition of $I(G)$. 

Second, as observed by \cite{MadathilPS019}, every mixed dominating set of $G$
corresponds to a distance-$2$ dominating set of $I(G)$. Recall that a
distance-$2$ dominating set of a graph is a set of vertices $D$ such that all
vertices of $V\setminus D$ are at distance at most $2$ from $D$.

Finally, we use the algorithm of \cite{BorradaileL16} to solve
\textsc{Distance}-$2$-\textsc{Dominating Set} in time $O^*(5^{tw})$ in $I(G)$,
which gives us the optimal mixed dominating set of $G$. 
\end{proof}

The main result of this section is a lower bound matching
\Cref{theoremtreewidth}. We prove that, under SETH, for all $\varepsilon
> 0$, there is no algorithm for \textsc{Mixed Dominating Set} 
 with complexity $O^*((5-\varepsilon)^{pw})$.  The starting point of our
reduction is the problem $q$-CSP-$5$ (see \cite{Lampis20}). In this problem we are
given a \textsc{Constraint Satisfaction} (CSP) instance with $n$ variables and
$m$ constraints. The variables take values in a set of size $5$, say
$\{0,1,2,3,4\}$. 
Each constraint involves at most $q$ variables and is given as a list of acceptable assignments for these variables, where an acceptable assignment is a $q$-tuple of values from the set $\{ 0, 1, 2, 3, 4 \}$ given to each of the $q$ variables. 
%Each constraint involves at most $q$ variables and is given as
%a list of acceptable assignments for these variables. 
The following result was
shown by \cite{Lampis20} to be a natural consequence of the SETH.

\begin{lemma}[Theorem 3.1 by \cite{Lampis20}]\label{lemmacsp} If the SETH is
true, then for all $\varepsilon>0$, there exists a $q$ such that $n$-variable
$q$-CSP-$5$ cannot be solved in time $O^*((5-\varepsilon)^n)$.  \end{lemma}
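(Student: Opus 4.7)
The plan is to reduce $k$-SAT (for a sufficiently large but fixed $k$) to $q$-CSP-$5$ in a way that turns a hypothetical $O^*((5-\varepsilon)^n)$ algorithm for the latter into an $O^*((2-\delta)^N)$ algorithm for the former, for some $\delta=\delta(\varepsilon)>0$, contradicting the SETH. Recall that the SETH guarantees that for every $\delta>0$ one can find a $k$ such that $k$-SAT admits no $O^*((2-\delta)^N)$ algorithm, so it is enough to carry out the reduction starting from $k$-SAT for an appropriately chosen $k$.

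The central idea is a block-encoding of Boolean variables into CSP-$5$ variables. Given a $k$-SAT instance on $N$ variables, I would pick a large integer $\ell$ (to be calibrated against $\varepsilon$), partition the variables into $N/\ell$ blocks of size $\ell$, and encode the $2^\ell$ possible assignments of each block using a tuple of $r := \lceil \ell \log_5 2 \rceil$ CSP-$5$ variables; this is feasible because $5^r \geq 2^\ell$. The resulting CSP instance has $n = Nr/\ell$ variables. Two types of constraint suffice: (i) for each block, a constraint of arity $r$ whose allowed tuples are exactly those encoding a legal block-assignment; (ii) for each SAT clause, a constraint over the at most $kr$ CSP-$5$ variables touched by that clause, listing every tuple of values that decodes to a Boolean assignment satisfying the clause. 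This produces a $q$-CSP-$5$ instance with $q = kr$ that is satisfiable if and only if the original SAT instance is.

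The calibration is where the reduction bites. Running the hypothetical algorithm on this instance takes $O^*((5-\varepsilon)^{Nr/\ell})$ time. Since $r/\ell \to \log_5 2$ as $\ell \to \infty$, and $x \mapsto x^{\log_5 2}$ is strictly increasing on positive reals, we have $(5-\varepsilon)^{\log_5 2} < 5^{\log_5 2} = 2$. Hence for $\ell$ large enough there exists $\delta>0$ with $(5-\varepsilon)^{r/\ell} \leq 2-\delta$, yielding an $O^*((2-\delta)^N)$ algorithm for $k$-SAT; then selecting $k$ through the SETH so as to forbid any such algorithm closes the contradiction.

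The main obstacle I expect is exactly the irrationality of $\log_5 2$: since $r$ must be an integer, $r/\ell$ only approaches $\log_5 2$ asymptotically, so the argument needs $\ell$ sufficiently large relative to $\varepsilon$ to make the strict inequality $(5-\varepsilon)^{r/\ell}<2$ useful. Picking $\ell$ large inflates the arity $q = k \lceil \ell \log_5 2 \rceil$, but this is harmless: the lemma only asks for the \emph{existence} of some constant $q$ depending on $\varepsilon$, so any fixed $q$ produced at the end of the calibration is acceptable.
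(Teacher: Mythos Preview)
The paper does not give its own proof of this lemma; it simply cites it as Theorem~3.1 of Lampis (2020). Your block-encoding reduction from $k$-SAT is correct and is precisely the standard argument behind that result: pack $\ell$ Boolean variables into $r=\lceil \ell\log_5 2\rceil$ alphabet-$5$ variables, translate each clause into a constraint of arity at most $kr$, and calibrate $\ell$ so that $(5-\varepsilon)^{r/\ell}<2$, then invoke the SETH to pick $k$. One small point you left implicit but which is worth stating for completeness: since $q=kr$ is a constant depending only on $\varepsilon$, each constraint's list of satisfying tuples has size at most $5^{q}=O(1)$, so the reduction is genuinely polynomial-time.
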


Note that in the Theorem of \cite{Lampis20} it was shown that for any alphabet size $B$,
$q$-CSP-$B$ cannot be solved in time $O^*\left((B-\varepsilon)^n\right)$ under
the SETH, but for our purposes only the case $B=5$ is relevant, for two reasons: because this
corresponds to the base of our target lower bound; and because in our construction we will represent the $B = 5$ possible values for a variable with a path of five vertices in which there exists exactly five different ways of selecting one vertex and one edge among these five vertices. 
Our plan is therefore to
produce a polynomial-time reduction which, given a $q$-CSP-5 instance with $n$ variables, produces
an equivalent \textsc{Mixed Dominating Set} instance whose pathwidth is at most $n+O(1)$.  Then, the
existence of an algorithm for the latter problem running faster than
$O^*\left((5-\varepsilon)^{pw}\right)$ would give an
$O^*\left((5-\varepsilon)^{n}\right)$ algorithm for $q$-CSP-$5$, contradicting
the SETH. 

Before giving the details of our reduction let us sketch the basic ideas, which
follow the pattern of other SETH-based lower bounds which have appeared in the
literature: see \cite{HanakaKLOS18}, \cite{JaffkeJ17}, \cite{KatsikarelisLP18}, \cite{KatsikarelisLP19} and \cite{LokshtanovMS18}. 
%\cite{HanakaKLOS18,JaffkeJ17,KatsikarelisLP18,KatsikarelisLP19,LokshtanovMS18}.  
The
constructed graph consists of a main selection part of $n$ paths of length
$5m$, divided into $m$ sections.  Each path corresponds to a variable and each
section to a constraint. The idea is that the optimal solution will follow for
each path a basic pattern of selecting one vertex and one edge among the first
five vertices and then repeat this pattern throughout the path (see Figure
\ref{FigureMainPart}).  There are $5$ natural ways to do this, so this
can represent all assignments to the $q$-CSP-5 instance. We will then add
verification gadgets to each section, connected only to the vertices of that
section that represent variables appearing in the corresponding constraint
(thus keeping the pathwidth under control), in order to check that the selected
assignment satisfies the constraint. 

\begin{figure}[h]
\centering
\begin{tikzpicture}
\node[circle] (1) at (-3,5.5){};
\node[circle,draw=black, fill=white, inner sep=0pt,minimum size=6pt] (1_0) at (-2.5,5.5){};
\node[circle,draw=black, fill=white, inner sep=0pt,minimum size=6pt] (1_1) at (-2,5.5){};
\node[circle,draw=black, fill=white, inner sep=0pt,minimum size=6pt] (1_2) at (-1.5,5.5){};
\node[circle,draw=black, fill=white, inner sep=0pt,minimum size=6pt] (1_3) at (-1,5.5){};
\node[circle,draw=black, fill=white, inner sep=0pt,minimum size=6pt] (1_4) at (-0.5,5.5){};
\node[circle,draw=black, fill=white, inner sep=0pt,minimum size=6pt] (1_5) at (0,5.5){};
\node[circle,draw=black, fill=white, inner sep=0pt,minimum size=6pt] (1_6) at (0.5,5.5){};
\node[circle] (11) at (1,5.5){};
\draw (1) -- (1_0) -- (1_1) -- (1_2) -- (1_3) -- (1_4) -- (1_5) -- (1_6) -- (11);
\draw (-3.5,5.5) node{(a)};
\node[circle,fill=black, inner sep=0pt,minimum size=6pt] (1_a) at (-2,5.5){};
\node[circle,fill=black, inner sep=0pt,minimum size=6pt] (1_b) at (0.5,5.5){};
\draw [ultra thick] (1_3) -- (1_4);

\node[circle] (2) at (-3,4.5){};
\node[circle,draw=black, fill=white, inner sep=0pt,minimum size=6pt] (2_0) at (-2.5,4.5){};
\node[circle,draw=black, fill=white, inner sep=0pt,minimum size=6pt] (2_1) at (-2,4.5){};
\node[circle,draw=black, fill=white, inner sep=0pt,minimum size=6pt] (2_2) at (-1.5,4.5){};
\node[circle,draw=black, fill=white, inner sep=0pt,minimum size=6pt] (2_3) at (-1,4.5){};
\node[circle,draw=black, fill=white, inner sep=0pt,minimum size=6pt] (2_4) at (-0.5,4.5){};
\node[circle,draw=black, fill=white, inner sep=0pt,minimum size=6pt] (2_5) at (0,4.5){};
\node[circle,draw=black, fill=white, inner sep=0pt,minimum size=6pt] (2_6) at (0.5,4.5){};
\node[circle] (22) at (1,4.5){};
\draw (2) -- (2_0) -- (2_1) -- (2_2) -- (2_3) -- (2_4) -- (2_5) -- (2_6) -- (22);
\draw (-3.5,4.5) node{(b)};
\node[circle,fill=black, inner sep=0pt,minimum size=6pt] (2_a) at (-1.5,4.5){};
\draw [ultra thick] (2_4) -- (2_5);
\draw [ultra thick] (2) -- (2_0);

\node[circle] (3) at (-3,3.5){};
\node[circle,draw=black, fill=white, inner sep=0pt,minimum size=6pt] (3_0) at (-2.5,3.5){};
\node[circle,draw=black, fill=white, inner sep=0pt,minimum size=6pt] (3_1) at (-2,3.5){};
\node[circle,draw=black, fill=white, inner sep=0pt,minimum size=6pt] (3_2) at (-1.5,3.5){};
\node[circle,draw=black, fill=white, inner sep=0pt,minimum size=6pt] (3_3) at (-1,3.5){};
\node[circle,draw=black, fill=white, inner sep=0pt,minimum size=6pt] (3_4) at (-0.5,3.5){};
\node[circle,draw=black, fill=white, inner sep=0pt,minimum size=6pt] (3_5) at (0,3.5){};
\node[circle,draw=black, fill=white, inner sep=0pt,minimum size=6pt] (3_6) at (0.5,3.5){};
\node[circle] (33) at (1,3.5){};
\draw (3) -- (3_0) -- (3_1) -- (3_2) -- (3_3) -- (3_4) -- (3_5) -- (3_6) -- (33);
\draw (-3.5,3.5) node{(c)};
\node[circle,fill=black, inner sep=0pt,minimum size=6pt] (3_a) at (-1,3.5){};
\draw [ultra thick] (3_5) -- (3_6);
\draw [ultra thick] (3_0) -- (3_1);

\node[circle] (4) at (-3,2.5){};
\node[circle,draw=black, fill=white, inner sep=0pt,minimum size=6pt] (4_0) at (-2.5,2.5){};
\node[circle,draw=black, fill=white, inner sep=0pt,minimum size=6pt] (4_1) at (-2,2.5){};
\node[circle,draw=black, fill=white, inner sep=0pt,minimum size=6pt] (4_2) at (-1.5,2.5){};
\node[circle,draw=black, fill=white, inner sep=0pt,minimum size=6pt] (4_3) at (-1,2.5){};
\node[circle,draw=black, fill=white, inner sep=0pt,minimum size=6pt] (4_4) at (-0.5,2.5){};
\node[circle,draw=black, fill=white, inner sep=0pt,minimum size=6pt] (4_5) at (0,2.5){};
\node[circle,draw=black, fill=white, inner sep=0pt,minimum size=6pt] (4_6) at (0.5,2.5){};
\node[circle] (44) at (1,2.5){};
\draw (4) -- (4_0) -- (4_1) -- (4_2) -- (4_3) -- (4_4) -- (4_5) -- (4_6) -- (44);
\draw (-3.5,2.5) node{(d)};
\node[circle,fill=black, inner sep=0pt,minimum size=6pt] (4_a) at (-0.5,2.5){};
\draw [ultra thick] (4_6) -- (44);
\draw [ultra thick] (4_1) -- (4_2);

\node[circle] (5) at (-3,1.5){};
\node[circle,draw=black, fill=white, inner sep=0pt,minimum size=6pt] (5_0) at (-2.5,1.5){};
\node[circle,draw=black, fill=white, inner sep=0pt,minimum size=6pt] (5_1) at (-2,1.5){};
\node[circle,draw=black, fill=white, inner sep=0pt,minimum size=6pt] (5_2) at (-1.5,1.5){};
\node[circle,draw=black, fill=white, inner sep=0pt,minimum size=6pt] (5_3) at (-1,1.5){};
\node[circle,draw=black, fill=white, inner sep=0pt,minimum size=6pt] (5_4) at (-0.5,1.5){};
\node[circle,draw=black, fill=white, inner sep=0pt,minimum size=6pt] (5_5) at (0,1.5){};
\node[circle,draw=black, fill=white, inner sep=0pt,minimum size=6pt] (5_6) at (0.5,1.5){};
\node[circle] (55) at (1,1.5){};
\draw (5) -- (5_0) -- (5_1) -- (5_2) -- (5_3) -- (5_4) -- (5_5) -- (5_6) -- (55);
\draw (-3.5,1.5) node{(e)};
\node[circle,fill=black, inner sep=0pt,minimum size=6pt] (5_a) at (0,1.5){};
\node[circle,fill=black, inner sep=0pt,minimum size=6pt] (5_b) at (-2.5,1.5){};
\draw [ultra thick] (5_2) -- (5_3);

\draw [dashed] (-2.25, 1) -- (-2.25, 6);
\draw [dashed] (0.25, 1) -- (0.25, 6);

\end{tikzpicture}
\caption{Main part of the construction with the five possible configurations. Filled vertices are in $D$, thick edges are in $M$. }
\label{FigureMainPart}
\end{figure}
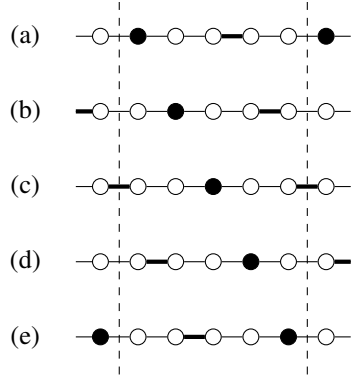

The main difficulty in completing the proof is showing that the optimal
solution has the desired form, and in particular, that the pattern that is
selected for a variable is kept constant throughout the construction.  This is
in general not possible to prove, but using a technique introduced by
\cite{LokshtanovMS18}, we work around this difficulty by making polynomially
many copies of our construction, gluing them together, and arguing that a large
enough consistent copy must exist.

\subsubsection*{Construction} We are given a $q$-CSP-$5$ instance $\varphi$
with $n$ variables $x_1,\ldots,x_n$ taking values over the set $\{0,1,2,3,4\}$,
and $m$ constraints $c_0,\ldots, c_{m-1}$.  For each constraint we are given a
set of at most $q$ variables which are involved in this constraint and a list
of satisfying assignments for these variables. Without loss of generality, we
make the following assumptions: (i) each constraint involves exactly $q$
variables, because if it has fewer variables, we can add to it new variables
and augment the list of satisfying assignments so that the value of the new
variables is irrelevant (ii) all constraints have lists of satisfying
assignments of size $C=5^q-1$; note that this is an  upper bound on the size of
the list of satisfying assignments since if a constraint has $5^q$ different satisfying assignments then it is always satisfied and thus is redundant; and for each constraint which has fewer we
add several copies of one of its satisfying assignments to its list (so the
list may repeat an assignment). We define two ``large'' numbers $F =
(4n+1)(2n+1)$ and $A = 12$ and we set our budget to be $k=8AFmn+
2Fmn+2Fmq(C-1)+n+1$.

We now construct our graph as follows:

\begin{enumerate}

\item We construct a vertex $s$ and attach to it two leaves $s_1,s_2$.

\item For $i\in\{1,\ldots,n\}$ we construct a path on $5Fm$ vertices: the
vertices are labeled $u_{i,j}$, for $j\in\{0,1,\ldots,5Fm-1\}$ and for each
$i,j$ the vertex $u_{i,j}$ is connected to $u_{i,j+1}$. We call these paths the
\emph{main} part of our construction.

\item For each $j\in\{0,1,\ldots,Fm-1\}$, let $j'=j\bmod m$.  We construct a
checker gadget $H_j$ as follows (see Figure \ref{FigureGadgetQijAndCopies}):

	\begin{enumerate}

	\item For each satisfying assignment $\sigma$ in the list of the
constraint $c_{j'}$, we construct an independent set $Z_{\sigma,j}$ of size
$2q$ (therefore, $C$ such independent sets). The $2q$ vertices are partitioned
so that for each of the $q$ variables involved in $c_{j'}$ we reserve two
vertices. In particular, if $x_i$ is involved in $c_{j'}$ we denote by
$z^1_{\sigma,j,i}, z^2_{\sigma,j,i}$ its two reserved vertices in
$Z_{\sigma,j}$.

		\item For each $i\in\{1,\ldots,n\}$ such that $x_i$ is involved
in $c_{j'}$, for each satisfying assignment $\sigma$ in the list of $c_{j'}$,
if $\sigma$ sets $x_i$ to value $\alpha\in\{0,1,2,3,4\}$ we add the following
edges:
		
	\begin{enumerate}

	\item $(u_{i,5j+\alpha}, z^1_{\sigma,j,i})$ and $(u_{i,5j+\alpha},
z^2_{\sigma,j,i})$.

	\item Let $\beta = (\alpha+2) \bmod 5$ and $\gamma = (\alpha+3)\bmod
5$. We add the edges  $(u_{i,5j+\beta}, z^1_{\sigma,j,i})$ and
$(u_{i,5j+\gamma}, z^2_{\sigma,j,i})$.

	\end{enumerate}

	\item For all assignments $\sigma\neq\sigma'$ of $c_{j'}$, add all
edges between $Z_{\sigma,j}$ and $Z_{\sigma',j}$.

	\item We construct an independent set $W_j$ of size $2q(C-1)$.

	\item Add all edges between $W_j$ and $Z_{\sigma,j}$, for all
assignments $\sigma$ of $c_{j'}$.

	\item For each $w\in W_j$, we construct an independent set of size
$2k+1$ whose vertices are all connected to $w$ and to $s$.

	\end{enumerate}

\item We define the consistency gadget $Q_{i,j}$, for $i\in\{1,\ldots,n\}$ and
$j\in\{0,\ldots,Fm-1\}$  which consists of (see Figure \ref{FigureGadgetQijAndCopies}):

	\begin{enumerate}

	\item An independent set of size $8$ denoted $A_{i,j}$.

	\item Five independent sets of size $2$ each, denoted $B_{i,j,0},
B_{i,j,1},\ldots, B_{i,j,4}$.

	\item For each $\ell,\ell' \in \{0,\ldots,4\}$ with $\ell\neq  \ell'$
all edges from $B_{i,j,\ell}$ to $B_{i,j,\ell'}$.

	\item For each $\ell \in \{0,\ldots,4\}$ all possible edges from
$B_{i,j,\ell}$ to $A_{i,j}$.

	\item For each $a\in A_{i,j}$, $2k+1$ vertices connected to $a$ and to
$s$.

	\item For each $\ell\in\{0,\ldots, 4\}$ both vertices of $B_{i,j,\ell}$
are connected to $u_{i,5j+\ell}$.

	\item For each $\ell\in\{0,\ldots, 4\}$ let $\ell'=(\ell+2)\bmod 5$ and
$\ell''=(\ell+3)\bmod 5$. One vertex of $B_{i,j,\ell}$ is connected to
$u_{i,5j+\ell'}$ and the other to $u_{i,5j+\ell''}$.  

	\end{enumerate}

\item For each $i\in\{1,\ldots,n\}$ and $j\in\{0,\ldots,Fm-1\}$ construct $A$
copies of the gadget $Q_{i,j}$ and connect them to the main part as described
above.

\end{enumerate}

This completes the construction. 

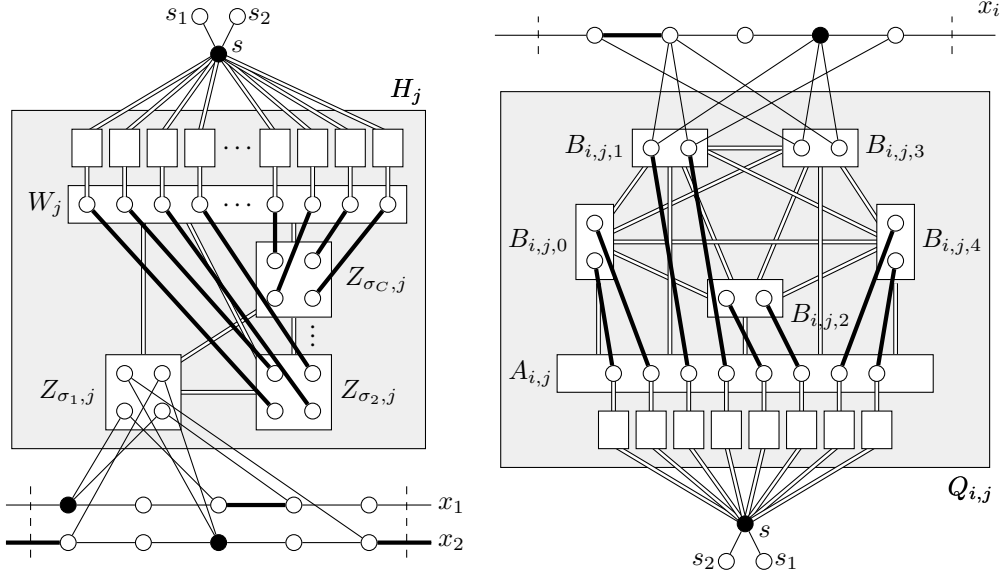
\begin{figure}[h]
\centering
\begin{tikzpicture}

\node[circle,draw=black, fill=white, inner sep=0pt,minimum size=6pt] (4_0) at (4,12.25){};
\node[circle,draw=black, fill=white, inner sep=0pt,minimum size=6pt] (4_1) at (5,12.25){};
\node[circle,draw=black, fill=white, inner sep=0pt,minimum size=6pt] (4_2) at (6,12.25){};
\node[circle,draw=black, fill=black, inner sep=0pt,minimum size=6pt] (4_3) at (7,12.25){};
\node[circle,draw=black, fill=white, inner sep=0pt,minimum size=6pt] (4_4) at (8,12.25){};
\node[circle] (4_0_0) at (2.5,12.25){};
\node[circle] (4_0_1) at (9.5,12.25){};
\draw (4_0_0) -- (4_0) -- (4_1) -- (4_2) -- (4_3) -- (4_4) -- (4_0_1);
\draw (9.25, 12.6) node{$x_i$};

\draw [ultra thick] (4_0) -- (4_1);

\draw [fill = lightgray!25] (2.75,6.5) rectangle (9.25,11.5);
\draw (9,6.2) node{$Q_{i,j}$};
\draw (9,6.2) node{$Q_{i,j}$};

\draw [dashed] (3.25, 12) -- (3.25, 12.5);
\draw [dashed] (8.75, 12) -- (8.75, 12.5);

\draw [double distance = 1pt] (6,8.55) -- (3.95,9.5);
\draw [double distance = 1pt] (6,8.55) -- (7.95,9.5);
\draw [double distance = 1pt] (6,8.55) -- (5,10.95);
\draw [double distance = 1pt] (6,8.55) -- (7,10.95);
\draw [double distance = 1pt] (3.95,9.5) -- (7.95,9.5);
\draw [double distance = 1pt] (3.95,9.5) -- (5,10.95);
\draw [double distance = 1pt] (3.95,9.5) -- (7,10.95);
\draw [double distance = 1pt] (5,10.95) -- (7.95,9.5);
\draw [double distance = 1pt] (5.45,10.75) -- (6.55,10.75);
\draw [double distance = 1pt] (7,10.95) -- (7.95,9.5);

\draw [double distance = 1pt] (6,8.5) -- (6,7.95);
\draw [double distance = 1pt] (4.05,8.95) -- (4.05,7.95);
\draw [double distance = 1pt] (8,8.95) -- (8,7.95);
\draw [double distance = 1pt] (5,10.55) -- (5,7.95);
\draw [double distance = 1pt] (7,10.55) -- (7,7.95);

\draw [fill = white] (5.5,8.5) rectangle (6.5,9);
\node[circle,draw=black, fill=white, inner sep=0pt,minimum size=6pt] (a_3) at (5.75, 8.75){};
\node[circle,draw=black, fill=white, inner sep=0pt,minimum size=6pt] (a_4) at (6.25, 8.75){};
\draw (7, 8.5) node{$B_{i,j,2}$};

\draw [fill = white] (3.75,9) rectangle (4.25,10);
\node[circle,draw=black, fill=white, inner sep=0pt,minimum size=6pt] (b_3) at (4, 9.25){};
\node[circle,draw=black, fill=white, inner sep=0pt,minimum size=6pt] (b_4) at (4, 9.75){};
\draw (3.25, 9.5) node{$B_{i,j,0}$};

\draw [fill = white] (7.75,9) rectangle (8.25,10);
\node[circle,draw=black, fill=white, inner sep=0pt,minimum size=6pt] (c_3) at (8, 9.25){};
\node[circle,draw=black, fill=white, inner sep=0pt,minimum size=6pt] (c_4) at (8, 9.75){};
\draw (8.75, 9.5) node{$B_{i,j,4}$};

\draw [fill = white] (4.5,10.5) rectangle (5.5,11);
\node[circle,draw=black, fill=white, inner sep=0pt,minimum size=6pt] (d_3) at (4.75, 10.75){};
\node[circle,draw=black, fill=white, inner sep=0pt,minimum size=6pt] (d_4) at (5.25, 10.75){};
\draw (4, 10.75) node{$B_{i,j,1}$};

\draw [fill = white] (6.5,10.5) rectangle (7.5,11);
\node[circle,draw=black, fill=white, inner sep=0pt,minimum size=6pt] (e_3) at (6.75, 10.75){};
\node[circle,draw=black, fill=white, inner sep=0pt,minimum size=6pt] (e_4) at (7.25, 10.75){};
\draw (8, 10.75) node{$B_{i,j,3}$};

\draw [fill = white] (3.5,7.5) rectangle (8.5,8);

\draw [double distance = 1pt] (4.25,6.8) -- (6,5.75);
\draw [double distance = 1pt] (4.75,6.8) -- (6,5.75);
\draw [double distance = 1pt] (5.25,6.8) -- (6,5.75);
\draw [double distance = 1pt] (5.75,6.8) -- (6,5.75);
\draw [double distance = 1pt] (6.25,6.8) -- (6,5.75);
\draw [double distance = 1pt] (6.75,6.8) -- (6,5.75);
\draw [double distance = 1pt] (7.25,6.8) -- (6,5.75);
\draw [double distance = 1pt] (7.75,6.8) -- (6,5.75);

\draw [double distance = 1pt] (4.25,7.65) -- (4.25,7.15);
\draw [double distance = 1pt] (4.75,7.65) -- (4.75,7.15);
\draw [double distance = 1pt] (5.25,7.65) -- (5.25,7.15);
\draw [double distance = 1pt] (5.75,7.65) -- (5.75,7.15);
\draw [double distance = 1pt] (6.25,7.65) -- (6.25,7.15);
\draw [double distance = 1pt] (6.75,7.65) -- (6.75,7.15);
\draw [double distance = 1pt] (7.25,7.65) -- (7.25,7.15);
\draw [double distance = 1pt] (7.75,7.65) -- (7.75,7.15);

\node[circle,draw=black, fill=white, inner sep=0pt,minimum size=6pt] (w_1) at (4.25,7.75){};
\node[circle,draw=black, fill=white, inner sep=0pt,minimum size=6pt] (w_2) at (4.75,7.75){};
\node[circle,draw=black, fill=white, inner sep=0pt,minimum size=6pt] (w_3) at (5.25,7.75){};
\node[circle,draw=black, fill=white, inner sep=0pt,minimum size=6pt] (w_4) at (5.75,7.75){};
\node[circle,draw=black, fill=white, inner sep=0pt,minimum size=6pt] (w_5) at (6.25,7.75){};
\node[circle,draw=black, fill=white, inner sep=0pt,minimum size=6pt] (w_6) at (6.75,7.75){};
\node[circle,draw=black, fill=white, inner sep=0pt,minimum size=6pt] (w_7) at (7.25,7.75){};
\node[circle,draw=black, fill=white, inner sep=0pt,minimum size=6pt] (w_8) at (7.75,7.75){};
\draw (3.15, 7.75) node{$A_{i,j}$};

\draw [fill = white] (4.05,7.25) rectangle (4.45,6.75);
\draw [fill = white] (4.55,7.25) rectangle (4.95,6.75);
\draw [fill = white] (5.05,7.25) rectangle (5.45,6.75);
\draw [fill = white] (5.55,7.25) rectangle (5.95,6.75);
\draw [fill = white] (6.05,7.25) rectangle (6.45,6.75);
\draw [fill = white] (6.55,7.25) rectangle (6.95,6.75);
\draw [fill = white] (7.05,7.25) rectangle (7.45,6.75);
\draw [fill = white] (7.55,7.25) rectangle (7.95,6.75);

\draw (4_1) -- (d_3);
\draw (4_3) -- (d_3);
\draw (4_4) -- (d_4);
\draw (4_1) -- (d_4);
\draw (4_3) -- (e_3);
\draw (4_0) -- (e_3);
\draw (4_3) -- (e_4);
\draw (4_1) -- (e_4);

\draw [ultra thick] (b_3) -- (w_1);
\draw [ultra thick] (b_4) -- (w_2);
\draw [ultra thick] (d_3) -- (w_3);
\draw [ultra thick] (d_4) -- (w_4);
\draw [ultra thick] (a_3) -- (w_5);
\draw [ultra thick] (a_4) -- (w_6);
\draw [ultra thick] (c_4) -- (w_7);
\draw [ultra thick] (c_3) -- (w_8);

\node[circle,draw=black, fill=black, inner sep=0pt,minimum size=6pt] (s) at (6,5.75){};
\draw (6.26, 5.65) node{$s$};
\node[circle,draw=black, fill=white, inner sep=0pt,minimum size=6pt] (s_1) at (6.25,5.25){};
\draw (6.55, 5.25) node{$s_1$};
\node[circle,draw=black, fill=white, inner sep=0pt,minimum size=6pt] (s_2) at (5.75,5.25){};
\draw (5.45, 5.25) node{$s_2$};
\draw (s) -- (s_1);
\draw (s) -- (s_2);

\node[circle,draw=black, fill=black, inner sep=0pt,minimum size=6pt] (4_0) at (-3,6){};
\node[circle,draw=black, fill=white, inner sep=0pt,minimum size=6pt] (4_1) at (-2,6){};
\node[circle,draw=black, fill=white, inner sep=0pt,minimum size=6pt] (4_2) at (-1,6){};
\node[circle,draw=black, fill=white, inner sep=0pt,minimum size=6pt] (4_3) at (0,6){};
\node[circle,draw=black, fill=white, inner sep=0pt,minimum size=6pt] (4_4) at (1,6){};
\node[circle, minimum size=0pt] (4_0_0) at (-4,6){};
\node[circle, minimum size=0pt] (4_0_1) at (2,6){};
\draw (4_0_0) -- (4_0) -- (4_1) -- (4_2) -- (4_3) -- (4_4) -- (4_0_1);

\node[circle,draw=black, fill=white, inner sep=0pt,minimum size=6pt] (5_0) at (-3,5.5){};
\node[circle,draw=black, fill=white, inner sep=0pt,minimum size=6pt] (5_1) at (-2,5.5){};
\node[circle,draw=black, fill=black, inner sep=0pt,minimum size=6pt] (5_2) at (-1,5.5){};
\node[circle,draw=black, fill=white, inner sep=0pt,minimum size=6pt] (5_3) at (0,5.5){};
\node[circle,draw=black, fill=white, inner sep=0pt,minimum size=6pt] (5_4) at (1,5.5){};
\node[circle, minimum size=0pt] (5_0_0) at (-4,5.5){};
\node[circle, minimum size=0pt] (5_0_1) at (2,5.5){};
\draw (5_0_0) -- (5_0) -- (5_1) -- (5_2) -- (5_3) -- (5_4) -- (5_0_1);

\draw [fill = lightgray!25] (1.75,6.75) rectangle (-3.75,11.25);
\draw (1.5,11.5) node{$H_{j}$};
\draw (1.5,11.5) node{$H_{j}$};

\draw [double distance = 1pt] (-1.7,7.8) -- (-0.3,8.7);
\draw [double distance = 1pt] (-1.5,7.5) -- (-0.5,7.5);
\draw [double distance = 1pt] (0,8) -- (0,8.5);

\draw [double distance = 1pt] (-2,7.8) -- (-2,9.95);
\draw [double distance = 1pt] (-0.4,7.8) -- (-1.5,9.95);
\draw [double distance = 1pt] (0,9.3) -- (0,9.95);

\draw [fill = white] (-2.5,7) rectangle (-1.5,8);
\node[circle,draw=black, fill=white, inner sep=0pt,minimum size=6pt] (a_1) at (-2.25, 7.25){};
\node[circle,draw=black, fill=white, inner sep=0pt,minimum size=6pt] (a_2) at (-1.75, 7.25){};
\node[circle,draw=black, fill=white, inner sep=0pt,minimum size=6pt] (a_3) at (-2.25, 7.75){};
\node[circle,draw=black, fill=white, inner sep=0pt,minimum size=6pt] (a_4) at (-1.75, 7.75){};
\draw (-3, 7.5) node{$Z_{\sigma_1,j}$};

\draw [fill = white] (-0.5,7) rectangle (0.5,8);
\node[circle,draw=black, fill=white, inner sep=0pt,minimum size=6pt] (b_1) at (-0.25, 7.25){};
\node[circle,draw=black, fill=white, inner sep=0pt,minimum size=6pt] (b_2) at (0.25, 7.25){};
\node[circle,draw=black, fill=white, inner sep=0pt,minimum size=6pt] (b_3) at (-0.25, 7.75){};
\node[circle,draw=black, fill=white, inner sep=0pt,minimum size=6pt] (b_4) at (0.25, 7.75){};
\draw (1, 7.5) node{$Z_{\sigma_2,j}$};

\draw (4_0) -- (a_1);
\draw (4_2) -- (a_1);
\draw (4_0) -- (a_2);
\draw (4_3) -- (a_2);
\draw (5_2) -- (a_3);
\draw (5_4) -- (a_3);
\draw (5_2) -- (a_4);
\draw (5_0) -- (a_4);

\draw (0.25, 8.125) node{.};
\draw (0.25, 8.25) node{.};
\draw (0.25, 8.375) node{.};

\draw [fill = white] (-0.5,8.5) rectangle (0.5,9.5);
\node[circle,draw=black, fill=white, inner sep=0pt,minimum size=6pt] (c_1) at (-0.25, 8.75){};
\node[circle,draw=black, fill=white, inner sep=0pt,minimum size=6pt] (c_2) at (0.25, 8.75){};
\node[circle,draw=black, fill=white, inner sep=0pt,minimum size=6pt] (c_3) at (-0.25, 9.25){};
\node[circle,draw=black, fill=white, inner sep=0pt,minimum size=6pt] (c_4) at (0.25, 9.25){};
\draw (1.1, 9) node{$Z_{\sigma_{C},j}$};

\draw [fill = white] (-3,9.75) rectangle (1.5,10.25);

\node[circle,draw=black, fill=white, inner sep=0pt,minimum size=6pt] (w_1) at (-2.75,10){};
\node[circle,draw=black, fill=white, inner sep=0pt,minimum size=6pt] (w_2) at (-2.25,10){};
\node[circle,draw=black, fill=white, inner sep=0pt,minimum size=6pt] (w_3) at (-1.75,10){};
\node[circle,draw=black, fill=white, inner sep=0pt,minimum size=6pt] (w_4) at (-1.25,10){};
\draw (-0.9, 10) node{.};
\draw (-0.75, 10) node{.};
\draw (-0.6, 10) node{.};
\node[circle,draw=black, fill=white, inner sep=0pt,minimum size=6pt] (w_6) at (-0.25,10){};
\node[circle,draw=black, fill=white, inner sep=0pt,minimum size=6pt] (w_7) at (0.25,10){};
\node[circle,draw=black, fill=white, inner sep=0pt,minimum size=6pt] (w_8) at (0.75,10){};
\node[circle,draw=black, fill=white, inner sep=0pt,minimum size=6pt] (w_9) at (1.25,10){};
\draw (-3.3, 10) node{$W_j$};

\draw [ultra thick] (b_1) -- (w_1);
\draw [ultra thick] (b_3) -- (w_2);
\draw [ultra thick] (b_2) -- (w_3);
\draw [ultra thick] (b_4) -- (w_4);
\draw [ultra thick] (c_3) -- (w_6);
\draw [ultra thick] (c_1) -- (w_7);
\draw [ultra thick] (c_4) -- (w_8);
\draw [ultra thick] (c_2) -- (w_9);

\draw [double distance = 1pt] (-2.75,10.95) -- (-1,12);
\draw [double distance = 1pt] (-2.25,10.95) -- (-1,12);
\draw [double distance = 1pt] (-1.75,10.95) -- (-1,12);
\draw [double distance = 1pt] (-1.25,10.95) -- (-1,12);
\draw [double distance = 1pt] (-0.25,10.95) -- (-1,12);
\draw [double distance = 1pt] (0.25,10.95) -- (-1,12);
\draw [double distance = 1pt] (0.75,10.95) -- (-1,12);
\draw [double distance = 1pt] (1.25,10.95) -- (-1,12);

\draw [fill = white] (-2.95,10.5) rectangle (-2.55,11);
\draw [fill = white] (-2.45,10.5) rectangle (-2.05,11);
\draw [fill = white] (-1.95,10.5) rectangle (-1.55,11);
\draw [fill = white] (-1.45,10.5) rectangle (-1.05,11);
\draw [fill = white] (-0.45,10.5) rectangle (-0.05,11);
\draw [fill = white] (0.05,10.5) rectangle (0.45,11);
\draw [fill = white] (0.55,10.5) rectangle (0.95,11);
\draw [fill = white] (1.05,10.5) rectangle (1.45,11);

\draw (-0.9, 10.75) node{.};
\draw (-0.75, 10.75) node{.};
\draw (-0.6, 10.75) node{.};

\draw [double distance = 1pt] (-2.75,10.1) -- (-2.75,10.5);
\draw [double distance = 1pt] (-2.25,10.1) -- (-2.25,10.5);
\draw [double distance = 1pt] (-1.75,10.1) -- (-1.75,10.5);
\draw [double distance = 1pt] (-1.25,10.1) -- (-1.25,10.5);
\draw [double distance = 1pt] (-0.25,10.1) -- (-0.25,10.5);
\draw [double distance = 1pt] (0.25,10.1) -- (0.25,10.5);
\draw [double distance = 1pt] (0.75,10.1) -- (0.75,10.5);
\draw [double distance = 1pt] (1.25,10.1) -- (1.25,10.5);

\node[circle,draw=black, fill=black, inner sep=0pt,minimum size=6pt] (s) at (-1,12){};
\draw (-0.75, 12.1) node{$s$};
\node[circle,draw=black, fill=white, inner sep=0pt,minimum size=6pt] (s_1) at (-1.25,12.5){};
\draw (-1.55, 12.5) node{$s_1$};
\node[circle,draw=black, fill=white, inner sep=0pt,minimum size=6pt] (s_2) at (-0.75,12.5){};
\draw (-0.45, 12.5) node{$s_2$};
\draw (s) -- (s_1);
\draw (s) -- (s_2);

\draw [dashed] (-3.5, 5.3) -- (-3.5, 6.3);
\draw [dashed] (1.5, 5.3) -- (1.5, 6.3);

\draw (2.1, 5.5) node{$x_2$};
\draw (2.1, 6) node{$x_1$};

\draw [ultra thick] (4_2) -- (4_3);
\draw [ultra thick] (5_0_0) -- (5_0);
\draw [ultra thick] (5_4) -- (5_0_1);

\end{tikzpicture}
\caption{(Double edges between two sets of vertices represent all edges between the two sets.) 
Left: Checker gadget $H_j$ connected to the main part. Here we have considered an instance where the clause $c_{j'}$ has only two variables, $x_1$ and $x_2$. Moreover, only the independent set $Z_{\sigma_1,j}$ is shown connected to the main part. The possible assignment $\sigma_1$ of $c_{j'}$ is $(x_1 = 0, x_2 = 2)$. We have supposed that this assignment is satisfiable, and we have marked the corresponding mixed dominating set: filled vertices are in $D$, thick edges are in $M$. 
Right: Checker gadget $Q_{i,j}$ connected to the main part, that is to the path corresponding to the variable $x_i$. Only the independent sets $B_{i,j,1}$ and $B_{i,j,3}$ are shown connected to the main part. We have supposed that the assignment $(x_i = 3)$ is satisfiable, and we have marked the corresponding mixed dominating set: filled vertices are in $D$, thick edges are in $M$. }
\label{FigureGadgetQijAndCopies}
\end{figure}

The target size is $k$, as defined above.
We now argue that the reduction is correct and $G$ has the desired pathwidth.

\begin{lemma}\label{lemmacsptomds} If $\varphi$ is satisfiable, then there
exists a mixed dominating set in $G$ of size at most $k$.  \end{lemma}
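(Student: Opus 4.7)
The plan is to exhibit an explicit mixed dominating set built from a satisfying assignment $\tau:\{x_1,\ldots,x_n\}\to\{0,1,2,3,4\}$ of $\varphi$, and then to verify its size and feasibility piece by piece. I would first place $s$ in $D$. Next, for each variable path, with $\alpha_i:=\tau(x_i)$, I would install the value-$\alpha_i$ pattern of Figure~\ref{FigureMainPart}: for every section $j\in\{0,\ldots,Fm-1\}$ put $u_{i,5j+\alpha_i}$ in $D$ and the edge $(u_{i,5j+\beta},u_{i,5j+\gamma})$ in $M$, with $\beta=(\alpha_i+2)\bmod 5$ and $\gamma=(\alpha_i+3)\bmod 5$ (indices interpreted across section boundaries as needed), plus one boundary element per path to handle the two endpoints. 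For each copy of the consistency gadget $Q_{i,j}$ I would then add to $M$ a perfect matching between the eight vertices of $A_{i,j}$ and the eight vertices of $\bigcup_{\ell\neq\alpha_i} B_{i,j,\ell}$, which is possible since $A_{i,j}$ is completely joined to every $B_{i,j,\ell}$. Finally, for each checker gadget $H_j$ with $j'=j\bmod m$, I would fix one assignment $\sigma^\star_j$ in the list of $c_{j'}$ that agrees with $\tau$ on the variables of $c_{j'}$ (such a $\sigma^\star_j$ exists because $\tau$ satisfies $\varphi$), and for each of the remaining $C-1$ assignments $\sigma$ in the list, add to $M$ a matching saturating $Z_{\sigma,j}$ and using $2q$ vertices of $W_j$, chosen so that the resulting edge set is a perfect matching of $W_j$.

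The size count then matches the budget term by term: $1$ for $s$, $n$ for the per-path boundary fix-ups, $2Fmn$ for the one vertex and one edge per section per path, $8AFmn$ for the $8$ matching edges in each of the $A$ copies of every $Q_{i,j}$, and $2Fmq(C-1)$ for the $Z$--$W$ matchings across all $Fm$ checker gadgets.

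Feasibility then reduces to a case analysis. The pendant vertices of the $2k+1$-fans are dominated by $s\in D$, their edge to $s$ by $s$, and their edge to the other common neighbour $a\in A_{i,j}$ or $w\in W_j$ by the fact that $a$, respectively $w$, lies in $V(M)$. Within a main path, the value-$\alpha_i$ pattern places in every window of five consecutive indices one vertex of $D$ and one edge of $M$, which together dominate the five vertices and the four internal edges of the window. In each $Q_{i,j}$ the set $A_{i,j}$ and the four sets $B_{i,j,\ell}$ with $\ell\neq\alpha_i$ lie in $V(M)$, so every edge among them, as well as every edge from them to the main path, has an endpoint in $V(M)$; the two vertices of $B_{i,j,\alpha_i}$ are dominated by $u_{i,5j+\alpha_i}\in D$. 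Inside each $H_j$, $W_j$ and every $Z_{\sigma,j}$ with $\sigma\neq\sigma^\star_j$ lie in $V(M)$, which takes care of all their incident edges, including the biclique edges $Z_{\sigma,j}$--$Z_{\sigma',j}$ in which at least one of $\sigma,\sigma'$ differs from $\sigma^\star_j$.

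The chief obstacle is the last class of objects, the ``chosen'' independent set $Z_{\sigma^\star_j,j}$: by construction, for each variable $x_i$ appearing in $c_{j'}$ the vertices $z^1_{\sigma^\star_j,j,i},z^2_{\sigma^\star_j,j,i}$ are each adjacent to $u_{i,5j+\sigma^\star_j(x_i)}$, and since $\sigma^\star_j$ agrees with $\tau$ on $x_i$ this neighbour equals $u_{i,5j+\alpha_i}\in D$, so both vertices are dominated; the edges from $Z_{\sigma^\star_j,j}$ to $W_j$ and to the other $Z_{\sigma,j}$'s then have their far endpoint in $V(M)$. Verifying this, together with a careful treatment of the main-path endpoints (which is what the $+n$ slack in the budget is for), is the principal bookkeeping task of the proof.
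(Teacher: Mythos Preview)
Your construction is essentially identical to the paper's: place $s$ in $D$; put $u_{i,5j+\alpha_i}$ in $D$ for every section; cover the remaining two adjacent path vertices per section with a matching edge, with an $O(1)$ boundary correction per path; take a perfect matching from $A_{i,j}$ into $\bigcup_{\ell\neq\alpha_i}B_{i,j,\ell}$ in every copy of $Q_{i,j}$; and a perfect matching from $W_j$ into $\bigcup_{\sigma\neq\sigma^\star_j}Z_{\sigma,j}$ in every $H_j$. The size count and the feasibility analysis match the paper's term for term. The only place where you are slightly less explicit than the paper is the edges from $B_{i,j,\alpha_i}$ and from $Z_{\sigma^\star_j,j}$ to $u_{i,5j+\beta}$ and $u_{i,5j+\gamma}$; the paper observes that these two path vertices are endpoints of the selected matching edge, so those edges are dominated as well --- this is implicit in your pattern description but worth stating.
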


\begin{proof} Assume that $\varphi$ admits some satisfying assignment $\rho:\{x_1,\ldots,
x_n\}\to \{0,1,2,3,4\}$. We construct a solution as follows:

\begin{enumerate}

\item For each $i\in\{1,\ldots,n\}$ let $\alpha=\rho(x_i)$. For each
$j\in\{0,\ldots,Fm-1\}$, we select in the dominating set the vertex $u_{i,5j+\alpha}$.

\item Let $U'$ be the set of vertices $u_{i,j}$ of the main part  which were not
selected in the previous step and which do not have a neighbor selected in the
previous step.  We add to the solution all edges of a maximum matching of
$G[U']$, as well as all vertices of $U'$ left unmatched by this matching.

\item For each $j\in\{0,\ldots,Fm-1\}$, $G$ contains a gadget $H_j$.  Consider
the constraint $c_{j'}$ for $j'=j\bmod m$.  Let $\sigma$ be an assignment in
the list of $c_{j'}$ that agrees with $\rho$ (such a $\sigma$ must exist, since
the constraint is satisfied by $\rho$). We add to the solution the edges of a
perfect matching from $W_j$ to $\bigcup_{\sigma'\neq \sigma} Z_{\sigma',j}$.

\item For each $j\in\{0,\ldots,Fm-1\}$ and $i\in\{1,\ldots,n\}$ we have added
to the graph $A$ copies of the consistency gadget $Q_{i,j}$. For each copy we
add to the solution a perfect matching from $A_{i,j}$ to
$\bigcup_{\ell\neq\rho(x_i)} B_{i,j,\ell}$.

\item We set $s\in D$.

\end{enumerate}

Let us first argue why this solution has size at most $k$. In the first step we
select $Fnm$ vertices. In the second step we select at most $Fnm+n$ elements.
To see this, note that if $u_{i,j}$ is taken in the previous step, then
$u_{i,j+5}$ is also taken (assuming $j+5<5Fm$), which leaves two adjacent
vertices ($u_{i,j+2}, u_{i,j+3}$). These vertices will be matched in $G[U']$
and in our solution. Note that, for a variable $x_i$, if $\rho(x_i) \neq 2$, then at most one vertex is left unmatched by the matching taken, so the cost for this variable is at most $Fm+1$. If $\rho(x_i) = 2$, then at most two vertices are left matched by the matching taken, so the cost for this variable is at most $(Fm-1)+2$. 
Furthermore, for each $H_j$ we select $|W_j| = 2q(C-1)$
edges. For each copy of $Q_{i,j}$ we select $8$ edges, for a total cost of
$8AFmn$. Taking into account $s$, the total cost is at most
$Fnm+Fnm+n+2Fmq(C-1)+8AFmn+1=k$.

Let us argue why the solution is feasible. First, all vertices $u_{i,j}$ and all edges connecting them to each other are dominated by the first two steps of our selection since we have taken a maximum matching in $G[U']$ and all vertices left unmatched by this matching.
%all edges connecting them to each other are clearly dominated by the first two
%steps of our selection. 
Second, for each $H_j$, the vertex $s$ together with
the endpoints of selected edges form a vertex cover of $H_j$, so all internal
edges are dominated. Furthermore, $s$ dominates all vertices which are not
endpoints of our solution, except $Z_{\sigma,j}$, where $\sigma$ is the
selected assignment of $c_{j'}$, with $j'=j\bmod m$. We then need to argue that
the vertices of $Z_{\sigma,j}$ and the edges connecting it to the main part are
covered.

Recall that the $2q$ vertices of $Z_{\sigma,j}$ are partitioned into pairs,
with each pair $z^1_{\sigma,j,i}, z^2_{\sigma,j,i}$ reserved for the variable
$x_i$ involved in $c_{j'}$. We now claim that $z^1_{\sigma,j,i},
z^2_{\sigma,j,i}$ are dominated by our solution, since we have selected the
vertex $u_{i,5j+\alpha}$, where $\alpha=\rho(x_i)$. Furthermore,
$u_{i,5j+\beta}, u_{i,5j+\gamma}$, where $\beta=(a+2)\bmod m$,
$\gamma=(a+3)\bmod m$, belong in $U'$ and therefore the edges incident to them
are covered.  Finally, to see that the $Q_{i,j}$ gadgets are covered, observe
that for each such gadget only $2$ vertices of some $B_{i,j,\ell}$ are not in
$P$. The common neighbor of these vertices is in $D$, and their other neighbors
in the main part are in $P$.  
\end{proof}

The idea of the proof of the next Lemma is the following: by partitioning the graph into different parts and lower bound the cost of these parts, we prove that if a mixed dominating set in $G$ has not the same form as in Lemma \ref{lemmacsptomds} in a sufficiently large copy, then it has size strictly greater than $k$, enabling us to produce a satisfiable assignment for $\varphi$ using the mixed dominating set which has the desired form. 

\begin{lemma}\label{lemmamdstocsp} If there exists a mixed dominating set in
$G$ of size at most $k$, then $\varphi$ is satisfiable.  \end{lemma}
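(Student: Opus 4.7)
The plan is to show that the extremely tight budget $k$ forces any mixed dominating set $S = D \cup M$ with $|S|\le k$ to behave canonically on almost all of the construction, and that reading off this canonical pattern yields a satisfying assignment of $\varphi$. I would proceed in four stages: (i) establish disjoint local lower bounds that leave only $O(n)$ slack in the budget, (ii) show that each ``tight'' consistency gadget $Q_{i,j}$ pins down one of five canonical configurations on its five associated positions of the main path, (iii) use the parameters $A=12$ and $F=(4n+1)(2n+1)$ in a double pigeonhole to locate a window of consecutive sections where a single global assignment $\rho$ is read off, and (iv) use the checker gadget $H_j$ in that window to verify that $\rho$ satisfies every constraint of $\varphi$.

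For the bookkeeping, the $(2k+1)$-sized independent set attached to $s$ through each $w\in W_j$ and each $a\in A_{i,j}$ forces $s\in D$ and forces every such $w$ or $a$ to be either in $D$ or incident to some edge of $M$. Since no edge of $G$ is incident to two such vertices across different gadgets, these contributions to $|S|$ are disjoint, giving $|W_j|=2q(C-1)$ elements per checker and $|A_{i,j}|=8$ per copy of a consistency gadget. On each main path $P_i$, using the standard fact that a mixed dominating set of a path on $5Fm$ vertices has size at least $2Fm$ (with a small boundary argument to account for the attached $B$ vertices, which cannot dominate path-internal edges), we obtain $2Fm$ elements per path, $2Fmn$ in total. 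Adding $s$, the combined lower bound is exactly $k-n$, so the slack of any valid solution is at most $n$.

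Using this slack I would analyse $Q_{i,j}$: in a tight copy spending exactly $8$ elements at $A_{i,j}$, a case analysis (using that $A_{i,j}$ is fully joined to each $B_{i,j,\ell}$ and the five $B_{i,j,\ell}$ are pairwise fully joined, so any non-matching solution at $A_{i,j}$ requires extra budget elsewhere) shows that exactly one index $\ell^*(i,j)\in\{0,\dots,4\}$ has its set $B_{i,j,\ell^*}$ left to be dominated from outside the gadget, and that the unique slack-free way to do so is the canonical configuration with parameter $\alpha_{i,j}:=\ell^*(i,j)$, namely a selected vertex at $u_{i,5j+\alpha_{i,j}}$ and a selected edge between $u_{i,5j+(\alpha_{i,j}+2)\bmod 5}$ and $u_{i,5j+(\alpha_{i,j}+3)\bmod 5}$. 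The $A=12$ independent copies of $Q_{i,j}$ act as an agreement gadget: any disagreement between copies costs at least one unit of slack per disagreeing copy, and a similar counting shows that each boundary change $\alpha_{i,j}\neq\alpha_{i,j+1}$ along $P_i$ also costs at least one extra unit. Hence, up to at most $n$ exceptions spread across all paths and all sections, $\alpha_{i,j}$ is well-defined and constant in $j$.

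To extract a global assignment I would double-pigeonhole: each sequence $(\alpha_{i,j})_j$ has $O(n)$ changes, so across the $n$ paths there are at most $O(n^2)$ bad sections, whereas $F=(4n+1)(2n+1)>n^2$ ensures that every constraint $c_{j'}$ has some occurrence $j$ with $j\equiv j'\pmod m$ lying in a window of $m$ consecutive sections on which every $\alpha_{i,j}$ is defined and constant; call the common value $\rho(x_i)$. For each such good section $j$, the analysis of $H_j$ forces the $2q(C-1)$ elements at $W_j$ to form a perfect matching from $W_j$ into $\bigcup_{\sigma'} Z_{\sigma',j}$ (any alternative strictly exceeds the slack budget, thanks to the complete bipartite joins), leaving exactly one $Z_{\sigma,j}$ whose $2q$ vertices must be dominated by the main part; tracing the edges from $Z_{\sigma,j}$ to $P_i$ forces $\sigma(x_i)=\alpha_{i,j}=\rho(x_i)$ for every $x_i$ involved in $c_{j'}$. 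Since $\sigma$ lies in the satisfying list of $c_{j'}$, $c_{j'}$ is satisfied by $\rho$; doing this for every $j'$ gives a satisfying assignment of $\varphi$. The main obstacle is the slack accounting in step (iii): precisely quantifying that each local deviation (inconsistency among copies of $Q_{i,j}$, a pattern change along $P_i$, or a non-matching solution on some $W_j$) costs at least one additional element, so that $n$ units of slack cap the total number of anomalies below the thresholds built into $A=12$ and $F=(4n+1)(2n+1)$.
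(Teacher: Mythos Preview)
Your overall four-stage plan matches the paper's. The substantive gap is in step~(iii): the claim that ``each boundary change $\alpha_{i,j}\neq\alpha_{i,j+1}$ along $P_i$ also costs at least one extra unit'' of slack is false. Within a window where every section already achieves its minimum local cost, the five canonical configurations can still shift from one section to the next at zero additional cost; for instance, configuration~(b) at section~$j$ (so $u_{i,5j+1}\in D$ and $(u_{i,5j+3},u_{i,5j+4})\in M$) followed by configuration~(a) at section~$j{+}1$ yields a valid solution with both sections at cost exactly~$2$. Slack accounting therefore cannot bound the number of shifts, and your step~(iii) collapses. The paper's argument here is different and essential: it checks, for each ordered pair of configurations, which can legally follow which, and observes that the resulting digraph on $\{(a),(b),(c),(d),(e)\}$ (with self-loops removed) is a DAG whose longest directed path has length~$4$. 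Hence within any non-problematic interval each variable's configuration shifts at most four times, giving at most $4n$ shift positions in total; this is precisely what the factor $(4n+1)$ in $F=(4n+1)(2n+1)$ is calibrated against, after the factor $(2n+1)$ has absorbed the at most $2n$ genuinely over-budget (``problematic'') sections that \emph{do} cost slack.

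A smaller issue: your main-path lower bound of $2Fm$ per path is not simply the ``standard fact'' for an isolated path. Off-path $B$ and $Z$ vertices placed in $D$ can dominate path vertices (though, as you say, not internal path edges), and such elements sit in none of your $A_{i,j}$, $W_j$, or main-path buckets. The paper sidesteps the bookkeeping by first passing to a \emph{nice} mixed dominating set and using the vertex-additive function $\cost(S)=|S\cap D|+\tfrac12|S\cap P|$; crucially, the bound $\cost(S_{i,j})\ge 2$ is then derived not from the path structure but from the consistency gadget itself: a normal copy of $Q_{i,j}$ leaves some $B_{i,j,\ell}\subseteq I$, and dominating those two vertices from $S_{i,j}$ forces one path vertex into $D$ and two more into $D\cup P$. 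Your direct element-count can likely be repaired, but it requires exactly the delicate disjointness argument you flag as ``the main obstacle'', and it is not the one-line appeal to paths that you sketch.
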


\begin{proof}
Suppose that we are given, without loss of generality (\Cref{lemmanice}), a
nice mixed dominating set of $G$ of minimum cost. 
We therefore have a partition of $V$ into $V = D \cup P \cup I$.
%We therefore have a partition of $V(G)$ into $V=D\cup P\cup I$, and a perfect matching $M$ of $G[P]$. 
Before proceeding, let us define for a set $S\subseteq V$ its \emph{cost} as
$\cost(S)=|S\cap D|+\frac{|S\cap P|}{2}$. 
Clearly, $\cost(V) \leq k$ since $|M| \geq |P|/2$, and 
%Clearly, $\cost(V(G)) \le k$ and 
for
disjoint sets $S_1,S_2$ we have $\cost(S_1\cup S_2) = \cost(S_1)+\cost(S_2)$.
Our strategy will therefore be to partition $V$ into different parts and lower
bound their cost.

%%%
First, we give some notation. Consider some $j \in \{ 0, \ldots, Fm-1 \}$ and
$i \in \{ 1, \ldots, n \}$: recall that we have constructed $A$ copies of the
gadget $Q_{i,j}$, call them $Q^1_{i,j},\ldots, Q^A_{i,j}$; also we define the
sets $S_{i,j}=\{u_{i,5j}, u_{i,5j+1},\ldots, u_{i,5j+4}\}$. Now, for some $j
\in \{ 0, \ldots, Fm-1 \}$, let: \begin{equation}\label{equation:1} S_j = H_j
\cup \bigcup_{i\in\{1,\ldots, n\}}\left( S_{i,j} \cup
\bigcup_{r\in\{1,\ldots,A\}} Q^r_{i,j}\right)   \end{equation}
%%%

\begin{claim}
$\cost(S_j) \geq 2q(C-1)+2n + 8An$. 
\end{claim}

\begin{proof}
We begin with some easy observations. First, it must be the case that $s\in D$.
If not, either $s_1$ or $s_2$ are in $D$, which contradicts the niceness of the
solution. 

Consider some $j\in\{0,\ldots,Fm-1\}$ and $i\in\{1,\ldots,n\}$. 
%Recall that we have constructed $A$ copies of the gadget $Q_{i,j}$, call them $Q^1_{i,j},\ldots, Q^A_{i,j}$. 
We will say that, for $1 \leq r \leq A$, $Q^r_{i,j}$ is \emph{normal} if
we have the following: $Q^r_{i,j}\cap D=\emptyset$ and there exists $\ell\in
\{0,\ldots,4\}$ such that $Q^r_{i,j}\cap P = A_{i,j}\cup
\bigcup_{\ell'\neq\ell} B_{i,j,\ell'}$. In other words, $Q^r_{i,j}$ is normal if
locally the solution has the form described in \Cref{lemmacsptomds}.

We now observe that for all $i,j,r$ we have $\cost(Q^r_{i,j})\ge 8$. To see
this, observe that if there exists $a\in A_{i,j}\cap I$, then the $2k+1$
neighbors of $a$ must be in $D\cup P$, so the solution cannot have cost $k$.
Hence, $A_{i,j}\subseteq D\cup P$. Furthermore, the maximum independent set of
$\bigcup_{\ell\in\{0,\ldots,4\}} B_{i,j,\ell}$ is $2$, so
$|(\bigcup_{\ell\in\{0,\ldots,4\}} B_{i,j,\ell})\cap (D\cup P)|\ge 8 $.
Following this reasoning we also observe that if $Q^r_{i,j}$ is not normal,
then we have $\cost(Q^r_{i,j})>8$. In other words, $8$ is a lower bound for the
cost of every copy of $Q_{i,j}$, which can only be attained if a copy is
normal.

Consider some $j\in\{0,\ldots,Fm-1\}$ and $i\in\{1,\ldots,n\}$ and suppose that
none of the $A$ copies of $Q_{i,j}$ is normal. We will then arrive at a
contradiction. Indeed, we have $\cost(\bigcup_r Q^r_{i,j}) \ge 8A + A/2 \ge 8A
+ 6$. We create another solution by doing the following: take the five vertices $u_{i,5j}, u_{i,5j+1},\ldots, u_{i,5j+4}$, and take in all $Q_{i,j}$ a matching so that $Q_{i,j}$ is normal. This has decreased the total cost, while keeping the solution valid, which should not be possible.

We can therefore assume from now on that for each $i,j$ at least one copy of
$Q_{i,j}$ is normal, hence, there exists $\ell\in\{0,\ldots,4\}$ such that
$B_{i,j,\ell}\subseteq I$ in that copy. 

Recall that $S_{i,j}=\{u_{i,5j}, u_{i,5j+1},\ldots, u_{i,5j+4}\}$. We claim that for
all $i\in\{1,\ldots,n\}, j\in\{0,\ldots,Fm-1\}$, we have $\cost(S_{i,j})\ge 2$.
Indeed, if we consider the normal copy of $Q_{i,j}$ which has
$B_{i,j,\ell}\subseteq I$, the two vertices of $B_{i,j,\ell}$ have three
neighbors in $S_{i,j}$, and at least one of them must be in $D$ to dominate the vertices of $B_{i,j,\ell}$.
%neighbors in $S_{i,j}$, and at least one of them must be in $D$ .

In addition, we claim that for all $j\in\{0,\ldots,Fm-1\}$ we have
$\cost(H_j)\ge 2q(C-1)$. The reasoning here is similar to $Q_{i,j}$, namely,
the vertices of $W_j$ cannot belong to $I$ (otherwise we get $2k+1$ vertices in
$D\cup P$) ; and from the $2qC$ vertices in $\bigcup_{\sigma} Z_{\sigma,j}$ at
most $2q$ can belong to $I$.

We now have the lower bounds we need: $\cost(S_j)\ge
2q(C-1)+2n+8An$. 
%For each $j\in\{0,\ldots, Fm-1\}$ let
%$S_j = H_j \cup \bigcup_{i\in\{1,\ldots, n\}}\left( S_{i,j} \cup
%\bigcup_{r\in\{1,\ldots,A\}} Q^r_{i,j}\right) $. 
%We have $\cost(S_j)\ge
%2q(C-1)+2n+8An$. 
\end{proof}

Now, if for some $j$ we have  $\cost(S_j)> 2q(C-1)+2n+8An$ we will
say that $j$ is \emph{problematic}. 

\begin{claim}
There exists a contiguous interval $J \subseteq \{ 0, \ldots, Fm-1 \}$ of size at least $m(4n+1)$ in which all $j \in J$ are not problematic. 
\end{claim}

\begin{proof} Let $L\subseteq \{0,\ldots,Fm-1\}$ be the set of problematic indices. We claim
that $|L|\le 2n$.  Indeed, we have $\cost(V) =
1+\sum_{j\in\{0,\ldots,Fm-1\}} \cost(S_j) \ge 1 + Fm(2q(C-1)+2n+8An) + |L|/2 =
k-n+|L|/2$. But since the total cost is at most $k$, we have $|L|/2\le n$.

We will now consider the longest contiguous interval $J\subseteq \{0,\ldots,
Fm-1\}$ such that all $j\in J$ are not problematic. We have $|J|\ge
Fm/(|L|+1)\ge m(4n+1)$. 
\end{proof}

Before we proceed further, we note that if $j$ is not problematic, then for any
$i\in\{1,\ldots,n\}$, all edges of $M$ which have an endpoint in $S_{i,j}$,
must have their other endpoint also in the main part, that is, they must be
edges of the main paths.  To see this note that if $j$ is not problematic, all
$Q_{i,j}$ are normal, so there are $8$ vertices in $A_{i,j}\cap P$ which must
be matched to the $8$ vertices of $(\bigcup_{\ell} B_{i,j,\ell} )\cap P$.
Similarly, in $H_j$ the $2q(C-1)$ vertices of $W_j\cap P$ must be matched to
the $2q(C-1)$ vertices of $(\bigcup_{\sigma} Z_{\sigma,j}) \cap P$, otherwise we would
increase the cost and $j$ would be problematic.

Consider now a non-problematic $j \in J$ and $i \in \{ 1, \ldots, n \}$
such that $\cost(S_{i,j}) = 2$. We claim that the solution must follow one of
the five configurations below (see also Figure \ref{FigureMainPart}):

\begin{itemize}

\item[(a)] $u_{i,5j}\in D$ and $(u_{i,5j+2}, u_{i,5j+3})\in M$.

\item[(b)] $u_{i,5j+1}\in D$ and $(u_{i,5j+3}, u_{i,5j+4})\in M$.

\item[(c)] $u_{i,5j+2}\in D$,  $(u_{i,5j+4}, u_{i,5j+5})\in M$, and
$(u_{i,5j-1}, u_{i,5j})\in M$.

\item[(d)] $u_{i,5j+3}\in D$ and $(u_{i,5j}, u_{i,5j+1})\in M$.

\item[(e)] $u_{i,5j+4}\in D$ and $(u_{i,5j+1}, u_{i,5j+2})\in M$.

\end{itemize}

Indeed, these configurations cover all the cases where exactly one
vertex of $S_{i,j}$ is in $D$ and exactly two are in $P$. This is a condition
enforced by the fact that all of the $Q_{i,j}$ copies are normal, and that
$\cost(S_{i,j})=2$.

\begin{claim}
There exists a contiguous interval $J' \subseteq J$ of size at least $m$ in which all $j \in J'$ are not problematic and for all $j_1, j_2 \in J'$, $S_{i,j_1}$ and $S_{i,j_2}$ are in the same configuration. 
\end{claim}

\begin{proof} Given the five configurations shown in Figure \ref{FigureMainPart}, we now make the following simple observations, where statements apply for all
$i\in\{1,\ldots,n\}$ and $j$ such that $j,j+1\in J$:

\begin{itemize}
    \item If $S_{i,j}$ is in configuration (a), then $S_{i,j+1}$ is also in configuration (a). We note that in configuration (a) vertex $u_{i,5j+4}$ is not dominated, so $S_{i,j+1}$ cannot be in configuration (b), (d) and (e) by this fact, and cannot be in configuration (c) because otherwise $\cost(S_{i,j}) > 2$. 
    
    \item If $S_{i,j}$ is in configuration (b), then $S_{i,j+1}$ is in configuration (a), (b), (c) or (d). We note that $S_{i,j+1}$ cannot be in configuration (e) because otherwise the vertex $u_{i,5(j+1)}$ is not dominated.
    
    \item If $S_{i,j}$ is in configuration (c), then $S_{i,j+1}$ is in configuration (c) or (d). We note that $S_{i,j+1}$ cannot be in configuration (a) since $D \cap P = \emptyset$, nor in configuration (b) and (e) because otherwise $\cost(S_{i,j+1}) > 2$. 
    
    \item If $S_{i,j}$ is in configuration (d), then $S_{i,j+1}$ is in configuration (a) or (d). We note that $S_{i,j+1}$ cannot be in configuration (b) and (e) because otherwise the edge $(u_{i,5j+4}, u_{i,5(j+1)})$ is not dominated, nor in configuration (c) because otherwise $\cost(S_{i,j}) > 2$. 
    
    \item If $S_{i,j}$ is in configuration (e), then $S_{i,j+1}$ is in configuration (a), (b), (d) or (e). We note that $S_{i,j+1}$ cannot be in configuration (c) since $D \cap P = \emptyset$. 
\end{itemize}

We will now say for some $i \in \{ 1, \ldots, n \}$, $j \in J$, that $j$ is \emph{shifted} for variable $i$ if $j+1 \in J$ but $S_{i,j}$ and $S_{i,j+1}$ do not have the same configuration. We observe that there cannot exist distinct $j_1, j_2, j_3, j_4, j_5 \in J$ such that all of them are shifted for variable $i$. Indeed, if we draw a directed graph with a vertex for each configuration, and an arc $(u,v)$ expressing the property that the configuration represented by $v$ can follow the one represented by $u$, if we take into account the observations above, the graph will be a DAG with maximum path length $4$. Hence, a configuration cannot shift $5$ times, as long as we stay in $J$ (the part of the graph where the minimum local cost is attained everywhere). 

% We will now say for some $i\in\{1,\ldots,n\}$, $j\in J$, that $j$ is
% \emph{shifted} for variable $i$ if $j+1\in J$ but $S_{i,j}$ and $S_{i,j+1}$ do
% not have the same configuration. We observe that there cannot exist distinct
% $j_1,j_2,j_3,j_4\in J$ such that all of them are shifted for variable $i$.
% Indeed, if we draw a directed graph with a vertex for each configuration, and
% an arc $(u,v)$ expressing the property that the configuration represented by
% $v$ can follow the one represented by $u$, if we take into account the
% observations above, the graph will be a DAG with maximum path length $3$.
% Hence, a configuration cannot shift $4$ times, as long as we stay in $J$ (the
% part of the graph where the minimum local cost is attained everywhere).

By the above, the number of shifted indices $j \in  J$ is at most $4n$. Hence, the longest contiguous interval without shifted indices has length at least $|J|/(4n+1) \geq m$. Let $J'$ be this interval. 
% By the above, the number of shifted indices $j\in J$ is at most $3n$. Hence,
% the longest contiguous interval without shifted indices has length at least
% $|J|/(3n+1) \ge m$. Let $J'$ be this interval.
\end{proof}

We are now almost done: we have located an interval $J'\subseteq
\{0,\ldots,Fm-1\}$ of length at least $m$ where for all $i\in\{1,\ldots,n\}$
and all $j_1,j_2\in J'$ we have the same configuration in $S_{i,j_1}$ and
$S_{i,j_2}$. We now extract an assignment from this in the natural way: if
$u_{i,5j+\ell}\in D$, for some $j\in J', \ell\in\{0,\ldots,4\}$, then we set
$x_i=\ell$. We claim this satisfies $\varphi$. Consider a constraint $c_{j'}$
of $\varphi$. There must exist $j\in J'$ such that $j' = j\bmod m$, because
$|J'|\ge m$ and $J'$ is contiguous. We therefore check $H_j$, where there
exists $\sigma$ such that $Z_{\sigma,j}\subseteq I$ (this is because $j$ is not
problematic, that is, $H_j$ attains the minimum cost). But because the vertices
and incident edges of $Z_{\sigma,j}$ are dominated, it must be the case that
the assignment we extracted agrees with $\sigma$, hence $c_{j'}$ is satisfied.
\end{proof}

We now show that the pathwidth of $G$ is at most $n + O(1)$. 

\begin{lemma}\label{lem:pathwidth} The pathwidth of $G$ is at most $n+O(q5^q)$.
\end{lemma}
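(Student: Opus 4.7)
The plan is to construct an explicit path decomposition of $G$ of width $n+O(q5^q)$ by sliding along the main paths from left to right, one section at a time. Throughout the decomposition, I would keep the special vertex $s$ in every bag (it is the common neighbor of every $(2k+1)$-sized leaf set, so it cannot be forgotten once introduced), together with one ``current'' vertex from each of the $n$ main paths. This gives a baseline bag of $n+1$ vertices; the leaves $s_1,s_2$ are added when $s$ is first introduced and then forgotten.

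For each section $j \in \{0,\ldots,Fm-1\}$ I would process the attached gadgets in three phases. In the first phase, for each variable $i$ \emph{not} involved in the constraint $c_{j \bmod m}$, I load the five section vertices $u_{i,5j},\ldots,u_{i,5j+4}$ (introduced in order, all five kept simultaneously), then process each of the $A$ copies of $Q_{i,j}$ one at a time by introducing its at most $18$ non-leaf vertices (the set $A_{i,j}$ together with the sets $B_{i,j,\ell}$) and then introducing and immediately forgetting each of the $8(2k+1)$ leaves attached to $A_{i,j}$ individually, before finally forgetting four of the five section vertices (keeping $u_{i,5j+4}$ as the new current vertex of path $i$). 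In the second phase I load the five section vertices of each of the $q$ variables involved in $c_{j \bmod m}$ simultaneously, introduce all non-leaf vertices of the checker gadget $H_j$ (namely $W_j \cup \bigcup_\sigma Z_{\sigma,j}$, of total size $2q(C-1)+2qC = O(q5^q)$), and process the leaves attached to each $w\in W_j$ one at a time using $s$ and $w$ as anchors. In the third phase I process the $Q_{i,j}$ copies for the involved variables exactly as in the first phase, reusing the already-loaded section vertices; I then forget all but $u_{i,5j+4}$ for each involved $i$ before moving on to section $j+1$.

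The bag size is dominated by the second phase, where the bag contains $s$, one current vertex for each of the $n-q$ non-involved variables, $5q$ section vertices of the involved variables, the $O(q5^q)$ non-leaf vertices of $H_j$, and at most one leaf being temporarily introduced, giving $n+O(q5^q)$ vertices in total; all other phases use a bag of size at most $n+O(1)$. The main obstacle is precisely the presence of the $(2k+1)$-sized leaf sets attached to every vertex of $W_j$ and of each $A_{i,j}$, because loading such a set all at once would destroy the bound. The key observation that resolves this is that every vertex of these leaf sets has exactly two neighbors, $s$ and a fixed anchor, so the leaves can be streamed through the decomposition one at a time and contribute only one extra vertex to the bag while $s$ and the anchor remain in place. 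Verifying that every edge of $G$ lies in some bag (path edges are covered as the current vertex slides; edges inside each gadget and from a gadget to the main paths are covered while the gadget is loaded; edges from $s$ and from the anchors to each leaf are covered when that leaf is introduced) and that every vertex occupies a contiguous interval of bags is then a routine check, yielding pathwidth at most $n+O(q5^q)$.
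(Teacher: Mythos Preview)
Your proposal is correct and follows essentially the same approach as the paper: keep $s$ in every bag, slide one current vertex per main path, and for each section $j$ load the $O(q5^q)$ non-leaf vertices of $H_j$ together with the five section vertices and one $Q_{i,j}$ copy at a time. The only cosmetic differences are that the paper disposes of the $(2k+1)$-leaf sets via the standard ``deleting all leaves raises pathwidth by at most one'' observation rather than streaming them individually, and it simply keeps $H_j$ present throughout all bags of section $j$ rather than splitting into your three phases; both simplifications would shorten your argument without changing the bound.
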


\begin{proof} We will show how to build a path decomposition. First, we can add $s$ to all
bags, so we focus on the rest of the graph. Second, after removing $s$ from the
graph, some vertices become leaves. It is a well-known fact that removing all
leaves from a graph can only increase the pathwidth by at most $1$. To see
this, let $G'$ be the graph obtained after deleting all leaves of $G$ and
suppose we have a path decomposition of $G'$ of width $w$. We obtain a path
decomposition of $G$ by doing the following for every leaf $v$: find a bag of
width at most $w$ that contains the neighbor of $v$ and insert after this bag,
a copy of the bag with $v$ added. Clearly, the width of the new decomposition
is at most $w+1$. Because of the above we will ignore all vertices of $G$ which
become leaves after the removal of $s$.

For all $j \in \{ 0, \ldots, Fm-1 \}$, let $S_{j}$ be defined as in Equation \ref{equation:1}.
We will show how to build a path decomposition of $G[S_j]$ with the following properties:

\begin{itemize}

\item The first bag of the decomposition contains vertices $u_{i,5j}$, for all
$i\in\{1,\ldots,n\}$.

\item The last bag of the decomposition contains vertices $u_{i,5j+4}$, for all
$i\in\{1,\ldots,n\}$.

\item The width of the decomposition is $n+O(q5^q)$.

\end{itemize}

If we achieve the above then we can obtain a path decomposition of the whole
graph: indeed, the sets $S_j$ partition all remaining vertices of the graph,
while the only edges not covered by the above decompositions are those between
$u_{i,5j+4}$ and $u_{i,5(j+1)}$. We therefore place the decompositions of $S_j$
in order, and then between the last bag of the decomposition of $S_j$ and the
first bag of the decomposition of $S_{j+1}$ we have $2n$ ``transition'' bags,
where in each transition step we add a vertex $u_{i,5(j+1)}$ in the bag, and
then remove $u_{i,5j+4}$.

Let us now show how to obtain a decomposition of $G[S_j]$, having fixed the contents of the first and last bag. First, $H_j$ has order $O(q 5^q)$, so we place all its vertices to all bags. 
The remaining graph is a union of paths of
length $4$ with the $Q_{i,j}$ gadgets attached. 
We therefore have a sequence of
$O(n)$ bags, where for each $i\in\{1,\ldots,n\}$ we add to the current bag the
vertices of $S_{i,j}$, then add and remove one after another whole copies of
$Q_{i,j}$, then remove $S_{i,j}$ except for $u_{i,5j+4}$. 
\end{proof}

We are now ready to present the main result of this section. By putting together Lemmas \ref{lemmacsptomds}, \ref{lemmamdstocsp}, \ref{lem:pathwidth} and the negative result for \textsc{$q$-CSP-5} (Lemma \ref{lemmacsp}), we get the following theorem:

\begin{theorem}\label{theoremnegativetreewidth} Under SETH, for all
$\varepsilon > 0$, no algorithm solves \textsc{Mixed Dominating Set} in time
$O^*((5-\varepsilon)^{pw})$, where $pw$ is the input graph's pathwidth.
\end{theorem}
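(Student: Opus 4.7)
The plan is to derive the theorem as a direct consequence of the three lemmas developed in this section, combined with the conditional lower bound of Lemma \ref{lemmacsp}. I would argue by contradiction: assume there exists $\varepsilon > 0$ and an algorithm $\mathcal{A}$ solving \textsc{Mixed Dominating Set} in time $O^*((5-\varepsilon)^{pw})$, where $pw$ is the pathwidth of the input graph (supplied, say, with a decomposition, since the proof of Lemma \ref{lem:pathwidth} is constructive). Fix this $\varepsilon$. I would then invoke Lemma \ref{lemmacsp} with a slightly smaller slack $\varepsilon' < \varepsilon$ (chosen so that $(5-\varepsilon)$ is strictly below $(5-\varepsilon')$, which is possible since $\varepsilon$ is fixed) to obtain a constant $q = q(\varepsilon')$ for which $n$-variable $q$-CSP-5 cannot be solved in time $O^*((5-\varepsilon')^n)$ under SETH.

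Given any instance $\varphi$ of $q$-CSP-5 on $n$ variables and $m$ constraints, I would apply in polynomial time the reduction described in the Construction paragraph of this section to produce the graph $G$ together with the target size $k$. By Lemma \ref{lemmacsptomds}, if $\varphi$ is satisfiable then $G$ has a mixed dominating set of size at most $k$; by Lemma \ref{lemmamdstocsp}, the converse holds as well. Hence $\varphi$ is a yes-instance of $q$-CSP-5 if and only if $G$ is a yes-instance of \textsc{Mixed Dominating Set} with budget $k$.

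By Lemma \ref{lem:pathwidth}, a path decomposition of $G$ of width $n + O(q5^q)$ can be computed in polynomial time. Running $\mathcal{A}$ on $(G, k)$ with this decomposition therefore takes time
\[
O^*\bigl((5-\varepsilon)^{n + O(q 5^q)}\bigr) = O^*\bigl((5-\varepsilon)^n \cdot (5-\varepsilon)^{O(q 5^q)}\bigr).
\]
Since $q$ is a constant that depends only on $\varepsilon$ (fixed once $\varepsilon$ is fixed), the factor $(5-\varepsilon)^{O(q 5^q)}$ is a constant, so the total running time is $O^*((5-\varepsilon)^n)$, which is upper bounded by $O^*((5-\varepsilon')^n)$ for the slightly larger $\varepsilon'$ chosen above. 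This contradicts Lemma \ref{lemmacsp} and completes the proof.

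There is essentially no hard step remaining: all the real work — the equivalence in both directions and the pathwidth bound — has been done in Lemmas \ref{lemmacsptomds}, \ref{lemmamdstocsp}, and \ref{lem:pathwidth}. The only point requiring mild care is ensuring that the additive $O(q5^q)$ overhead in the pathwidth bound only contributes a constant factor to the running time, which is valid because $q$ is fixed once $\varepsilon$ is fixed and does not grow with the size of $\varphi$.
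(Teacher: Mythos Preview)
Your proposal is correct and follows essentially the same route as the paper: combine Lemmas \ref{lemmacsptomds}, \ref{lemmamdstocsp}, and \ref{lem:pathwidth} with Lemma \ref{lemmacsp} to conclude that a fast \textsc{Mixed Dominating Set} algorithm would yield a fast $q$-CSP-5 algorithm. The only difference is that the paper uses the same $\varepsilon$ throughout rather than introducing an auxiliary $\varepsilon' < \varepsilon$; since $q$ is fixed once $\varepsilon$ is fixed, the factor $(5-\varepsilon)^{O(q5^q)}$ is already a constant absorbed by $O^*$, so your extra slack is harmless but unnecessary.
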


\begin{proof}
Fix $\varepsilon > 0$ and let $q$ be sufficiently large so that \Cref{lemmacsp}
is true. Consider an instance $\varphi$ of $q$-\textsc{CSP}-5. Using our
reduction, create an instance $(G,k)$ of \textsc{Mixed Dominating Set}.  Thanks to \Cref{lemmacsptomds}
and \Cref{lemmamdstocsp}, we know that $\varphi$ is satisfiable if and only if
there exists a mixed dominating set of size at most $k$ in $G$.

Suppose there exists an algorithm which solves \textsc{Mixed Dominating Set} in time
$O^*((5-\varepsilon)^{pw})$. With this algorithm and our reduction, we can
determine if $\varphi$ is satisfiable in time $O^*((5-\varepsilon)^{pw})$,
where $pw = n+O(q5^q) = n+O(1)$, so the total running time of this procedure is
$O^*((5-\varepsilon)^n)$, contradicting the SETH.
\end{proof}

% \begin{proof}
% Fix $\varepsilon > 0$ and let $q$ be sufficiently large so that \Cref{lemmacsp}
% is true. Consider an instance $\varphi$ of $q$-\textsc{CSP}-5. Using our
% reduction, create an instance $(G,k)$ of \textsc{Mixed Dominating Set}.  Thanks to \Cref{lemmacsptomds}
% and \Cref{lemmamdstocsp}, we know that $\varphi$ is satisfiable if and only if
% there exists a mixed dominating set of size at most $k$ in $G$.

% Suppose there exists an algorithm which solves \textsc{Mixed Dominating Set} in time
% $O^*((5-\varepsilon)^{pw})$. With this algorithm and our reduction, we can
% determine if $\varphi$ is satisfiable in time $O^*((5-\varepsilon)^{pw})$,
% where $pw = n+O(q5^q) = n+O(1)$, so the total running time of this procedure is
% $O^*((5-\varepsilon)^n)$, contradicting the SETH. 
% \end{proof}

%%%%%%%%%%%
%%%%%%%%%%%
%%%%%%%%%%%
%%%%%%%%%%%
%%%%%%%%%%%
%%%%%%%%%%%
%%%%%%%%%%%
%%%%%%%%%%%
%%%%%%%%%%%
%%%%%%%%%%%
%%%%%%%%%%%
%%%%%%%%%%%
\section{Exact Algorithm}

In this section, we describe an algorithm for the
\textsc{Mixed Dominating Set} problem running in time $O^*(1.912^n)$.  Let us first give an overview
of our algorithm. Consider an instance $G = (V,E)$ of the \textsc{Mixed Dominating Set} problem and
fix, for the sake of the analysis, an optimal solution which is a nice mixed
dominating set $D \cup M$. Such an optimal solution must exist by Lemma \ref{lemmanice},
so suppose it gives the nice mds partition $V = D \cup P \cup I$. 

By \Cref{lemmavertexcover}, there exists a minimal vertex cover $C$ of $G$ for
which $D \subseteq C \subseteq D \cup P$. Our first step is to ``guess'' $C$,
by enumerating all minimal vertex covers of $G$. This decreases our search
space, since  we can now assume that vertices of $C$ only belong in $D\cup P$,
and vertices of $V\setminus C$ only belong in $P\cup I$.

For our second step, we branch on the vertices of $V$, placing them in $D$,
$P$, or $I$. The goal of this branching is to arrive at a situation where our
partial solution dominates  $V\setminus C$.  The key idea is that any vertex of
$C$ that may belong in $D$  must have at least two private neighbors, hence
this allows us to significantly speed up the branching for low-degree vertices
of $D$.  
Finally, once we have a partial solution that dominates all of $V \setminus C$, we show how to complete this optimally in polynomial time using a minimum edge cover computation.
%Finally, once we have a partial solution that dominates all of
%$V\setminus C$, we show how to complete this optimally in polynomial time using
%a maximum matching computation.

We now describe the three steps of our algorithm in order and give the
properties we are using step by step. In the remainder we assume that $G$ has
no isolated vertices (since these are handled by taking them in the solution). Therefore, by
\Cref{lemmanice} there exists an optimal nice mixed dominating set. Denote the corresponding
partition as $V=D\cup P\cup I$.

\textbf{Step 1:} Enumerate all minimal vertex covers of $G$, which takes time $O^*(3^{n/3})$ by a result of \cite{moon1965cliques}. For each
such vertex cover $C$ we execute the rest of the algorithm. In the end output
the best solution found.

Thanks to \Cref{lemmavertexcover}, there exists a minimal vertex cover $C$ with
$D\subseteq C\subseteq D\cup P$. Since we will consider all minimal vertex
covers, in the remainder we focus on the case where the set $C$ considered
satisfies this property. Let $Z = V \setminus C$.  Then $Z$ is an independent
set of $G$.  We now get two properties we will use in the branching step of our
algorithm: 

\begin{enumerate}
\item For all $u \in C$, $u$ can be either in $D$ or in $P$, because $C
\subseteq D \cup P$. 
\item For all $v \in Z$, $v$ can be either in $P$ or in $I$, because $D
\subseteq C$. 
\end{enumerate}

\textbf{Step 2:} Branch on the vertices of $V$ as described below.

The branching step of our algorithm will be a set of Reduction and Branching
Rules over the vertices of $C$ or $Z$. In order to describe a recursive
algorithm, it will be convenient to consider a slightly more general version of
the problem: in addition to $G$, we are given three disjoint sets $D_f, P_f,
P'_f\subseteq V$, and the question is to build a nice mds partition $V = D \cup
P \cup I$ of minimum cost which satisfies the following properties: $D_f
\subseteq D\subseteq C$, $P_f\subseteq P\cap C$, and  $P'_f \subseteq P\cap Z$.
Clearly, if $D_f=P_f=P_f'=\emptyset$ we have the original problem and all
properties are satisfied. We will say that a branch where all properties are
satisfied is \emph{good}, and our proof of correctness will rely on the fact that
when we branch on a good instance, at least one of the produced branches is
good. The intuitive meaning of these sets is that when we decide in a branch
that a vertex belongs in $D$ or in  $P$ in the optimal partition we place it
respectively in $D_f$, $P_f$ or $P_f'$ (depending on whether the vertex belongs
in $C$ or $Z$).

We now describe a series of Rules which, given an instance of \textsc{Mixed Dominating Set} and three
sets $D_f, P_f, P'_f$, will recursively produce subinstances where vertices are
gradually placed into these sets. Our algorithm will consider the Reduction and
Branching Rules in order and apply the first Rule that can be applied. Note
that we say that a vertex $u$ is \emph{decided} if it is in one of the sets
$D_f \subseteq D$, $P_f \subseteq P$, or $P'_f \subseteq P$. All the other
vertices are considered \emph{undecided}. 

Throughout the description that follows, we will use $U$ to denote the set of
undecided vertices which are not dominated by $D_f$, that is, $U = V \setminus
(D_f \cup P_f \cup P'_f \cup (N(D_f)\cap Z))$. We will show that when no rule
can be applied, $U$ is empty, that is, all vertices are decided or dominated by
$D_f$.  In the third step of our algorithm we will show how to complete the
solution in polynomial time when $U$ is empty.  Since our Rules do not modify
the graph, we will describe the subinstances we branch on by specifying the
tuple $(D_f, P_f, P'_f)$.

To ease notation, let $U_C = U\cap C$ and $U_Z = U\cap Z$. Recall that for $u \in V$, we use $d_{U_C}(u)$ and $d_{U_Z}(u)$ to denote the size of the sets $N(u) \cap U_C = N_{U_C}(u)$ and $N(u) \cap U_Z = N_{U_Z}(u)$, respectively. 

\begin{figure}[h]
\begin{center}
\begin{tikzpicture}
\draw (1,6) arc (180:0:1);
\draw (1,6) -- (1,4);
\draw (3,6) -- (3,4);
\draw (1,4) arc (-180:0:1);
\draw (2,7.3) node{$C$};

\draw (4,6) arc (180:0:1);
\draw (4,6) -- (4,4);
\draw (6,6) -- (6,4);
\draw (4,4) arc (-180:0:1);
\draw (5,7.3) node{$Z$};

%\draw (1,6) -- (4,6);
\draw (2,6.5) node{$D_f$};
%\draw (1,4) -- (4,4);
\draw (2,3.5) node{$P_f$};
%\draw (6,4) -- (9,4);
\draw (5,3.5) node{$P_f'$};

\draw (5,6.5) node{$I$};

\draw [fill = black!15]  (1,6) rectangle (3,4);
\draw (2,5) node{$U_C$};
\draw [fill = black!15]  (4,6) rectangle (6,4);
\draw (5,5) node{$U_Z$};

\draw (2.5, 6.5) -- (4.5,6.5);
\draw (2.5, 6.25) -- (4.5,6.25);
\draw (2.5,6.75) -- (4.5,6.75);
% \draw (1,6) arc (180:0:1.5);
% \draw (1,6) -- (1,4);
% \draw (4,6) -- (4,4);
% \draw (1,4) arc (-180:0:1.5);
% \draw (2.5,7.8) node{$C$};

% \draw (6,6) arc (180:0:1.5);
% \draw (6,6) -- (6,4);
% \draw (9,6) -- (9,4);
% \draw (6,4) arc (-180:0:1.5);
% \draw (7.5,7.8) node{$Z$};

% %\draw (1,6) -- (4,6);
% \draw (2.5,6.7) node{$D_f$};
% %\draw (1,4) -- (4,4);
% \draw (2.5,3.3) node{$P_f$};
% %\draw (6,4) -- (9,4);
% \draw (7.5,3.3) node{$P_f'$};

% \draw (7.5,6.7) node{$I$};

% \draw [fill = black!15]  (1,6) rectangle (4,4);
% \draw (2.5,5) node{$U_C$};
% \draw [fill = black!15]  (6,6) rectangle (9,4);
% \draw (7.5,5) node{$U_Z$};

% \draw (3.5, 6.5) -- (6.5,6.5);
% \draw (3.5, 6.25) -- (6.5,6.25);
% \draw (3.5,6.75) -- (6.5,6.75);
\end{tikzpicture}
\end{center}
\caption{Partition of $V = C \cup Z$, of $C = D_f \cup P_f \cup U_C$, and of $Z = P_f' \cup I \cup U_Z$ during the process of the algorithm, where $I = N(D_f) \cap Z$. The only edges drawn show that $I$ is dominated by $D_f$.}
\label{fig:CZDfpfUI}
\end{figure}
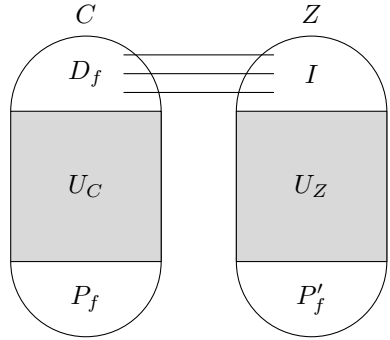

We will present each Rule individually and directly after explain why it is correct and its associated running-time, to ease presentation while having a consistent analysis.

\smallskip

\textbf{Reduction Rule (R1):} If there exists $u \in U_C$ such that
$d_{U_Z}(u) \leq 1$, then put $u$ in $P_f$, that is, recurse on the instance
$(D_f, P_f\cup\{u\}, P_f')$. 

\begin{itemize}
    \item Observe that no neighbor of $u$ in $U_C$ can be private neighbor of $u$ since $U_C \subseteq C \subseteq D \cup P$, and because $d_{U_Z}(u) \leq 1$, the vertex $u$ can have at most one private neighbor, so it must be the case that $u \in P$.
\end{itemize}

\textbf{Reduction Rule (R2):} If there exists $v \in U_Z$ such that
$d_{U_C}(v) =0$, then put $u$ in $P_f'$, that is, recurse on the instance
$(D_f, P_f, P_f'\cup\{v\})$. 

\begin{itemize}
    \item The vertex $v$ must be dominated, but it has no neighbor in $U_C$, so it must be the case that $v \in P$.
\end{itemize}

\smallskip

Now that we have presented the two Reduction Rules which we first apply in our algorithm, we will describe the Branching Rules. Thus, we need first to define our measure of progress. We define it to be the size of the set $\{ u\in U_C\ \mid
d_{U_Z}(u)\ge 2\} \cup \{ u\in U_Z\ \mid d_{U_C}(u)\ge 1\}$. In other words, we
count the undecided vertices of $U_C$ that have at least two undecided,
non-dominated vertices in $U_Z$, and the undecided, non-dominated vertices of $U_Z$
that have at least one undecided neighbor in $C$. This is motivated by the fact
that undecided vertices that do not respect these degree bounds are eliminated
by the Reduction Rules and hence do not affect the running time. Let $l$ denote
the number of the vertices that we counted according to this measure. Clearly,
$l\le n$. Let $T(l)$ be the maximum number of branches produced for an instance where the
measure has value $l$. We now consider each Branching Rule individually:

\smallskip

\textbf{Branching Rule (B1):} If there exists $u \in U_C$ such that
$d_{U_Z}(u) \geq 4$, then branch on the following two subinstances: $(D_f \cup
\{ u \}, P_f, P'_f)$ and $(D_f, P_f \cup \{ u \}, P'_f)$. 

\begin{itemize}
    \item Branching Rule B1 is correct from $U_C \subseteq C \subseteq D \cup P$. 
    \item We have $T(l) \le T(l-1) + T(l-5)$, since in the branch where $u\in D_f$ at least $4$ vertices of $U_Z$ become dominated.
    \item $T(l) \le T(l-1)+T(l-5)$ gives $x^5 = x^4+1$ with root $r<1.3248$.
\end{itemize}

\smallskip

Note that we may now assume that all vertices of $U_C$ have
$d_{U_Z}\in\{2,3\}$. The following two rules eliminate vertices $u\in U_C$ with
$d_{U_Z}(u)=2$.

\smallskip

\textbf{Branching Rule (B2.1):} If there exists $u_1,u_2 \in U_C$
such that $d_{U_Z}(u_1)=3$, $d_{U_Z}(u_2)=2$, and $N_{U_Z}(u_1)\cap
N_{U_Z}(u_2)\neq \emptyset$ then branch on the following instances:
$(D_f\cup\{u_1\},P_f\cup\{u_2\},P_f')$ and $(D_f,P_f\cup\{u_1\},P_f')$.

\begin{itemize}
    \item Branching Rule B2.1 is correct because if $u_1 \in D$, then $u_2$ cannot have two private neighbors and it is forced to be in $P$. 
    \item We have $T(l)\le T(l-1) + T(l-5)$, since in the branch where $u_1\in D_f$ we also set $u_2\in P_f$ and $3$ vertices of $U_Z$ become dominated.
    \item $T(l) \le T(l-1)+T(l-5)$ gives $x^5 = x^4+1$ with root $r<1.3248$.
\end{itemize}

\textbf{Branching Rule (B2.2):} If there exists $u \in U_C$ with
$d_{U_Z}(u)=2$ we branch on the instances $(D_f\cup\{u\},P_f,P_f')$ and
$(D_f,P_f\cup\{u\},P_f')$.

\begin{itemize}
    \item Branching Rule B2.2 is correct again from $U_C \subseteq C \subseteq D \cup P$.
    \item Let $N(u)\cap U_Z = \{v_1,v_2\}$. Note that if $d_{U_C}(v_1)\ge 2$, then all vertices $u'\in U_C$ adjacent to $v_1$ must have $d_{U_Z}(u')=2$. This is because Rules R1, B1 and B2.1 do not apply. Let $s$ be the number of vertices of $\{v_1,v_2\}$ which have at least two neighbors in $U_C$.  We consider the following cases:
	\begin{itemize}
	\item If $s=0$, then $T(l)\le T(l-3) + T(l-3)$, because when $u\in P_f$,  $v_1,v_2$ no longer contribute to $l$ (they have no other neighbor in $U_C$).
	\item If $s=1$, then $T(l)\le T(l-4) + T(l-2)$. To see this, let $u'\in U_C$ be a neighbor of $\{v_1,v_2\}$. As we said, $d_{U_Z}(u')=2$, so setting $u\in D_f$ will activate Rule R1 on $u'$, decreasing $l$ by 4. On the other hand, if $u\in P_f$, then one of $\{v_1,v_2\}$ is deleted by Rule R2.
	\item If $s=2$ and $|(N(v_1)\cup N(v_2))\cap U_C|=2$. Then we must have $N_{U_C}(v_1) = N_{U_C}(v_2) = \{u,u'\}$ for some $u'\in U_C$ with $d_{U_Z}(u')=2$. In this case Rule B2.2 (and Rules R1, R2) will be applied successively to $u,u'$ giving $T(l)\le 3T(l-4)$.
	\item If none of the above applies, then $s=2$ and we have $T(l)\le T(l-5) + T(l-1)$, because when $u\in D_f$ we force at least two other vertices of $U_C$ into $P_f$.
	\end{itemize}
    \item 
    \begin{itemize}
        \item $T(l) \le 2T(l-3)$ gives $x^3=2$ with root $r<1.2600$.
        \item $T(l) \le T(l-4)+T(l-2)$ gives $x^4=x^2+1$ with root $r<1.2721$.
        \item $T(l)\le 3T(l-4)$ gives $x^4=3$ with root $r<1.3161$.
        \item $T(l) \le T(l-5)+T(l-1)$ gives $x^5 = x^4+1$ with root $r<1.3248$.
    \end{itemize}
\end{itemize}

\smallskip

We now have that all vertices $u\in U_C$ have $d_{U_Z}(u)=3$.  Let us now
branch on vertices of $U_Z$ to ensure that these also do not have too low
degree. 

\smallskip

\textbf{Branching Rule (B3.1):} If there exists $v \in U_Z$ with
$d_{U_C}(v)=1$ let $N_{U_C}(v)=\{u\}$. We branch on the instances
$(D_f\cup\{u\},P_f,P_f')$ and $(D_f,P_f\cup\{u\},P_f')$.

\begin{itemize}
    \item Branching Rule B3.1 is correct again from $U_C \subseteq C \subseteq D \cup P$.
    \item We have $T(l)\le T(l-4) + T(l-2)$, since when $u\in P_f$ we
apply Rule R2.
    \item $T(l) \le T(l-4)+T(l-2)$ gives $x^4=x^2+1$ with root $r<1.2721$.
\end{itemize}

\textbf{Branching Rule (B3.2):} If there exists $v \in U_Z$ with
$d_{U_C}(v)=2$ let $N_{U_C}(v)=\{u_1, u_2\}$. We branch on the instances
$(D_f\cup\{u_1\},P_f,P_f')$,  $(D_f\cup\{u_2\},P_f\cup\{u_1\},P_f')$, and
$(D_f,P_f\cup\{u_1, u_2\},P_f')$.

\begin{itemize}
    \item Branching rule B3.2 is correct since we have the three following cases: $u_1 \in D$; or $u_1 \in P$ and $u_2 \in D$; or $u_1$ and $u_2 \in P$.
    \item We have $T(l)\le T(l-3) + T(l-4) + T(l-5)$ using the fact that $d_{U_Z}(u_1)=d_{U_Z}(u_2)=3$ and the fact that Rule R2 is applied when $u_1, u_2 \in P_f$.
    \item $T(l) \le T(l-3) + T(l-4) + T(l-5)$ gives $x^5=x^2+x+1$ with root $r< 1.3248$.
\end{itemize}

\smallskip

If we cannot apply any of the above Rules, for all $u\in U_C$ we have
$d_{U_Z}(u)=3$ and for all $v\in U_Z$ we have $d_{U_C}(v)\ge 3$. We now
consider three remaining cases: 
%(i) there exists a $C_4$ made up of two vertices of $U_C$ and two vertices of $U_Z$ 
(i) two vertices of $U_C$ have two common neighbors in $U_Z$;
(ii) there exists a vertex $v\in
U_Z$ with $d_{U_C}(v)=3$; (iii) everything else.

\smallskip

\textbf{Branching Rule (B4):} If there exist $u_1,u_2 \in U_C$ and
$v_1,v_2\in U_Z$ with $(u_i,v_j)\in E$ for all $i,j\in \{1,2\}$, then we branch
on the instances $(D_f\cup\{u_1\},P_f\cup\{u_2\},P_f')$ and
$(D_f,P_f\cup\{u_1\},P_f')$.

\begin{itemize}
    \item Branching Rule B4 is correct because if $u_1 \in D$, then $u_2$ cannot have two private neighbors since $d_{U_Z}(u_2) = 3$. 
    \item We have $T(l)\le T(l-1) + T(l-5)$.
    \item $T(l) \le T(l-5)+T(l-1)$ gives $x^5 = x^4+1$ with root $r<1.3248$.
\end{itemize}

\textbf{Branching Rule (B5):} If there exists $v\in U_Z$ with
$d_{U_C}(v)=3$, let $N_{U_C}(v)=\{u_1,u_2,u_3\}$ and for $i\in\{1,2,3\}$ let
$X_i = \{ w\in U_C\setminus\{u_1,u_2,u_3\}\ |\ N(w)\cap N(u_i)\cap (U_Z \setminus \{ v \}) \neq
\emptyset\}$, that is, $X_i$ is the set of vertices of $U_C$ that share a
neighbor  with $u_i$ in $U_Z$ other than $v$.  Then we branch on the following
$8$ instances: (i) the instance $(D_f, P_f\cup\{u_1,u_2,u_3\}, P_f'\cup\{v\}\}$
(ii) for $i\in\{1,2,3\}$, we produce the instances $(D_f\cup\{u_i\},
P_f\cup(\{u_1,u_2,u_3\}\setminus\{u_i\}),P_f')$ (iii) for $i,j\in\{1,2,3\}$,
with $i<j$ we produce the instances $(D_f\cup\{u_i,u_j\},
P_f\cup(\{u_1,u_2,u_3\}\setminus\{u_i,u_j\})\cup X_i\cup X_j, P_f')$ (iv) we
produce the instance $(D_f\cup\{u_1,u_2,u_3\},P_f\cup X_1\cup X_2\cup
X_3,P_f')$.

\begin{itemize}
    \item Branching Rule B5 is correct since we have the following cases: (i) all vertices $u_1, u_2$ and $u_3$ are in $P$; (ii) or exactly one of them is in $D$; (iii) or exactly two of them are in $D$; (iv) or all of them are in $D$. Note first that $u_1$, $u_2$ and $u_3$ only share $v$ as neighbor in $U_Z$ since Branching Rule B4 is not triggered. Branching Rule B5 is correct by the following arguments:
    \begin{itemize}
        \item[(i)] $v$ must be dominated so it must be the case that $v \in P$;
        \item[(ii)] The two vertices not in $D$ necessarily are in $P$;
        \item[(iii)] Since $u_i$ and $u_j$ share $v$ as common neighbor and both have exactly three neighbors in $U_Z$, the vertices of $X_i$ and $X_j$ have to be in $P$ because otherwise $u_i$ and $u_j$ do not have two private neighbors;
        \item[(iv)] For the same reason, the vertices of $X_1$, $X_2$ and $X_3$ have to be in $P$. 
    \end{itemize}
    \item We have $T(l) \le T(l-4)+3T(l-6)+3T(l-12)+T(l-14)$. Indeed we have: (i) the branch where $u_1,u_2,u_3\in P_f$, which also effectively eliminates $v$; (ii) the branch where $u_1\in D_f$ and $u_2,u_3\in P_f$, which also dominates $N_{U_Z}(u_1)$ (plus two more symmetric branches); (iii) the branch where $u_1,u_2\in D_f$ and $u_3\in P_f$ (plus two more symmetric branches). Here we first observe that $\{v,u_1,u_2,u_3\}\cup\left( (N(u_1)\cup N(u_2))\cap U_Z\right)$ contains exactly $8$ distinct vertices, because $d_{U_Z}(u_1) = d_{U_Z}(u_2) = 3$, while $N_{U_Z}(u_1)$ and $N_{U_Z}(u_2)$ share exactly one common element ($v$), since Rule B4 does not apply. In addition to eliminating these $8$ vertices, this branch also eliminates $X_1\cup X_2$. We argue that $X_1$ alone contains at least $4$ additional vertices, distinct from the $8$ eliminated vertices. Let $N_{U_Z}(u_1) = \{v, w_1,w_2\}$. We know that $d_{U_C}(w_1), d_{U_C}(w_2)\ge 3$, since Rule B3.2 did not apply. Furthermore, since $w_1,w_2$ share $u_1$ as a common neighbor in $U_C$, they cannot share another, as Rule B4 would apply. In addition, neither $w_1$ nor $w_2$ can be connected to $u_2$ or $u_3$, since together with $v,u_1$ this would active Rule B4. Hence, we eliminate at least $12$ vertices for each of these three branches. Finally, the case (iv) where $u_1,u_2,u_3\in D_f$ is similar, except we also eliminate two additional neighbors of $u_3$ in $U_Z$ which now become dominated.
    \item $T(l) \le T(l-4) +  3T(l-6) + 3T(l-12) + T(l-14)$ gives $x^{14}=x^{10}+3x^8+3x^2+1$ with root $r< 1.3252$.
\end{itemize}

\textbf{Branching Rule (B6):} Consider $u \in U_C$ and let
$N_{U_Z}(u) = \{v_1,v_2,v_3\}$. We branch on the following instances: $(D_f,
P_f\cup\{u\}, P_f')$, $(D_f\cup\{u\}, P_f\cup (N_{U_C}(v_1)\setminus\{u\}),
P_f')$, and $(D_f\cup\{u\},P_f\cup ((N_{U_C}(v_2)\cup N_{U_C}(v_3) )\setminus
\{u\}),P_f')$.

\begin{itemize}
    \item Branching Rule B6 is correct because if $u \in D$, then either $v_1$ is one of its private neighbors, or both $v_2$ and $v_3$ are its private neighbors. 
    \item We have $T(l) \le T(l-1) + T(l-7) + T(l-10)$. Here we use the fact that since Rule B5 does not apply, $d_{U_C}(v_i) \ge 4$ and also that since Rule B4 does not apply, $N(v_i)\cap N(v_j)\cap U_C = \{u\}$ for all $i,j\in\{1,2,3\}$. Hence, the branch where $u\in D_f$ and $v_1$ is a private neighbor of $u$ forces three more vertices of $U_C$ into $P_f$, and the branch where $v_2,v_3$ are private neighbors of $u$ forces six more vertices of $U_C$ into $P_f$.
    \item $T(l) \le T(l-1) + T(l-7) + T(l-10)$ gives $x^{10} = x^9 + x^3 +1$ with
root $r< 1.3001$.
\end{itemize}

\smallskip

Our algorithm applies the above Rules in order as long as possible. Since we have proved the correctness of our Rules individually, we can explain what happens when no Rule is applicable. But first, let us establish a useful property. 

\begin{lemma}\label{lem:nou} If none of the Rules can be applied then
$U=\emptyset$. \end{lemma}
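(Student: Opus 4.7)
The plan is to work through the rules in reverse order, observing what each one forbids, and conclude that when no rule is applicable the sets $U_C$ and $U_Z$ must both be empty.

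First I would argue that $U_C = \emptyset$. The rules R1, B1, B2.1, B2.2 together control the value of $d_{U_Z}(u)$ for $u \in U_C$: if R1 is not applicable then every $u \in U_C$ has $d_{U_Z}(u) \ge 2$; if B1 is not applicable then $d_{U_Z}(u) \le 3$; and if B2.1 and B2.2 are not applicable then $d_{U_Z}(u) = 3$. So any $u \in U_C$ surviving all earlier rules satisfies $d_{U_Z}(u) = 3$, which means Branching Rule B6 is applicable to $u$ (since B6 only requires $u \in U_C$ with $N_{U_Z}(u)$ of size $3$). Hence if B6 is also inapplicable, then $U_C$ must be empty.

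Next I would argue that $U_Z = \emptyset$. Indeed, if R2 is not applicable then every $v \in U_Z$ must satisfy $d_{U_C}(v) \ge 1$. But this is impossible once $U_C = \emptyset$, so there cannot be any vertex in $U_Z$ either. Combining both conclusions gives $U = U_C \cup U_Z = \emptyset$, as desired.

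No subtle obstacle is anticipated here: the lemma is essentially a bookkeeping check that the list of rules is exhaustive. The only thing to be careful about is the order of application, since rules are tried sequentially and the hypotheses of later rules (in particular the fact that $d_{U_Z}(u) = 3$ and $d_{U_C}(v) \ge 3$ when reaching B6) depend on earlier ones being inapplicable; once this is spelled out, the conclusion is immediate.
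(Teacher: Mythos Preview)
Your proof is correct and follows essentially the same approach as the paper: the paper's argument is that R1, B1, B2.2, B6 together cover every possible value of $d_{U_Z}(u)$ for $u\in U_C$, so $U_C=\emptyset$, after which R2 forces $U_Z=\emptyset$. Your inclusion of B2.1 is harmless but redundant (B2.2 alone handles the case $d_{U_Z}(u)=2$), and the side remark about needing $d_{U_C}(v)\ge 3$ for B6 is inaccurate---B6 applies to any $u\in U_C$ once $d_{U_Z}(u)=3$---but neither point affects the argument.
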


\begin{proof}
Observe that by applying rules R1, B1, B2.2, B6, we eventually eliminate all
vertices of $U_C$, since these rules alone cover all the cases for $d_{U_Z}(u)$
for any $u\in U_C$. So, if none of these rules applies, $U_C$ is empty. But
then applying R2 will also eliminate $U_Z$, which makes all of $U$ empty.
\end{proof}

\textbf{Step 3:} When $U$ is empty, reduce the problem to \textsc{Edge Cover}.
%\textbf{Step 3:} When $U$ is empty, reduce the problem to \textsc{Maximum Matching}. 

We now show how to complete the solution in polynomial time.

\begin{lemma}\label{lemmaconstructionpoly}
Let $(D_f, P_f, P_f')$ be a good tuple such that no Rule can be applied. Then it is possible to construct in polynomial time a mixed dominating set of size at most $|D| + |M|$.
\end{lemma}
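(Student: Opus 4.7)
The plan is to reduce the completion task to a minimum edge cover computation on the set $S = P_f \cup P_f'$. By \Cref{lem:nou}, when no rule applies every vertex of $V$ lies in $D_f \cup P_f \cup P_f' \cup (N(D_f) \cap Z)$. Since the tuple is good, there is an optimal nice mixed dominating set $D \cup M$ with associated partition $V = D \cup P \cup I$ satisfying $D_f \subseteq D$, $P_f \subseteq P \cap C$, and $P_f' \subseteq P \cap Z$. I would first observe that in fact $D_f = D$: by \Cref{lemmavertexcover} we have $D \subseteq C$, and $C = D_f \cup P_f$ because $U_C = \emptyset$; any vertex of $D \setminus D_f$ would therefore lie in $P_f \subseteq P$, contradicting $D \cap P = \emptyset$.

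The algorithm outputs $D_f$ together with a minimum-size edge set $M'$ of $G$ whose vertex set contains $S$. This $M'$ is computed in polynomial time: use Edmonds' algorithm to find a maximum matching $\mu^* \subseteq E(G[S])$, and for each vertex of $S$ unmatched by $\mu^*$ add one incident edge of $G$ (which exists since $G$ has no isolated vertices); the resulting $M'$ has $|M'| = |S| - \mu(G[S])$. To see that $D_f \cup M'$ is a valid mixed dominating set: every vertex of $D_f \cup S$ is dominated by itself or by being in $V(M')$, the remaining vertices lie in $N(D_f) \cap Z$ and are dominated by $D_f$, and $D_f \cup V(M') \supseteq D_f \cup P_f = C$ is a vertex cover, so every edge is dominated.

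For the size bound, let $P = V(M)$. Then $P \cap C = C \setminus D = P_f$, and writing $P \cap Z = P_f' \cup Q$ with $Q \subseteq (N(D_f) \cap Z) \setminus P_f'$ gives $|P| = |S| + |Q|$. Since $M$ is an edge cover of $G[P]$, we have $|M| \geq |P| - \mu(G[P])$. Moreover, any matching in $G[P]$ uses at most $|Q|$ edges incident to $Q$, hence $\mu(G[P]) \leq \mu(G[S]) + |Q|$. Combining these inequalities gives $|M'| = |S| - \mu(G[S]) \leq |S| + |Q| - \mu(G[P]) \leq |M|$, which together with $|D_f| = |D|$ yields $|D_f| + |M'| \leq |D| + |M|$, as required.

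The main obstacle is precisely this size bound, since the fictional optimum may place some vertices of $N(D_f) \cap Z$ into $P$ (rather than into $I$), making its $P$ strictly larger than $S$; the matching inequality $\mu(G[P]) \leq \mu(G[S]) + |Q|$ is the key tool that absorbs this slack and shows that our optimal edge cover on $S$ cannot exceed the optimum's edge cover on the larger $P$.
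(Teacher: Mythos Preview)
Your proposal is correct and follows the same strategy as the paper: compute a minimum edge cover of $S=P_f\cup P_f'$ and return it together with $D_f$. The paper's own proof is much terser—it simply asserts that since $S\subseteq P$ and $M$ is an edge cover of $G[P]$, the minimum edge cover $M'$ of $G[S]$ satisfies $|M'|\le |M|$, and it uses only $|D_f|\le |D|$ rather than your stronger observation $D_f=D$.

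Your version adds genuine rigor that the paper glosses over. First, by allowing $M'$ to use arbitrary edges of $G$ (not just edges of $G[S]$) you sidestep the possibility that $G[S]$ has isolated vertices, which the paper does not address. Second, your matching inequality $\mu(G[P])\le \mu(G[S])+|Q|$ together with the Gallai-type bound $|M|\ge |P|-\mu(G[P])$ actually \emph{proves} the monotonicity step $|M'|\le |M|$ that the paper simply states; this step is not as immediate as the paper makes it sound, since an edge cover of $G[P]$ need not restrict to an edge cover of $G[S]$. Third, you verify validity of $D_f\cup M'$ explicitly, which the paper omits. So the route is the same, but your argument fills real gaps.
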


%\begin{lemma}\label{lemmaconstructionpoly} Let  $(D_f, P_f, P'_f)$ be a good
%tuple such that no Rule can be applied. Then it is possible to construct in
%polynomial time a mixed dominating set of size $|D|+\frac{|P|}{2}$.
%\end{lemma}

\begin{proof}
Because no Rule can be applied, by Lemma \ref{lem:nou}, we have that $U = V \setminus ( D_f \cup P_f \cup P_f' \cup (N(D_f) \setminus C)) = \emptyset$.

Let $M'$ be a minimum edge cover of $G[P_f \cup P_f']$. Then, we claim that $|D| + |M| \geq |D_f| + |M'|$. First, $|D| \geq |D_f|$ because $D_f \subseteq D$. We now claim that $|M| \geq |M'|$. Note that $P_f \subseteq P \cap C$ and $P_f' \subseteq P \cap Z$, so $P_f \cup P_f' \subseteq P$. $M$ is an edge cover of $G[P]$, and $M'$ is a minimum edge cover of $G[P_f \cup P_f']$, with $P_f \cup P_f' \subseteq P$, so necessarily $|M| \geq |M'|$. 

Since \textsc{Edge Cover} is in P, we have constructed in polynomial time a mixed dominating set of size at most $|D| + |M|$. 
\end{proof}

% \begin{proof}
% \end{proof}

We can now prove the main result of this section :

\begin{theorem}\label{theoremexact} \textsc{Mixed Dominating Set} can be solved in time
$O^*(1.912^n)$ and polynomial space.  \end{theorem}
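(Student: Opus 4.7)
The plan is to assemble the three algorithmic pieces built up above and account for the overall cost. By Lemma~\ref{lemmanice} we may assume that the optimal solution is a nice mixed dominating set $D \cup M$, inducing a partition $V = D \cup P \cup I$, and by Lemma~\ref{lemmavertexcover} some minimal vertex cover $C$ of $G$ satisfies $D \subseteq C \subseteq D \cup P$. So the overall algorithm is exactly the three-step procedure described in the section: Step~1 enumerates all minimal vertex covers of $G$, and for each $C$ Step~2 runs the branching that produces tuples $(D_f, P_f, P_f')$, then Step~3 completes each leaf using Lemma~\ref{lemmaconstructionpoly}. The output is the best solution seen. Correctness is immediate because the enumeration must hit the correct $C$, and within that call at least one branch remains ``good'' all the way to a leaf, where Lemma~\ref{lemmaconstructionpoly} returns a mixed dominating set of size at most $|D| + |M|$.

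For the running time, I would account separately for the two main contributions. Step~1 produces at most $3^{n/3}$ minimal vertex covers by the Moon--Moser bound, and can be implemented in polynomial delay and polynomial space (e.g., by enumerating maximal independent sets in the complement). For each fixed $C$, I would then argue that the branching tree has size at most $O^*(1.3252^n)$, using the measure $l$ introduced just before the Branching Rules and the fact that $l \le n$. Rules R1 and R2 only decrease $l$, and the individual recurrences already verified for B1--B6 all yield characteristic roots strictly below $1.3252$; the tightest one, attained by Rule~B5, is the root of $x^{14} = x^{10} + 3x^8 + 3x^2 + 1$. By Lemma~\ref{lem:nou}, at every leaf of the branching tree we have $U = \emptyset$, so Step~3 applies in polynomial time.

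Multiplying the two contributions gives a total running time of
\[
O^*\!\left(3^{n/3} \cdot 1.3252^n\right) \le O^*(1.912^n),
\]
since $3^{1/3} \cdot 1.3252 < 1.912$. The space bound is straightforward: the branching recursion has polynomial depth and stores only the tuple $(D_f, P_f, P_f')$, and the minimal vertex cover enumeration can likewise be performed in polynomial space, so the whole algorithm uses polynomial space.

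I do not expect any serious obstacle beyond bookkeeping, since the Rule correctness, the individual branching recurrences, the termination condition, and the polynomial-time completion via edge cover have already been established. The only remaining tasks are to take the maximum characteristic root across the Branching Rules, confirm that it is the one from B5, and verify the numerical inequality $3^{1/3} \cdot 1.3252 < 1.912$. The one subtle point worth double-checking is that the branching analysis really does bound the search-tree size by $1.3252^{l} \le 1.3252^n$ under the invariants maintained by the Reduction Rules, i.e.\ that each Branching Rule strictly decreases $l$ by the amount claimed in its individual analysis.
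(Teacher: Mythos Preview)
Your proposal is correct and follows essentially the same route as the paper: enumerate minimal vertex covers via Moon--Moser, bound the branching tree by the worst characteristic root (from B5, just below $1.3252$), multiply $3^{n/3}\cdot 1.3252^n$, and finish each leaf with Lemma~\ref{lemmaconstructionpoly}. Your treatment of the polynomial-space claim is in fact slightly more explicit than the paper's, which asserts it in the theorem statement but does not spell out the polynomial-delay enumeration argument.
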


\begin{proof}
The algorithm first enumerates all minimal vertex covers $C$; then applies all
Rules exhaustively; and then for each branch invokes
\Cref{lemmaconstructionpoly}. In the end we output the best solution found. 

By \Cref{lemmanice} and \Cref{lemmavertexcover} we obtain (assuming we have already taken isolated vertices) that there exists an optimal nice
mds partition $V=D\cup P\cup I$ and a minimal vertex cover with $D\subseteq
C\subseteq D\cup P$, so consider the execution of the algorithm on $C$. 
We have proven that one of the branches will end up with a good tuple, 
%By
%\Cref{lem:correct} one of the branches will end up with a good tuple, 
and by
\Cref{lem:nou} when we can no longer apply any Rules, $U$ is empty, so we
correctly solve the resulting instance in polynomial time by
\Cref{lemmaconstructionpoly}. Hence, the algorithm produces a correct solution.

Let us now analyze the running time. First, enumerating all minimal vertex cover takes time at most $O^*(3^{n/3})$, which is also an upper bound on the number of such covers by a result of \cite{moon1965cliques}. Moreover, we observe that we can decide if a Rule applies in polynomial time, and the algorithm of \Cref{lemmaconstructionpoly} runs in polynomial time. We therefore only need to bound the number of subinstances the branching step will produce, as a function of $n$. 

Of all the branching vectors, the worst case is given by Branching Rule B5, which leads to a complexity of $1.3252^l$. Taking into account the cost of enumerating all minimal vertex covers and the fact that $l \leq n$, the running time of our algorithm is $O^*(3^{n/3} \cdot 1.3252^n) = O^*(1.912^n)$. 
\end{proof}
\section{FPT Algorithm}

In this section, we describe an algorithm for the \textsc{Mixed Dominating
Set} problem parameterized by the solution size $k$ running in time $O^*(3.510^k)$. Let us give an overview of our
algorithm. Consider an instance $(G = (V,E), k)$ of the \textsc{Mixed
Dominating Set} problem parameterized by $k$, and fix, for the sake of the analysis, a solution of
size $k$ which is a nice mixed dominating set. If a solution of size $k$
exists, then a nice solution must exist by Lemma \ref{lemmanice} (assuming
without loss of generality that $G$ has no isolated vertices), so suppose it
gives the nice mds partition $V = D \cup P \cup I$. Note that for such a solution $D \cup M$ of size $k$, we have $k \geq |D| + |P|/2$ since $|M| \geq |P|/2$. 

Our algorithm begins by performing a branching step similar in spirit to that
of the algorithm of \Cref{theoremexact}, trying to guess a part of this
partition.  In particular, we gradually build up two disjoint sets $D_f, P_f$
which store the vertices that must belong to $D$ and $P$ respectively. Let
$U=V\setminus (D_f\cup P_f)$ be the set of ``undecided vertices'' and,
furthermore, let $U^*=U\setminus N(D_f)$ be the set of undecided vertices which
are not currently dominated by the solution. Our algorithm proceeds in the
following steps: (i) first, we branch with the goal of eliminating $U^*$, that
is, with the goal of finding a partial solution that dominates all vertices
(ii) then, because the considered solution is nice, we observe that we cannot
place any more vertices in $D_f$; we therefore perform a simple ``vertex
cover''-type branching in $G[U]$, until we arrive at a situation where the
maximum degree of $G[U]$ is $1$ (iii) then, we invoke a result of
\cite{XiaoS19} to complete the solution in polynomial time.  As with the
algorithm of \Cref{theoremexact}, we use the fact that the sought solution is
nice to speed up the branching on low-degree vertices.

\noindent \textbf{Step 1:} Branch to eliminate $U^*$.

Recall that we have fixed for the analysis an optimal nice mds partition
$V=D\cup P\cup I$.  As with the algorithm of \Cref{theoremexact}, it will be
convenient to describe a recursive algorithm which is given two disjoint sets
of vertices $D_f,P_f$. We will say that the sets $(D_f,P_f)$ are \emph{good} if
$D_f\subseteq D$ and $P_f\subseteq P$. Clearly, these conditions are satisfied
if $D_f=P_f=\emptyset$. We will describe a series of Rules, which must be
applied exhaustively, always selecting the first Rule that can be applied.  For
correctness, we will show that for each branching Rule, if the current instance
is characterized by a good pair $(D_f,P_f)$, at least one of the produced
instances is also good. When no Rule can be applied, we will proceed to the
next step.

Recall that we denote $U=V\setminus (D_f\cup P_f)$ and $U^*=U\setminus N(D_f)$.
Our strategy will be to branch in a way that eliminates $U^*$ as quickly as
possible because, as we will see in the next step, once this is done the
problem becomes much easier. We begin branching from low-degree vertices, which
will allow us to assume that all remaining vertices are high-degree as we
consider later Rules. Here, for a vertex $u\in U^*$ we are mostly interested in
its degrees in $G[U]$ and in $G[U^*]$. Note that $d_{U^*}(u) \le
d_{U}(u)$ since $U^* \subseteq U$.

As in the algorithm of \Cref{theoremexact}, we will present each Rule individually and directly after explain why it is correct and its associated running-time. 
%to ease presentation while having a consistent analysis.
Recall for the sake of the analysis that an instance is good if $D_f \subseteq D$ and $P_f \subseteq P$.

\smallskip

\noindent \textbf{Sanity Check Rule:} If $|D_f|+\frac{|P_f|}{2}>k$, reject. If
there exists $u\in D_f$ such that 
$|N_{U^*}(u)| \leq 1$, reject.
%$|N_{U^*}(u)\setminus
%N(D_f\setminus\{u\})|\le 1$, reject.

\begin{itemize}
    \item The Sanity Check Rule will reject if either the currently decided vertices in $D_f$ and $P_f$ have total cost more than $k$ (which implies that this is not a good instance, as the correct partition has cost at most $k$); or if a vertex $u\in D_f$ has at most one private neighbor in $U^*$. Since the number of private neighbors of $u$ can only diminish if we add vertices to $D_f$, if $u\in D$ this would contradict the niceness of the partition $D\cup P\cup I$. Hence, in this case also the current instance is not good.
\end{itemize}

\noindent \textbf{Reduction Rule (R1):} If there exists $u \in U^*$ with
$d_U(u) = 0$, then put $u$ in $P_f$, that is, recurse on the instance $(D_f,
P_f\cup\{u\})$. 

\begin{itemize}
    \item If the current instance is good, then $u\not\in D$ (because it would not have two private neighbors) and $u\not\in I$ (because it would not be dominated). Hence, the new instance is also good.
\end{itemize}

\smallskip

% Intuitively, the Sanity Check rule verifies if the current instance is clearly
% not good, either because we have exceeded our budget, or because a vertex of
% $D_f$ cannot have two private neighbors.

Now that we have presented our Sanity Check Rule and our only Reduction Rule which we first apply in our algorithm, we will describe the Branching Rules. Thus, we need first to define our measure of progress. We define it to be $l = 2k - 2|D_f| - |P_f|$. Initially, $l = 2k$, and we observe that because of the Sanity Check Rule in all produced instances we have $l \geq 0$. We will therefore upper bound the number of produced instances by measuring how much each Branching Rule decreases $l$. Let $T(l)$ be the maximum number of branches produced for an instance where measure has value $l$. We now consider each Branching Rule individually:

% Now that we have presented the two Reduction Rules which first apply in our algorithm, we will describe the Branching Rules. Thus, we need first to define our measure of progress. We define it to be the size of the set $\{ u\in U_C\ \mid
% d_{U_Z}(u)\ge 2\} \cup \{ u\in U_Z\ \mid d_{U_C}(u)\ge 1\}$. In other words, we
% count the undecided vertices of $U_C$ that have at least two undecided,
% non-dominated vertices in $U_Z$, and the undecided, non-dominated vertices of $U_Z$
% that have at least one undecided neighbor in $C$. This is motivated by the fact
% that undecided vertices that do not respect these degree bounds are eliminated
% by the Reduction Rules and hence do not affect the running time. Let $l$ denote
% the number of the vertices that we counted according to this measure. Clearly,
% $l\le n$. Let $T(l)$ be the maximum number of branches produced for an instance where the
% measure has value $l$. We now consider each Branching Rule individually:

\smallskip

\noindent \textbf{Branching Rule (B1):} If there exists $u \in U^*$ with
$d_U(u) = 1$, then let $N_U(u) = \{ v \}$. Branch on the following two
subinstances: $(D_f, P_f \cup \{ u \})$ and $(D_f \cup \{ v \}, P_f)$. 

\begin{itemize}
    \item We note that $u\not\in D$, because it would not have two private neighbors. If $u\in I$, then $v\in D$, because $u$ must be dominated. Hence, one of the branches is good.
    \item We have $T(l)\le T(l-1)+T(l-2)$.
    \item $T(l) \leq T(l-1) + T(l-2)$ gives $x^2 = x + 1$ with root $r < 1.6181$.
\end{itemize}

\smallskip

We are now at a situation where all vertices $u \in U^*$ have $d_U(u) \geq 2$. 

\smallskip

\noindent \textbf{Branching Rule (B2.1):} If there exists $u \in U^*$ with
$d_U(u) = 2$ and $d_{U^*}(u) \in \{ 0, 1 \}$, then let $N_U(u) = \{ v_1, v_2 \}$.
Branch on the subinstances: $(D_f, P_f \cup \{ u \})$, $(D_f \cup \{ v_1 \},
P_f \cup \{ v_2 \})$, $(D_f \cup \{ v_2 \}, P_f \cup \{ v_1 \})$ and $(D_f \cup
\{ v_1, v_2 \}, P_f)$.  

\begin{itemize}
    \item We again have $u\not\in D$, because it would not have two private neighbors. If $u\in I$, then $\{v_1,v_2\}\subseteq D\cup P$ and we consider all such possibilities, except $v_1,v_2\in P$, because $u$ must be dominated.
    \item We have $T(l)\le T(l-1) + 2T(l-3) + T(l-4)$.
    \item $T(l) \leq T(l-1) + 2T(l-3) + T(l-4)$ gives $x^4 = x^3 + 2x + 1$ with root $r < 1.7944$.
\end{itemize}

\smallskip

Before presenting Branching Rule 2.2, we make a simple observation we will use several times in the rest of the algorithm: for a vertex $u \in U^*$ with two neighbors $v_1, v_2 \in U$, if we put $u$ in $D_f$ with $v_1$ and $v_2$ its private neighbors, then we must put $(N_U(v_1) \cup N_U(v_2)) \setminus \{ u \}$ in $P_f$ in order to have $v_1$ and $v_2$ the private neighbors of $u$. 

Let us also introduce another helpful definition. For $v_1,v_2\in U$, we will say
that $\{v_1,v_2\}$ is a \textit{feasible pair} if $|N_{U^*}(v_1)\setminus N(v_2)|\ge
2$ and $|N_{U^*}(v_2)\setminus N(v_1)|\ge 2$. Informally, $\{v_1,v_2\}$ is
feasible if it is possible that both $v_1,v_2\in D$. In other words,
$\{v_1,v_2\}$ is \emph{not} feasible, if placing both vertices in $D_f$ would
immediately activate the Sanity Check rule because one of the two vertices
would not have enough private neighbors.

\smallskip

\noindent \textbf{Branching Rule (B2.2):} If there exists $u \in U^*$ with
$d_U(u) = 2$, then let $N_U(u) = \{ v_1, v_2 \}$. Branch on the following
subinstances: $(D_f \cup \{ u \}, P_f \cup ((N_U(v_1)  \cup N_U(v_2)) \setminus
\{ u \}))$, $(D_f, P_f \cup \{ u \})$, $(D_f \cup \{ v_1 \}, P_f \cup \{ v_2
\})$, $(D_f \cup \{ v_2 \}, P_f \cup \{ v_1 \})$ and $(D_f \cup \{ v_1, v_2 \},
P_f)$.

\begin{itemize}
    \item We have the same cases as before, but now it is possible that $u\in D$. However, in this case $v_1,v_2$ must be private neighbors of $u$, hence $(N_U(v_1)\cup N_U(v_2))\setminus \{u\}$ must be a subset of $P$.
    \item Since B2.1 does not apply, we have $v_1,v_2\in U^*$, so $d_U(v_1),d_U(v_2)\ge 2$.  We consider the following cases:
	\begin{itemize}
	    \item If $\{v_1,v_2\}$ is not a feasible pair, that is (without loss of generality) we have $|N_{U^*}(v_1)\setminus N(v_2)|\le 1$. First, we note that $d_U(v_1)\ge 2$, since Rule B1 did not apply, so $|N_U(v_1)\setminus \{u\}|\ge 1$. Also, the Sanity Check Rule is activated for the instance where $v_1,v_2\in D_f$.  Taking into account the remaining instances we have: $T(l)\le T(l-1) + 3T(l-3)$.
	    \item If $\{v_1,v_2\}$ is a feasible pair then $|(N_{U^*}(v_1)\cup N_{U^*}(v_2))\setminus \{u\}|\ge 4$, because each of $v_1,v_2$ has two non-dominated neighbors which are not neighbors of the other. We have $T(l) \le T(l-1)+2T(l-3)+T(l-4)+T(l-6)$.
	\end{itemize}
	\item 
	\begin{itemize}
	    \item $T(l) \leq T(l-1) + 3T(l-3)$ gives $x^3 = x^2 + 3$ with root $r < 1.8638$.
	    \item $T(l) \leq T(l-1) + 2T(l-3) + T(l-4) + T(l-6)$ gives $x^6 = x^5 + 2x^3 + x^2 + 1$ with root $r < 1.8199$.
	\end{itemize}
\end{itemize}

\smallskip

We are now at a situation where all vertices $u \in U^*$ have $d_U(u) \geq 3$.
% Let us introduce another helpful definition. For $v_1,v_2\in U$, we will say
% that $\{v_1,v_2\}$ is a \textit{feasible pair} if $|(N(v_1)\cap U^*)\setminus N(v_2)|\ge
% 2$ and $|(N(v_2)\cap U^*)\setminus N(v_1)|\ge 2$. Informally, $\{v_1,v_2\}$ is
% feasible if it is possible that both $v_1,v_2\in D$. In other words,
% $\{v_1,v_2\}$ is \emph{not} feasible, if placing both vertices in $D_f$ would
% immediately activate the Sanity Check rule because one of the two vertices
% would not have enough private neighbors.

\smallskip

\noindent \textbf{Branching Rule (B3.1):} If there exists $u \in U^*$ with
$d_U(u) = 3$ and $d_{U^*}(u) \le 2$, then let $N_U(u) = \{v_1,v_2,v_3\}$ and
let $v_3\in U\setminus U^*$. We branch on $(D_f, P_f\cup\{u\})$,
$(D_f\cup\{u\}, P_f\cup ((N_U(v_1)\cup N_U(v_2))\setminus \{u\}))$, and for each
non-empty $S\subseteq \{v_1,v_2,v_3\}$ we branch on $(D_f\cup S,
P_f\cup(\{v_1,v_2,v_3\}\setminus S))$.

\begin{itemize}
    \item We observe that if $u\in D$, then $v_1,v_2$ must be its private neighbors, so again $(N_U(v_1)\cup N_U(v_2))\setminus \{u\}$ must be a subset of $P$. If $u\in I$ we consider all partitions of $N_U(u)$ into $D$ and $P$, while ensuring that $u$ is dominated.
    \item First note that if $v_i\in U^*$, then $d_U(v_i)\ge 3$, since Rules B1-B2.2 do not apply, hence $|N_U(v_i)\setminus\{u\}|\ge 2$. Consider the following subcases:
	\begin{itemize}
	    \item $d_{U^*}(u)\le 1$. Then the branch where $u\in D_f$ is immediately eliminated by the Sanity Check Rule. For the remaining branches we have $T(l) \le T(l-1) + 3T(l-4) + 3T(l-5) + T(l-6)$.
	    \item $d_{U^*}(u)=2$, so $N_{U^*}(u) = \{ v_1, v_2 \}$, and $\{v_1,v_2\}$ is a feasible pair. Then $|(N_U(v_1)\cup N_U(v_2))\setminus\{u\}| \ge 4$. We have $T(l)\le T(l-1)+ 3T(l-4) + 3T(l-5) + 2T(l-6)$
	    \item $d_{U^*}(u)=2$, so $N_{U^*}(u) = \{ v_1, v_2 \}$, and $\{v_1,v_2\}$ is not a feasible pair. Then the Sanity Check Rule eliminates the instances where $\{ v_1, v_2 \} \subseteq D$. Taking into account the remaining instances we have $T(l)\le T(l-1) + 4T(l-4) + 2T(l-5)$.
	\end{itemize}
	\item 
	\begin{itemize}
	    \item $T(l) \leq T(l-1) + 3T(l-4) + 3T(l-5) + T(l-6)$ gives $x^6 = x^5 + 3x^2 + 3x +1 $ with root $r < 1.8205$.
        \item $T(l) \leq T(l-1) + 3T(l-4) + 3T(l-5) + 2T(l-6)$ gives $x^6 = x^5 + 3x^2 + 3x + 2$ with root $r < 1.8393$.
        \item $T(l) \leq T(l-1) + 4T(l-4) + 2T(l-5)$ gives $x^5 = x^4 + 4x + 2$ with root $r < 1.8305$.
	\end{itemize}
\end{itemize}

\noindent \textbf{Branching Rule (B3.2):} If there exists $u\in U^*$ with
$d_U(u)=3$ such that there exist at least two feasible pairs in $N_U(u)$, then
we do the following. Let $N_U(u) = \{v_1,v_2,v_3\}$. For each
$i,j\in\{1,2,3\}$, with $i<j$, branch on $(D_f\cup\{u\}, P_f\cup ((N_U(v_i)\cup
N_U(v_j))\setminus \{u\}))$. Furthermore, branch on the instances $(D_f,
P_f\cup\{u\})$ and, for each non-empty $S\subseteq \{v_1,v_2,v_3\}$ on the
instance $(D_f \cup S, P_f \cup( \{v_1,v_2,v_3\} \setminus S))$.

\begin{itemize}
    \item We branch in a similar fashion as in Branching 3.1, except that for all $i,j\in\{1,2,3\}$ with $i<j$, we consider the case that $v_i,v_j$ are private neighbors of $u$, when $u\in D$.
    \item We have $v_1,v_2,v_3\in U^*$. We again note that if $\{v_i,v_j\}$ is feasible then $|(N_U(v_i)\cup N_U(v_j))\setminus \{u\}|\ge 4$. Therefore, a branch corresponding to a feasible pair diminishes $l$ by at least $6$. We consider the subcases:
	\begin{itemize}
	    \item All three pairs from $\{v_1,v_2,v_3\}$ are feasible. Then we get $T(l)\le T(l-1) + 3T(l-4)+ 3T(l-5) + 4T(l-6)$.
	    \item Two of the pairs from $\{v_1,v_2,v_3\}$ are feasible, and one, say $\{v_1,v_2\}$ is not feasible. Then the Sanity Check rule will eliminate the branches that have $\{ v_1, v_2 \} \subseteq D$. We therefore get $T(l)\le T(l-1) + 4T(l-4) + 2T(l-5) + 2T(l-6)$.
	\end{itemize}
	\item
	\begin{itemize}
	    \item $T(l) \leq T(l-1) + 3T(l-4) + 3T(l-5) + 4T(l-6)$ gives $x^6 = x^5 + 3x^2 + 3x + 4$ with root $r < 1.8734$. 
        \item $T(l) \leq T(l-1) + 4T(l-4) + 2T(l-5) + 2T(l-6)$ gives $x^6 = x^5 + 4x^2 + 2x + 2$ with root $r < 1.8672$. 
	\end{itemize}
\end{itemize}

\noindent \textbf{Branching Rule (B3.3):} If there exists $u\in U^*$ with
$d_U(u)=3$, let $N_U(u)=\{v_1,v_2,v_3\}$ and branch on the following instances:
$(D_f, P_f\cup\{u\})$, $(D_f\cup\{u\}, P_f\cup (N_U(v_1)\setminus \{u\}))$,
$(D_f\cup\{u\}, P_f\cup ((N_U(v_2)\cup N_U(v_3))\setminus \{u\}))$, and for each
non-empty $S\subseteq \{v_1,v_2,v_3\}$ branch on $(D_f\cup S,
P_f\cup(\{v_1,v_2,v_3\}\setminus S))$.

\begin{itemize}
    \item We make a variation of the previous branching by arguing that if $u\in D$ then either $v_1$ is its private neighbor, or both $v_2,v_3$ are its private neighbors.
    \item Now at least two pairs in $\{v_1,v_2,v_3\}$ are not feasible (otherwise we would have applied B3.2). Then, the Sanity Check Rule eliminates all branches where $D$ contains an infeasible pair. From the remaining branches we get $T(l)\le T(l-1) + 5T(l-4) + T(l-5)$.
    \item $T(l) \leq T(l-1) + 5T(l-4) + T(l-5)$ gives $x^5 = x^4 + 5x + 1$ with root $r < 1.8603$.
\end{itemize}

\smallskip

We are now at a situation where all  vertices $u \in U^*$ have $d_U(u) \geq 4$.
The next case we would like to handle is that of a vertex $u\in U^*$ with
$d_U(u)=d_{U^*}(u)=4$. For such a vertex let $N_U(u) =\{v_1,v_2,v_3,v_4\}$.
Let us now give one more helpful definition. For some $i\in\{1,2,3,4\}$, we
will say that $v_i$ is \emph{compatible for} $u$, if $N_{U^*}(v_i)$ contains at
least two vertices which are not neighbors of any $v_j$, for $j\in
\{1,2,3,4\}\setminus\{i\}$. In other words, $v_i$ is compatible if it has two
private neighbors, which will remain private even if we put all of
$\{v_1,v_2,v_3,v_4\}$ in $D_f$. Using this definition, we distinguish the
following two cases:

\smallskip

\noindent \textbf{Branching Rule (B4.1):} If there exists $u \in U^*$ with
$d_U(u) = d_{U^*}(u) =  4$, $N_U(u)=\{v_1,v_2,v_3,v_4\}$, and all $v_i$ are
compatible for $u$, then we branch on the following instances:
$(D_f,P_f\cup\{u\})$; for each $i,j\in\{1,2,3,4\}$ with $i<j$ we branch on
$(D_f\cup\{u\}, P_f\cup((N_U(v_i)\cup N_U(v_j))\setminus \{u\}))$; for each
non-empty subset $S \subseteq N_U(u)$, let $S^c = N_U(u) \setminus S$, branch
on $(D_f \cup S, P_f \cup S^c)$. 

\begin{itemize}
    \item The branching is similar to B3.2: if $u\in D$, then two of its neighbors must be private and we consider all possibilities.
    \item If all $v_i\in U^*$ and all $v_i$ are compatible for $u$ that means that for all $i,j\in\{1,2,3,4\}$, with $i<j$, we have $|(N_U(v_i)\cup N_U(v_j))\setminus \{u\}|\ge 5$, where we use that $v_i$ has at least two neighbors in $U$ which are not connected to $v_j$ (and vice-versa) and that $d_U(v_i), d_U(v_j)\ge 4$ since previous Rules do not apply.  We therefore have $T(l)\le T(l-1) + 4T(l-5) + 6T(l-6) + 10T(l-7) + T(l-8)$.
    \item $T(l) \leq T(l-1) + 4T(l-5) + 6T(l-6) + 10T(l-7) + T(l-8)$ gives $x^8 = x^7 + 4x^3 + 6x^2 + 10x + 1$ with root $r < 1.8595$. 
\end{itemize}

\smallskip

If the previous rule does not apply, vertices $u\in U^*$ with $d_U(u)=4$ have
either $d_{U^*}(u)\le 3$ or a neighbor $v_i\in N_U(u)$ is not compatible
for $u$.

\smallskip

\noindent \textbf{Branching Rule (B4.2):} If there exists $u \in U^*$ with
$d_U(u) = 4$, then let $N_U(u)=\{v_1,v_2,v_3,v_4\}$. Suppose without loss of
generality that for all  $1\le j \le d_{U^*}(u)$ we have $v_j\in U^*$ (that is,
vertices of $N_{U^*}(u)$ are ordered first), and that if there exists a
feasible pair in $N_{U^*}(u)$, then $\{v_1,v_2\}$ is feasible. We produce the
instances: $(D_f, P_f \cup \{ u \})$; for each non-empty subset $S \subseteq
N_U(u)$, let $S^c = N_U(u) \setminus S$, we branch on $(D_f \cup S, P_f \cup
S^c)$; if $d_{U^*}(u)\ge 2$ we produce the branch $(D_f\cup\{u\}, P_f\cup
((N_U(v_1)\cup N_U(v_2))\setminus \{u\}))$; for $3\le j \le d_{U^*}(u)$ we produce
the branch $(D_f\cup\{u\}, P_f\cup (N_{U}(v_j)\setminus\{u\}))$.

\begin{itemize}
    \item Either $u\in P$ (which we consider), or $u\in I$, so we consider all partitions of $N_U(u)$ into $D,P$ that dominate $u$, or $u\in D$. For the latter to happen it must be the case that $d_{U^*}(u)\ge 2$. In that case, either $v_1,v_2$ are both private neighbors, or $v_3$ is a private neighbor (if $v_3\in U^*$), or $v_4$ is a private neighbor (if $v_4\in U^*$).
    \item 
    \begin{itemize}
        \item We first handle the case where $d_{U^*}(u)\le 3$. Recall that, if $v_i\in U^*$, then $d_U(v_i)\ge 4$ (otherwise one of the previous Rules applies), so the (at most two) branches where $u\in D_f$ diminish $l$ by at least $5$.  We have $T(l)\le T(l-1) + 6T(l-5) + 6T(l-6) + 4T(l-7) + T(l-8)$. 
        \item If $d_{U^*}(u)=4$, we note that it cannot be the case that $\{v_1,v_2,v_3,v_4\}\subseteq D$, since this would mean that all $v_i$ are compatible for $u$ and Rule B4.1 would have applied. The branch corresponding to $S=\{v_1,v_2,v_3,v_4\}$ is therefore eliminated by the Sanity Check Rule. We consider two subcases:
	    \begin{itemize}
	        \item At least one feasible pair exists in $N_{U^*}(u)$, therefore, $\{v_1,v_2\}$ is a feasible pair.  Then $T(l)\le T(l-1) + 6T(l-5) + 6T(l-6) + 5T(l-7)$. 
	        \item No feasible pair exists. In this case the Sanity Check Rule eliminates all sets $S\subseteq N_U(u)$ that contain two or more vertices.  We have $T(l)\le T(l-1) + 7T(l-5)$.
	    \end{itemize} 
    \end{itemize}
    \item
    \begin{itemize}
        \item $T(l) \leq T(l-1) + 6T(l-5) + 6T(l-6) + 4T(l-7) +T(l-8)$ gives $x^8 = x^7 + 6x^3 + 6x^2 + 4x + 1$ with root $r < 1.8665$. 
        \item $T(l) \leq T(l-1) + 6T(l-5) + 6T(l-6) + 5T(l-7)$ gives $x^7 = x^6 + 6x^2 + 6x + 5$ with root $r < 1.8700$. 
        \item $T(l) \leq T(l-1) + 7T(l-5)$ gives $x^5 = x^4 + 7$ with root $r < 1.7487$. 
    \end{itemize}
\end{itemize}

\smallskip

We are now at a situation where all vertices $u \in U^*$ have $d_U(u) \geq 5$. 

\smallskip

\noindent \textbf{Branching Rule (B5):} If there exists $u \in U^*$ with
$d_U(u) \in \{ 5, 6, 7, 8 \}$, then select such a $u$ with minimum $d_U(u)$ and
let $i = d_U(u)$ and $N_U(u) = \{ v_1, \ldots, v_i \}$. Again, without loss of
generality we order the vertices of $N_{U^*}(u)$ first, that is, for $1\le j\le
d_{U^*}(u)$, we have $v_j\in U^*$.  Branch on the following: $(D_f, P_f \cup \{
u \})$; for each non-empty subset $S \subseteq N_U(u)$, let $S^c = N_U(u)
\setminus S$, branch on $(D_f \cup S, P_f \cup S^c)$; if $d_{U^*}(u)\ge 2$
branch on $(D_f \cup \{ u \}, P_f \cup ((N_U(v_{1}) \cup N_U(v_2)) \setminus \{
u \}))$; for $3\le j \le d_{U^*}(u)$, branch on $(D_f \cup \{ u \}, P_f \cup
(N_U(v_j) \setminus \{ u \}))$.

\begin{itemize}
    \item We generalize the previous branching to higher degrees in the obvious way: if $u\in D$, either the two first of its $d_{U^*}(u)$ neighbors in $U^*$ are its private neighbors, or one of its remaining $d_{U^*}(u)-2$ neighbors in $U^*$ is private.
    \item Let $i = d_U(u)$. Note that we then assume that $d_U(v_j)\ge i$ for all $j\in\{1,\ldots,d_{U^*}(u)\}$, since we selected $u$ with the minimum $d_U(u)$.  Hence the branches where $u\in D_f$ diminish $l$ by at least $i+1$. We then have $T(l) \le T(l-1) + (i-1)T(l-i-1) + \sum_{j=1}^{i} \binom{i}{j} T(l-i-j)$, which corresponds to the case where $d_{U^*}(u)=d_U(u)$.
    \item $T(l) \leq T(l-1) + (i-1)T(l-(i+1)) + \sum_{j = 1}^{i} \binom{i}{j} \cdot T(l-(i+j))$ gives $x^{2i} = x^{2i-1} + (i-1)x^{i-1} + \sum_{j = 1}^{i} \binom{i}{j} \cdot x^{i-j}$. 
    \begin{itemize}
        \item If $i = 5$: $x^{10} = x^9 + 9x^4 + 10x^3 + 10x^2 + 5x + 1$ with root $r < 1.8473$.
        \item If $i = 6$: $x^{12} = x^{11} + 11x^5 + 15x^4 + 20x^3 + 15x^2 + 6x + 1$ with root $r < 1.8104$.
        \item If $i = 7$: $x^{14} = x^{13} + 13x^6 + 21x^5 + 35x^4 + 35x^3 + 21x^2 + 7x + 1$ with root $r < 1.7816$.
        \item If $i = 8$: $x^{16} = x^{15} + 15x^7 + 28x^6 + 56x^5 + 70x^4 + 56x^3 + 28x^2 + 8x + 1$ with root $r < 1.7593$. 
    \end{itemize}
\end{itemize}

\smallskip

We are now at a situation where all vertices $u \in U^*$ have $d_U(u) \geq 9$. 

\smallskip

\noindent \textbf{Branching Rule (B6):} If there exists $u \in U^*$ with $d_U(u) \geq 9$, then let $T = \{ v_1, \ldots, v_9 \} \subseteq N_U(u)$. Branch on the feasible subinstances among the following: $(D_f \cup \{ u \}, P_f)$; $(D_f, P_f \cup \{ u \})$; for each (possibly empty) subset $S \subseteq T$, let $S^c = T \setminus S$, branch on $(D_f \cup S, P_f \cup S^c)$. 

\begin{itemize}
    \item We consider all possibilities (including the $T\subseteq P$), so one produced instance must be good.
    \item $T(l) \leq T(l-1) + T(l-2) + \sum_{j = 0}^9 \binom{9}{j} \cdot T(l-(9+j))$.  
    \item $T(l) \leq T(l-1) + T(l-2) + \sum_{j = 0}^9 \binom{9}{j} \cdot T(l-(9+j))$ gives $x^{18} = x^{17} + x^{16} + x^9 + 9x^8 + 36x^7 + 84x^6 + 126x^5 + 126x^4 + 84x^3 + 36x^2 + 9x + 1$ with root $r < 1.8640$. 
\end{itemize}

\smallskip

\noindent \textbf{Step 2:} Branch to eliminate $U$.

If none of the Rules above apply, we enter the second branching step of our
algorithm, which only involves one Rule that will be applied exhaustively. We observe that we now have $U^*=\emptyset$:

\begin{lemma}\label{lem:correct2}
If none of the Rules from R1 to B6 can be applied, then $U^* = \emptyset$.
\end{lemma}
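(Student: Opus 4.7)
The plan is to prove the contrapositive: assume $U^* \neq \emptyset$, pick any $u \in U^*$, and show that at least one Rule among R1 through B6 must be applicable. The key point is that these Rules are designed so that their preconditions collectively exhaust every possible value of $d_U(u)$ (together with auxiliary conditions on $d_{U^*}(u)$, feasibility of pairs, and compatibility), and in particular the ``catch-all'' Rules B2.2, B3.3, B4.2 have only a single degree condition.

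Concretely, I would argue by a case analysis on the non-negative integer $d_U(u)$. If $d_U(u) = 0$, then R1 applies; if $d_U(u) = 1$, then B1 applies; if $d_U(u) = 2$, then either $d_{U^*}(u) \in \{0,1\}$ and B2.1 fires, or else B2.2 fires, since its only precondition is $d_U(u) = 2$. Similarly, for $d_U(u) = 3$, even if neither B3.1 nor B3.2 applies, the rule B3.3 still fires because its precondition is simply $d_U(u) = 3$; and for $d_U(u) = 4$, rule B4.2 plays the same ``residual'' role with respect to B4.1. Finally, B5 handles $d_U(u) \in \{5,6,7,8\}$ and B6 handles $d_U(u) \geq 9$, so every possible degree is covered.

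Hence for every $u \in U^*$ at least one Rule is triggered, which contradicts the assumption that no Rule is applicable. By contrapositive, $U^* = \emptyset$, which is exactly the claim. The only potentially tricky point is to remember that the Rules are applied in order, so that the catch-all variants B2.2, B3.3, B4.2 should really be read with all preceding preconditions negated; once this is noted, the argument reduces to an enumeration that poses no real obstacle.
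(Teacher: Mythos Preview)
Your proposal is correct and follows essentially the same approach as the paper: a case analysis on $d_U(u)$ for $u\in U^*$, noting that R1, B1, B2.2, B3.3, B4.2, B5, B6 respectively cover the cases $d_U(u)=0,1,2,3,4,\{5,\dots,8\},\ge 9$. The paper's proof is just a one-line enumeration of exactly this coverage, so your version is a slightly more detailed rendition of the same argument.
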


\begin{proof}
Suppose that none of the Rules up to B6 applies, then $U^*=\emptyset$:
indeed all vertices $u\in U^*$ are handled according to whether $d_U(u)$ is $0$
(R1), $1$ (B1), $2$ (B2.2), $3$ (B3.3), $4$ (B4.2), $5,6,7$, or $8$ (B5), or
higher (B6).
\end{proof}

% \begin{lemma}\label{lem:correct2} Suppose $|D|+\frac{|P|}{2}\le k$. If we apply
% the first Rule that can be applied on an instance characterized by a good pair
% $(D_f,P_f)$, then we produce at least one instance characterized by a good
% pair. Moreover, if none of the Rules from R1 to B6 can be applied, then $U^* =
% \emptyset$.  \end{lemma}

The fact that $U^* = \emptyset$ means
that all remaining undecided vertices belong in $P\cup I$, because they cannot
have two private neighbors. We use this observation to branch until we
eliminate all vertices of $G[U]$ with degree at least $2$.

\smallskip

\noindent \textbf{Branching Rule (B7):} If there exists $u \in U$ with $d_U(u)
\geq 2$, then branch on the following two subinstances: $(D_f, P_f \cup \{ u
\})$ and $(D_f, P_f \cup N_U(u))$. 

\begin{itemize}
    \item If $U^*=\emptyset$ then $U\subseteq P\cup I$, because no vertex of $U$ can have two private neighbors. Hence, if $u\in I$ then $N_U(u)\subseteq P$ and the Rule B7 is correct. 
    \item $T(l) \leq T(l-1) + T(l-2)$, because $|N_U(u)| \geq 2$. 
    \item $T(l) \leq T(l-1) + T(l-2)$ gives $x^2 = x + 1$ with root $r < 1.6181$.
\end{itemize}

\smallskip

\noindent \textbf{Step 3:} Complete the solution.

We are now in a situation where $U^*=\emptyset$ and $G[U]$ has maximum degree
$1$. We recall that Theorem 2 of \cite{XiaoS19} showed that this problem can
now be solved optimally in polynomial time by collapsing all edges of $G[U]$
and then performing a minimum edge cover computation. We recall the relevant
result, translated to our terminology:

\begin{lemma}[Theorem 2 of \cite{XiaoS19}]\label{lem:xiao} For an instance
characterized by $(D_f,P_f)$ which has $U^*=\emptyset$ and $G[U]$ has maximum
degree $1$, we can compute in polynomial time a minimum mixed dominating set
$D\cup M$ satisfying $D_f\subseteq D$, $P_f\subseteq V(M)$. \end{lemma}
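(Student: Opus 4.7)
The plan is to reduce the lemma to a minimum edge cover computation, exploiting the structural hypotheses $U^* = \emptyset$ and $\Delta(G[U]) \le 1$. First, I would argue that in any optimal nice solution $D \cup M$ satisfying $D_f \subseteq D$ and $P_f \subseteq V(M)$, we may assume $D = D_f$. Indeed, since $U^* = \emptyset$, every vertex $u \in U$ already has a neighbor in $D_f$; hence, if we added any $u \in U$ to $D$, each neighbor $v$ of $u$ lying in $I$ would also be dominated by some vertex of $D_f$ and thus fail to be a private neighbor of $u$. This would leave $u$ with no private neighbors, contradicting niceness. So the only remaining task is to determine $M$.

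Next, I would characterize the constraints on $M$. By niceness we have $V(M) \cap D_f = \emptyset$, and by assumption $V(M) \supseteq P_f$. Since $U \subseteq N(D_f)$, every vertex of $V \setminus (D_f \cup V(M))$ already has a neighbor in $D_f$, so vertex domination is automatic. For edge domination, every edge with an endpoint in $D_f \cup V(M)$ is dominated, and because $P_f \subseteq V(M)$, the only potentially uncovered edges are those inside $G[U \setminus V(M)]$. Since $\Delta(G[U]) \le 1$, the edges of $G[U]$ form a matching $M_U$, so the problem reduces to choosing a minimum-cardinality set of edges $M$ with endpoints in $P_f \cup U$, avoiding $D_f$, such that $V(M)$ contains $P_f$ and intersects every edge of $M_U$.

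Finally, I would construct an auxiliary graph $G'$ by contracting each edge of $M_U$ into a single super-vertex (well-defined because $M_U$ is a matching) and then compute a minimum edge cover of the subgraph of $G'$ induced by $(P_f \cup U) \setminus D_f$, where the set of mandatory vertices to be covered consists of $P_f$ together with every super-vertex. Any such edge cover lifts to a valid $M$ in $G$ of equal cardinality by selecting, for each chosen super-edge, a representative original edge of $G$ realizing it; conversely, any valid $M$ projects to an edge cover of the same size. Since \textsc{Minimum Edge Cover} is polynomial-time solvable, this yields the claimed algorithm. The main technical subtlety, and the step that requires the most care, is the lifting between $G'$ and $G$: a super-edge incident to one or two contracted super-vertices may come from several original edges, and we must pick representatives consistently so that the resulting $M$ indeed hits every matched pair. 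However, a straightforward case analysis on how the endpoints of each super-edge correspond (original vertex versus matched pair) shows that this can always be done while preserving both feasibility and cardinality.
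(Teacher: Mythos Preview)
The paper does not give its own proof of this lemma; it simply invokes Theorem~2 of Xiao and Sheng, remarking only that the algorithm works ``by collapsing all edges of $G[U]$ and then performing a minimum edge cover computation.'' Your proposal carries out precisely this plan---argue that $D=D_f$, contract the matching $M_U$ in $G[U]$, and reduce the remaining task to an edge-cover computation---so it coincides with the cited approach.
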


We are now ready to put everything together to obtain the promised algorithm. 

\begin{theorem}\label{theoremfpt-app} \textsc{Mixed Dominating Set} parameterized by the size of the solution $k$ can be
solved in time $O^*(3.510^k)$.  \end{theorem}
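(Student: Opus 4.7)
The plan is to combine the correctness and running-time analyses we have developed step by step for each Rule. For correctness, I would argue by induction on the recursion tree that at least one leaf of the search tree corresponds to a \emph{good} tuple $(D_f,P_f)$ with $D_f\subseteq D$ and $P_f\subseteq P$, where $D\cup P\cup I$ is the fixed optimal nice mds partition whose existence is guaranteed by \Cref{lemmanice}. The base case $(D_f,P_f)=(\emptyset,\emptyset)$ is trivially good, and in the per-rule justifications already given for each Reduction and Branching Rule we verified that, starting from a good instance, at least one produced subinstance remains good; the Sanity Check Rule only rejects instances that are provably not good (either because $|D_f|+|P_f|/2>k$, or because a vertex of $D_f$ could no longer accumulate two private neighbors, contradicting niceness).

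Next I would argue that when no Rule of Step 1 applies, \Cref{lem:correct2} gives $U^*=\emptyset$, and after exhaustive application of Rule B7 in Step 2 the subgraph $G[U]$ has maximum degree at most $1$. At that point the fixed optimal solution satisfies $D_f\subseteq D$ and $P_f\subseteq V(M)$, so we may invoke \Cref{lem:xiao} to complete the partial solution optimally in polynomial time. Outputting the best solution found over all branches therefore yields a minimum mixed dominating set, establishing correctness.

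For the running time, I would use the measure $l=2k-2|D_f|-|P_f|$, which satisfies $0\le l\le 2k$ throughout the recursion thanks to the Sanity Check Rule. Each Branching Rule was analyzed individually and its branching vector yields a characteristic polynomial whose largest real root is bounded explicitly; the worst of these is the first case of \Cref{lem:nou}-free branching, namely Rule B3.2, whose recurrence $T(l)\le T(l-1)+3T(l-4)+3T(l-5)+4T(l-6)$ has root strictly less than $1.8734$. All other Rules admit strictly smaller roots, so $T(l)\le 1.8734^{l}$. Since initially $l=2k$, the total number of produced leaves is at most $1.8734^{2k}<3.510^{k}$.

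Finally, I would observe that deciding which Rule applies and applying it, as well as running the polynomial-time completion of \Cref{lem:xiao} at each leaf, are all polynomial in the size of $G$, so the total running time is $O^*(3.510^k)$. The main (and only real) obstacle here is bookkeeping: verifying that the worst branching vector across all Rules is indeed the one from Rule B3.2 and that the numerical bound $1.8734^2<3.510$ is tight enough — but since each individual recurrence was already solved in the Rule-by-Rule analysis, this reduces to taking the maximum over a finite list of computed roots and squaring it.
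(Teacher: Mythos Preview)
Your proposal is correct and follows essentially the same approach as the paper's proof: you establish correctness by maintaining the ``good'' invariant through the recursion (using \Cref{lemmanice}, the per-Rule justifications, \Cref{lem:correct2}, and \Cref{lem:xiao}), and you bound the running time by identifying Rule~B3.2 as the bottleneck with root $r<1.8734$, giving $O^*(1.8734^{2k})=O^*(3.510^k)$. One small presentational slip: your reference to ``\Cref{lem:nou}-free branching'' is out of place here, since \Cref{lem:nou} belongs to the exact algorithm; in this section the relevant statement is \Cref{lem:correct2}.
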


\begin{proof} The algorithm applies the Rules exhaustively and when no Rule applies invokes
\Cref{lem:xiao}. Fix an optimal nice mds partition $V=D\cup P\cup I$. We assume
that no isolated vertices exist, so such a nice partition exists by
\Cref{lemmanice}. 

For correctness, we need to argue that if the cost $|D|+\frac{|P|}{2}$ is at
most $k$, the algorithm will indeed output a solution of cost at most $k$.
Observe that the initial instance is good, and we always produce a correct instance when we branch since all our Rules are correct, 
% by \Cref{lem:correct2} we always
% produce a correct instance when we branch, 
and by \Cref{lem:xiao} the solution
is optimally completed when no Rule applies, 
so if the optimal partition has
cost at most $k$, the algorithm will produce a valid solution of cost at most
$k$.

Let us now analyse the running time. We observe first that we can decide if a Rule applies in polynomial time, and the algorithm of \Cref{lem:xiao} runs in polynomial time. We therefore only need to bound the number of subinstances the branching step will produce, as a function of $k$.

Of all the branching vectors, the worst case if given by Branching Rule B3.2, which leads to a complexity of $1.8734^l$. Taking into account that $l \leq 2k$, the running time of our algorithm is $O^*(1.8734^l) = O^*(3.510^k)$. 
\end{proof}

\section{Conclusion}

In this paper, we study the \textsc{Mixed Dominating Set} problem from the exact and parameterized viewpoint. We prove first that the problem can be solved in time $O^*(5^{tw})$, and we prove that this algorithm and the one for pathwidth running in time $O^*(5^{pw})$ (\cite{JainJPS17}) are optimal, up to polynomial factors, under the SETH. Furthermore, we improve the best exact algorithm and the best FPT algorithm parameterized by the solution size $k$, from $O^*(2^n)$ and exponential space to $O^*(1.912^n)$ and polynomial space, and from $O^*(4.172^k)$ to $O^*(3.510^k)$, respectively. 

Concerning FPT algorithms for some parameters, an interesting study would be to obtain such tight results for other parameters such as bandwidth, now that the questions for pathwidth, treewidth and clique-width are closed. Moreover, it seems hard to improve our exact and FPT algorithms parameterized by $k$ using the standard technique of running time analysis we have used, but using the measure-and-conquer method could give improved bounds.

\nocite{*}
\bibliographystyle{abbrvnat}
% use the following instead if you encounter problems 
%\bibliographystyle{alpha}
\bibliography{sample-dmtcs-episciences}
\label{sec:biblio}

\end{document}